\title{Fine-Grained Complexity of Multiple Domination and Dominating Patterns in Sparse Graphs}
\author{Marvin Künnemann}
{Karlsruhe Institute of Technology, Germany}
{}
{}
{}
\author{Mirza Redzic}
{Karlsruhe Institute of Technology, Germany}
{}
{}
{}
\authorrunning{M. Künnemann, M. Redzic} 
\keywords{Fine-grained complexity theory, Dominating Set, Domination in Graphs, Graph Theory, Algorithmic Classification Theorem} 
\newcommand{\bigO}{\mathcal{O}}
\DeclareMathOperator\MM{MM}
\newcommand{\rem}[3]{\textcolor{blue}{\textsc{#1 #2:}}
  \textcolor{red}{\textsf{#3}}}
\newcommand{\mirza}[2][says]{\rem{Mirza}{#1}{#2}}
\newcommand{\tOh}{\tilde{\mathcal{O}}}
\newcommand{\FOP}[1]{$\mathrm{FOP}_k$}
\renewcommand\paragraph{%
  \@startsection{paragraph}
    {4}
    {\z@}
    {3.25ex \@plus1ex \@minus.2ex}
    {-1em}
    {\normalfont\normalsize\bfseries\addperiod}}
\newcommand{\addperiod}[1]{#1\@addpunct{.}}
\begin{document}

\maketitle

\begin{abstract}
The study of domination in graphs has led to a variety of \emph{dominating set} problems studied in the literature. Most of these follow the following general framework: Given a graph $G$ and an integer~$k$, decide if there is a set $S$ of $k$ vertices such that (1) some inner connectivity property $\phi(S)$ (e.g., connectedness) is satisfied, and (2) each vertex $v$ satisfies some domination property $\rho(S, v)$ (e.g., there is some $s\in S$ that is adjacent to $v$).

Since many real-world graphs are \emph{sparse}, we seek to determine the optimal running time of such problems in both the number $n$ of vertices and the number $m$ of edges in $G$. While the classic dominating set problem admits a rather  limited improvement in sparse graphs (Fischer, Künnemann, Redzic SODA'24), we show that natural variants studied in the literature admit much larger speed-ups, with a diverse set of possible running times. Specifically, if the matrix exponent $\omega$ equals $2$, we obtain conditionally optimal algorithms for:   
\begin{itemize}
 \item \emph{$r$-Multiple $k$-Dominating Set} (each vertex $v$ must be adjacent to at least $r$ vertices in $S$): If $r\le k-2$, we obtain a running time of $(m/n)^{r} n^{k-r+o(1)}$ that is conditionally optimal assuming the 3-uniform hyperclique hypothesis. In sparse graphs, this fully interpolates between $n^{k-1\pm o(1)}$ and $n^{2\pm o(1)}$, depending on $r$. Curiously, when $r=k-1$, we obtain a randomized algorithm beating $(m/n)^{k-1} n^{1+o(1)}$ and we show that this algorithm is close to optimal under the $k$-clique hypothesis.
 \item \emph{$H$-Dominating Set} ($S$ must induce a pattern $H$). We conditionally settle the complexity of three such problems: (a) Dominating Clique ($H$ is a $k$-clique), (b) Maximal Independent Set of size~$k$ ($H$ is an independent set on $k$ vertices), (c) Dominating Induced Matching ($H$ is a perfect matching on $k$ vertices). For all sufficiently large $k$, we provide algorithms with running time $(m/n)m^{(k-1)/2+o(1)}$ for (a) and (b), and $m^{k/2+o(1)}$ for (c). We show that these algorithms are essentially optimal under the $k$-Orthogonal Vectors Hypothesis ($k$-OVH). This is in contrast to $H$ being the $k$-Star, which is susceptible only to a very limited improvement, with the best algorithm running in time $n^{k-1 \pm o(1)}$ in sparse graphs under $k$-OVH. 
 \end{itemize}
\end{abstract}
\newpage

\section{Introduction}
\label{sec:introduction}
Domination in graphs is among the central topics in graph theory.
Although the earliest evidence of interest in concepts related to domination can be traced back to the mid 1800s in connection with various chess problems, it was introduced only a century later, in 1958, as a graph-theoretical concept by Claude Berge.
It has since gained a lot of attention and has been well-studied from both a graph-theoretic perspective, e.g.,~\cite{AlvaradoDMR19,BianchiNTT21,CutlerR16,SuzukiMN16,Taletskii22}, and an algorithmic perspective, e.g.,~\cite{eisenbrand2004complexity,PatrascuW10,FominLST12,FominT03,SLM18,SolomonU23}.
This problem has also played a central role in the field of complexity theory. 
Besides being one of the classic NP-complete problems, the $k$-Dominating Set problem has proven valuable within the realm of parameterized complexity theory, where it is regarded as perhaps the most natural $W[2]$-complete problem~\cite{DowneyF95}, as well as fine-grained complexity in P, where it was among first problems for which tight lower bounds under the Strong Exponential Time Hypothesis (SETH) have been established~\cite{PatrascuW10}.

Over the years, the concept of domination in graphs has spawned many natural variations, each offering unique insights into the structural properties of a graph, as well as different practical applications (e.g. in analysing sensor networks, facility management, studying influence in social networks, etc.). 
Some examples of such variations include total domination, paired domination, independent domination, multiple domination, etc.
Most of these domination problems satisfy the following framework:
We are given a graph $G$ and an integer $k$ and want to decide if there exists a set of vertices $S=\{x_1,\dots, x_k\}$ that satisfies some fixed inner property $\phi(x_1,\dots, x_k)$ such that for every vertex $v\in V(G)$ the \emph{domination} property $\rho(x_1,\dots, x_k, v)$ is satisfied.
Some examples of inner properties $\phi$ include:
\begin{itemize}
    \item $x_1,\dots, x_k$ are connected (Connected Domination).
    \item $x_1, \dots, x_k$ form an independent set (Independent Domination).
    \item Each $x_i\in S$ is adjacent to at least one $x_j\in S\setminus\{x_i\}$ (Total Domination).
\end{itemize}
Examples of the domination property $\rho$ include:
\begin{itemize}
    \item There exists $x_i\in S$ such that $d(x_i,v)\leq r$ (Domination at Distance $r$).
    \item $v$ is adjacent to at least $r$ distinct vertices $x_{i_1},\dots, x_{i_r}\in S$ ($r$-Multiple Domination).
    \item There exists a path of length $r$ between $v$ and some $x_i\in S$ ($r$-Step Domination).
\end{itemize}
Many of these domination problems have not seen any polynomial improvements over brute force in dense graphs, i.e., the best known algorithms for finding a solution of size~$k$ typically run in $\Omega(n^{k})$ time and for some variants it has been shown that improving upon these algorithms significantly would refute some of the popular fine-grained complexity assumptions.
Most notably, Pătraşcu and Williams~\cite{PatrascuW10} show that an $\bigO(n^{k-\epsilon})$ algorithm solving $k$-Dominating Set, for any~\makebox{$k\ge 3$} and $\epsilon > 0$ would refute the Strong Exponential Time Hypothesis (SETH).
However, by far not all graphs of interest are dense. 
Particularly, many real-world graphs, for which the domination problems have been extensively used, are typically sparse (e.g. social networks, sensor networks, road networks, etc.).
Hence, it is natural to ask what is the best running time of domination problems in sparse graphs.
Recently, Fischer, Künnemann and Redzic~\cite{fischer2024effect} proved that the fine-grained complexity of $k$-Dominating set shows a non-trivial sensitivity to sparsity of the input graph. 
More precisely, despite the SETH-based lower bound of $n^{k-o(1)}$, they prove that when the input graph is sufficiently sparse, we can in fact improve upon this running time significantly by using sparse matrix multiplication techniques, and obtain a conditionally optimal running time of $mn^{k-2+o(1)}$ for all $k\ge 2$ (assuming $\omega=2$).
This raises the question if we can obtain similar improvements in sparse graphs for other natural domination problems.
In this paper we consider two natural classes of domination problems that exhibit an interesting sensitivity to sparsity, namely \emph{$r$-Multiple Domination} and \emph{Pattern Domination}.
\subparagraph{Multiple Domination in Graphs.} The concept of Multiple Domination has been introduced as a generalization of Dominating Set by Fink and Jacobson in 1985~\cite{fink1985n,finkj1985n} and has been intensively studied since (see e.g.~\cite{Araki08,Araki09,argiroffo2018complexity,argiroffo2015complexity,GagarinZ08,HenningK10,LiWS23}). 
For a graph $G=(V,E)$ we say a subset of vertices $S$ is an \emph{$r$-multiple dominating set} if each vertex $v\in V\setminus S$ has at least $r$ neighbours in $S$.\footnote{We remark that in the literature, this concept is better known under the name \emph{$r$-Dominating Set}. In the setting of parameterized complexity, however, the notion of $k$-Dominating Set usually refers to dominating sets of size $k$, so for clarity, we use the term \emph{$r$-Multiple Dominating Set}.}
Given a graph $G$ with $n$ vertices and $m$ edges, the \emph{$r$-Multiple $k$-Dominating Set} problem is to decide if there is an $r$-multiple dominating set $S$ of size at most $k$.
Harary and Haynes~\cite{HARARY199699,HararyH00} introduced, in two papers published in 1996 and 2000, a very related concept of double domination and, more generally, the \emph{$r$-Tuple Dominating Set}, which is a subset of vertices $S$, such that the closed neighborhood of every vertex $v\in V$ intersects with $S$ in at least $r$ elements.
We note that all of the algorithms and lower bounds that we provide for $r$-Multiple $k$-Dominating Set work with very minor modifications for $r$-Tuple Dominating Set as well.

We aim to settle the fine grained complexity of this problem in sparse graphs.
Interestingly, the hardness of this problem depends not only on the trade-off between $m$ and $n$, but also on the trade-off between $r$ and $k$.
In particular, we distinguish between the two cases, $r\leq k-2$ and $r=k-1$ (note that if $r\geq k$, the problem becomes trivial) -- it turns out that the fine-grained complexity differs already in the dense case for these two cases.

Let us begin with a baseline algorithm for the case $r\leq k-2$: 
\begin{theorem}\label{th:rleqk-2:algorithm-simple}
Let $k\geq 3$ and $r\leq k-2$ be fixed constants. Given a graph $G$ with $n$ vertices and $m$ edges we can solve $r$-Multiple $k$-Dominating Set in time $(m/n)^r n^{k-r+o(1)}$, assuming $\omega = 2$.
\end{theorem}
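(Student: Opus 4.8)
The plan is to split the solution $S$ into a \emph{free} part $F$ of size $k-r$, which is enumerated (or rather indexed) exhaustively and contributes a factor $n^{k-r}$, and a \emph{rescue} part $A$ of size $r$, which is determined by a single global (sparse) matrix multiplication contributing the factor $(m/n)^r$. Write $S = F \cup A$ with $|F| = k-r$ and $|A| = r$. For a fixed $F$, call a vertex $v \notin S$ \emph{deficient} if $|N(v)\cap F| < r$, and let $t(v) = \max(0,\, r - |N(v)\cap F|) \in \{1,\dots,r\}$ be its residual demand. Then $S$ is a valid $r$-multiple dominating set if and only if every deficient vertex receives at least $t(v)$ of its neighbours into $A$, i.e. $|N(v)\cap A| \ge t(v)$; non-deficient vertices are automatically satisfied by $F$ alone. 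Thus the task reduces to: over all $F$, detect whether some size-$r$ set $A$ rescues every deficient vertex.

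Solving this per-$F$ is hopeless, since the deficient set may contain $\Theta(n)$ vertices while the per-$F$ budget $(m/n)^r$ can be sublinear; the computation must therefore be \emph{batched} across all $F$ by a matrix product. To expose a bilinear structure I would first linearise the threshold condition: because $r \le k-2$ is a fixed constant, ``$S$ is valid'' is equivalent to ``the number of under-dominated vertices outside $S$ is zero'', and conditioning on the exact values $|N(v)\cap A| = a$ and $|N(v)\cap F| = b$ with $a+b < r$ splits this count into a sum of constantly many terms of the form $\sum_{v} [\,|N(v)\cap A| = a\,]\,[\,|N(v)\cap F| = b\,]$. Each such term is exactly the $(A,F)$-entry of the product of an $A$-by-$v$ matrix with a $v$-by-$F$ matrix, with the dominated vertex $v$ as the contracted inner index. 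Summing the constantly many products and searching for an entry certifying zero under-domination then decides the instance.

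The crux is to realise these products within the budget $(m/n)^r n^{k-r+o(1)}$ rather than the trivial $n^{k}$. Two features should make this possible. First, only vertices adjacent to $A$ can be rescued, so whenever $a \ge 1$ the matrix $[\,|N(v)\cap A| = a\,]$ is supported on pairs with $v \in N(A)$ and is hence \emph{sparse}, its number of nonzeros being controlled by neighbourhood sizes and thus by the average degree $m/n$. Second, under the assumption $\omega = 2$ the cost of rectangular (sparse) matrix multiplication is governed by the product of the two largest dimensions together with the number of nonzeros, so if one can restrict the rescue side to the $\bigO((m/n)^r)$ genuinely relevant configurations — those anchored in the neighbourhoods of deficient vertices — then one matrix dimension stays at $(m/n)^r$ while the free part supplies the other dimension $n^{k-r}$. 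The step requiring the most care, and the main obstacle, is precisely this simultaneous control of sparsity across the $r$-fold rescue structure: one must ensure that the relevant rescue configurations and the contracted vertex set are both bounded in terms of $m/n$, so that the $r$ rescue picks yield a factor $(m/n)^r$ (average degree) rather than $\Delta^r$ (maximum degree) or a blow-up to $n^r$, and that the product — whose output dimension $n^{k-r}$ may itself be large — is only ever searched for a single certifying entry rather than materialised in full. For $r = 1$ this should recover the $m\,n^{k-2+o(1)}$ algorithm of Fischer, Künnemann and Redzic.
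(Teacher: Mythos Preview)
Your high-level decomposition --- split $S$ into two parts, express ``number of under-dominated vertices'' as a bilinear form in the two parts with the dominated vertex as the contracted index, and batch via one matrix product --- is exactly what the paper does; your expansion into constantly many indicator products $\sum_v [\,|N(v)\cap A|=a\,][\,|N(v)\cap F|=b\,]$ is equivalent to the paper's cleaner device of multiplying over $\mathbb{Z}_{\le k}[X]$ with entries $X^{|N[v]\cap \text{(part)}|}$. The gap is entirely in how you obtain the factor $(m/n)^r$ on the rescue side.

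Your sparsity argument does not close. For the term $a=0$ (vertices with no neighbour in $A$ and fewer than $r$ in $F$) there is no sparsity at all, and this term cannot be dropped. For $a\ge 1$, sparsity bounds only the number of nonzeros in the $A$-by-$v$ factor, not the number of rows: the rescue set $A$ still ranges over all of $\binom{V}{r}$, so one factor has $\Theta(n^r)$ rows, the other has $n^{k-r}$ columns, and the \emph{output} already has $n^k$ entries. Your hope of ``searching for a single certifying entry rather than materialising in full'' is not an algorithm --- finding a zero entry of a matrix product is not known to be easier than computing the product. And even the nonzero count is governed by $\sum_v \binom{\deg(v)}{r}$, the $r$-th moment of the degree sequence, which in irregular graphs can be $\Theta(n^r)$ rather than $n\cdot (m/n)^r$; this is precisely the $\Delta^r$-versus-$(m/n)^r$ obstacle you flag but do not resolve.

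The idea you are missing is a one-line structural lemma. Since the closed neighbourhoods of the $k$ solution vertices must $r$-cover $V$, we have $\sum_{i=1}^k |N[v_i]| \ge rn$, so at least $r$ of the $v_i$ satisfy $|N[v_i]| \ge n/k$. Call such vertices \emph{heavy}; a degree-sum argument shows there are at most $O(km/n)=O(m/n)$ heavy vertices in $G$. Hence the rescue part may be taken, a priori, to be an $r$-subset of a set of size $O(m/n)$ --- this is a hard restriction on the index set, not a sparsity bound on matrix entries, and no sparsity is needed or used thereafter. Distributing the $r$ heavy picks and the $k-r$ free picks evenly between the two factors yields a product of shape $(m/n)^{\lfloor r/2\rfloor} n^{\lceil(k-r)/2\rceil} \times n$ by $n \times (m/n)^{\lceil r/2\rceil} n^{\lfloor(k-r)/2\rfloor}$, which under $\omega=2$ is $(m/n)^r n^{k-r+o(1)}$. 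The same heavy-vertex lemma with $r=1$ is also what drives the $mn^{k-2+o(1)}$ algorithm you cite for ordinary $k$-Dominating Set.
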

We note the remarkable improvement by a factor of $\Theta(\frac{n^{2r}}{m^r})$ over the best known algorithm in dense graphs.
Already for Double $k$-Dominating Set (i.e., $r=2$), this yields an algorithm running in $m^2n^{k-4 + o(1)}$, which beats the running time of the best $k$-Dominating Set algorithm \cite{fischer2024effect} by a factor of $\Theta(n^2/m)$.
Even better, if we want each vertex in our graph to be dominated by precisely $50\%$ of the solution vertices, we get an algorithm running in $m^{\frac{k}{2}+o(1)}$, halving the exponent in sparse graphs.
Perhaps surprisingly, for $r$ close to $k$, this yields an algorithm whose running time exponent is independent of $k$ when the input graph is very sparse.
In particular, for $r=k-2$, this running time becomes essentially quadratic in very sparse graphs ($m=\Tilde\bigO(n)$).
The question remains whether this running time is best possible -- perhaps we can always obtain $m^{\frac{k}{2} + o(1)}$ (or even better) running time when $r\geq 2$?
We answer this question negatively, and in fact show that any polynomial improvement over our algorithm (up to resolving the matrix multiplication exponent $\omega$) would refute the $3$-uniform Hyperclique Hypothesis\footnote{For a definition of the $3$-uniform Hyperclique Hypothesis, we refer to Section \ref{app-hardness}.}, thus settling the complexity of this problem in sparse graphs whenever $r\leq k-2$.
\begin{restatable}{theorem}{TheoremrSmallLB}\label{th:rleqk-2:lb}
    Let $k\geq 3$, $r\leq k-2$ be fixed constants, and $\varepsilon>0$.
    An algorithm solving $r$-Multiple $k$-Dominating Set in time $O((m/n)^{r}n^{k-r-\varepsilon})$ would refute $3$-uniform Hyperclique Hypothesis. This holds even when restricting $m=\Theta(n^{1+\gamma})$ for any $0< \gamma \le 1$.
\end{restatable}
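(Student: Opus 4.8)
The plan is to reduce the detection of an $\ell$-hyperclique in a $3$-uniform hypergraph $H$ on $N$ vertices to $r$-Multiple $k$-Dominating Set, tuning the instance so that a hypothetical $O((m/n)^{r}n^{k-r-\varepsilon})$ algorithm beats the $N^{\ell-o(1)}$ barrier. To hit an \emph{arbitrary} target density $m=\Theta(n^{1+\gamma})$, I would use a grouping trick: fix a large integer $a$ and set $b=\lfloor \gamma a\rfloor$, then partition the $\ell$ hyperclique positions into $k$ groups — $k-r$ \emph{heavy} groups of $a$ positions each and $r$ \emph{light} groups of $b$ positions each — so that $\ell=(k-r)a+rb$ (the regime $r\le k-2$ guarantees $k-r\ge 2$ heavy groups, which together with large $a$ ensures $\ell\ge 4$, as the hypothesis requires). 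A heavy group is represented by at most $N^{a}$ selector vertices (one per $a$-tuple of $H$-vertices whose internal triples are all hyperedges) and a light group by at most $N^{b}$ selectors, so that $n=\Theta(N^{a})$, and I arrange the edges so that $m=\Theta(N^{a+b})\le\Theta(n^{1+\gamma})$ and then pad up to $\Theta(n^{1+\gamma})$. Granting the algorithm, I then obtain a hyperclique detector running in time
\[
  \Big(\tfrac{m}{n}\Big)^{r} n^{\,k-r-\varepsilon} \;=\; n^{\,\gamma r + k - r - \varepsilon} \;=\; N^{\,a(\gamma r + k - r - \varepsilon)} \;=\; N^{\,\ell - a\varepsilon},
\]
using $\gamma=b/a$, $n=\Theta(N^{a})$, and $\ell=(k-r)a+rb$; since $a\varepsilon>0$, this contradicts the $3$-uniform Hyperclique Hypothesis.

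The core of the construction is a verification gadget built from \emph{checker} vertices, each of which must accumulate $\ge r$ neighbours in the solution $S$. I would use checkers to enforce two things simultaneously: (i) that $S$ selects exactly one selector per group (so that a solution of size $k$ encodes a choice of $\ell$ vertices of $H$), and (ii) that every \emph{cross-group} triple of the selected $H$-vertices — the $2$-$1$ triples spanning two groups and the $1$-$1$-$1$ triples spanning three groups — is a hyperedge of $H$ (the intra-group triples being guaranteed already, since only internally-valid selectors are instantiated). For the forward direction, an $\ell$-hyperclique yields a selection that, by design, $r$-fold dominates every checker. For the backward direction, the gadget must force any $r$-multiple $k$-dominating set to respect the one-per-group structure and to pick a mutually hyperedge-compatible tuple, which then projects back to an $\ell$-hyperclique in $H$; the $r$ light dimensions are precisely those that the assumed algorithm would traverse via neighbour lists of size $\approx m/n=N^{b}$, which is why the $r$-fold multiplicity is matched against the $r$ light groups.

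The main obstacle I anticipate is designing this gadget so that the domination \emph{multiplicity} $r$ cooperates with, rather than fights, the selection and compatibility logic: a naive ``guard adjacent only to group $i$'' would demand $r$ selected vertices from group $i$ and blow the budget $k$, so the correct design must distribute the required $r$ units of domination across distinct groups (e.g.\ a threshold/shared-guard gadget exploiting that $S$ already contributes one vertex to each of the $k>r$ groups). The second delicate point is quantitative: I must route the checker–selector edges so that the base edge count stays $O(n^{1+\gamma})$, and then realize an arbitrary \emph{real} $\gamma$ — not merely ratios $b/a$ — by padding with dummy vertices, each adjacent to all selectors of $r$ of the $k$ groups (hence automatically collecting one selected neighbour per group and being $r$-fold dominated), tuning their number and the choice of groups they attach to so as to add exactly the missing $\Theta(n^{1+\gamma})$ edges while keeping $n=\Theta(N^{a})$ and creating no spurious solutions. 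Verifying that this padding changes neither the answer nor the asymptotic density, and that all intra- and cross-group triples are indeed covered, is the routine but careful remainder of the proof.
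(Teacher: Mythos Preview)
Your direct hyperclique-to-dominating-set reduction is sound: the arithmetic for $\ell = (k-r)a + rb$ and the running-time computation are correct, the $\binom{k}{r}$-style selection gadget is exactly what is needed, and the compatibility-checker idea you gesture at---adjacent to $r-1$ ``passive'' groups fully and to the at most three ``active'' groups conditionally, so that the checker falls below $r$ dominators precisely when a forbidden triple is selected---works and naturally recovers the threshold $r\le k-2$ via the requirement $r-1\le k-3$ for $1$-$1$-$1$ triples. One bookkeeping point you should make explicit: the $O(N^3)$ checkers each carry $O(N^a)$ edges, so you need $b\ge 3$, which just means taking $a$ large enough (and likewise $a\ge 3$ for the vertex count).

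The paper proceeds differently and more modularly. It introduces an intermediate problem, \emph{$r$-Multiple $k$-OV} (at every coordinate at least $r$ of the $k$ chosen vectors have a zero), reduces it to $r$-Multiple $k$-Dominating Set via the obvious vector/dimension construction together with the same $\binom{k}{r}$ selection gadget, and then derives hardness of $r$-Multiple $k$-OV by observing that its underlying Boolean formula is $(k-r+1)$-hard in the sense of Bringmann, Fischer and K\"unnemann and invoking their first-order-property classification as a black box. Your reduction is essentially that black box unrolled and specialized: each of your bad-triple checkers is exactly one coordinate in the BFK reduction from $(k-r+1)$-uniform hyperclique, composed with the standard lift from $3$-uniform to $(k-r+1)$-uniform. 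The paper's route is shorter because it reuses existing machinery; yours is self-contained but obliges you to verify the gadget correctness and the edge/vertex counts by hand. One small simplification the paper gets for free: instead of padding to hit an arbitrary real $\gamma$, it simply takes the number of OV dimensions to be $d = n^{\gamma/2}$, so the constructed graph already has $O(m)$ edges for any target $m = \Theta(n^{1+\gamma})$, with no separate padding argument needed.
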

While the algorithmic approach of Theorem~\ref{th:rleqk-2:algorithm-simple} is applicable also for the remaining case of $r=k-1$, it turns out that the resulting upper bound of $(m/n)^{k-1} n^{1+o(1)}$ (if $\omega=2$), is not optimal in general: In fact, we reduce the problem to Clique detection, by observing that each pair of solution vertices dominates the whole graph (i.e., forms a dominating set of size 2). The resulting algorithm substantially improves over exhaustive search already in dense graphs.  
Furthermore, in sparse graphs, we can apply a Bloom-filter inspired randomized algorithm of~\cite{fischer2024effect}, allowing us to list all 2-dominating sets efficiently, to obtain an efficient randomized reduction to an Unbalanced $k$-Clique Detection instance with $k-1$ parts of size $\bigO(\frac{m}{n})$ and one of size $n$, which we denote as $k$-Clique($\frac{m}{n},\dots, \frac{m}{n}, n$).\footnote{For more details, we refer to Section~\ref{sec:prelims} and Section~\ref{sec:multiple-kds}.}
\begin{restatable}{theorem}{TheoremReductionToUnbalancedClique}\label{theorem:reduction-to-unbalanced-clique}
    For any fixed constant $k\geq 2$ let $T_k(m, n)$ denote the time required to solve the $k$-Clique($\frac{m}{n},\dots, \frac{m}{n}, n$) problem. There is a randomized algorithm solving $(k-1)$-Multiple $k$-Dominating Set in time $m^{\frac{\omega}{2}+o(1)} + \bigO(T_k(m,n))$.
\end{restatable}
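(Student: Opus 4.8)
The plan is to reduce, via randomization and a constant number of guesses, to a family of unbalanced clique instances. First I would argue that it suffices to search for a solution of size \emph{exactly} $k$: since $(k-1)$-multiple domination is monotone under adding vertices (enlarging $S$ only relaxes the per-vertex requirement and never decreases any neighbour count among $V\setminus S$), any solution of size $<k$ can be padded to size $k$ whenever $n\ge k$, while $n<k$ is trivial. So fix a hypothetical solution $S=\{x_1,\dots,x_k\}$.

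The source of the small parts is a random pivot. Observe that if $w\notin S$ then, because $S$ is $(k-1)$-multiple dominating, $w$ has at least $k-1$ neighbours in $S$, so at least $k-1$ of the solution vertices lie in $N(w)$. Sampling $w$ uniformly at random, a Markov bound gives $\deg(w)=\bigO(m/n)$ with constant probability, and $w\notin S$ with probability $1-\bigO(k/n)$; hence with constant probability $w$ is a \emph{good} pivot with $|N(w)|=\bigO(m/n)$ and at least $k-1$ solution vertices inside $N(w)$, and $\bigO(\log(1/\delta))$ repetitions make a good pivot available with probability $1-\delta$. For a good pivot I would use $k-1$ copies of $N(w)$ as the small parts and one copy of $V$ as the big part, additionally guessing (among $k$ choices) which role plays the at-most-one solution vertex permitted outside $N(w)$, assigning that role to the big part.

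The main conceptual step is to turn the \emph{global} domination requirement into \emph{pairwise} clique constraints. Writing $B(a,b)=V\setminus(N[a]\cup N[b])$ for the common non-neighbours of $a,b$, one checks that $S$ is $(k-1)$-multiple dominating if and only if $B(x_i,x_j)\subseteq S$ for every pair, i.e. no vertex outside $S$ is a common non-neighbour of two solution vertices. I would guess the induced subgraph $G[S]$ on the $k$ labelled roles ($2^{\binom{k}{2}}=\bigO(1)$ choices); this determines, for each pair $(i,j)$, the constant $c_{ij}$ counting the roles $\ell$ for which $x_\ell$ is forced to be a common non-neighbour of $x_i,x_j$. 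Enforcing the guessed adjacencies on all pairs guarantees these $c_{ij}$ internal common non-neighbours are present, so the residual requirement $B(x_i,x_j)\subseteq S$ collapses to the purely pairwise count condition $|B(x_i,x_j)|=c_{ij}$. Thus, for the correct guess and a good pivot, a valid $S$ corresponds exactly to a $k$-clique in the $k$-partite graph whose edge between roles $i,j$ joins $a,b$ precisely when their adjacency matches $G[S]$ and $|B(a,b)|=c_{ij}$; conversely any clique found (for any guess) genuinely yields a valid solution, so there are no false positives.

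Finally I would construct this graph and invoke clique detection. The only nontrivial ingredient is evaluating the edge predicate $|B(a,b)|=c_{ij}$ for the $\bigO(m)$ relevant pairs (small--small and small--big). For this I would use the Bloom-filter-inspired listing of~\cite{fischer2024effect}, extended from listing $2$-dominating sets ($|B|=0$) to listing all pairs with at most $k-2$ common non-neighbours together with those few non-neighbours, in time $m^{\omega/2+o(1)}$; the clique edges are then read off directly. Running clique detection on the resulting $k$-Clique$(\tfrac{m}{n},\dots,\tfrac{m}{n},n)$ instance costs $\bigO(T_k(m,n))$, and the $\bigO(1)$ guesses and $\bigO(\log(1/\delta))$ repetitions affect only constants, giving total time $m^{\omega/2+o(1)}+\bigO(T_k(m,n))$. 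I expect the main obstacle to be this listing step: one must certify that both the number of near-dominating pairs and the cost of enumerating them stay within $m^{\omega/2+o(1)}$, which is exactly where the sparse matrix-multiplication machinery of~\cite{fischer2024effect} is required, whereas the pairwise reformulation is correct but must be argued carefully.
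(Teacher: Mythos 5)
Your high-level architecture coincides with the paper's: reduce to a $k$-Clique$(\frac{m}{n},\dots,\frac{m}{n},n)$ instance whose $k-1$ small parts come from an $\bigO(m/n)$-sized candidate set and whose edges encode a pairwise domination test, computed with the sparse-matrix machinery of~\cite{fischer2024effect}. Your source of the small parts (the neighbourhood of a random pivot $w\notin S$ of degree $\bigO(m/n)$, which must contain at least $k-1$ solution vertices) is a correct, mildly different alternative to the paper's deterministic route via \autoref{lemma:heavy-nodes}, which shows that every solution contains at least $k-1$ vertices with $\deg^*(v)\ge n/k$ and that there are only $\bigO(m/n)$ such vertices.

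The genuine gap is the computation of your edge predicate. The paper puts an edge between $a$ and $b$ iff $\{a,b\}$ is a dominating set of $G$ (using \autoref{obs:reduction-to-k-clique}, that in any $(k-1)$-multiple $k$-dominating set every pair of solution vertices dominates the whole graph), and this is exactly the primitive that \autoref{lemma:2ds-listing} delivers in time $m^{\omega/2+o(1)}$ --- nothing needs to be extended. Your predicate instead requires, for every relevant pair, the exact value of $|B(a,b)|$ (equivalently $|N[a]\cap N[b]|$), i.e., an exact count of common non-neighbours rather than the Boolean test ``is the count zero''. You assert that the Bloom-filter listing of $2$-dominating sets ``extends'' to listing all pairs with at most $k-2$ common non-neighbours, together with those non-neighbours, in the same time, but you do not prove this, and it does not follow from the cited lemma: deciding whether $N[a]\cup N[b]$ covers $V$ is a coverage test, whereas computing $|N[a]\cap N[b]|$ for all $\bigO(m)$ candidate pairs is a (sparse) matrix-product computation whose cost must be bounded separately. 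Since this is precisely the step that carries the entire $m^{\omega/2+o(1)}$ term, the running-time claim is unsupported as written. To be fair, your finer characterization ($B(x_i,x_j)\subseteq S$ with the induced subgraph $G[S]$ guessed, rather than $B(x_i,x_j)=\emptyset$) is a legitimate reaction to the fact that vertices of $S$ are exempt from the domination requirement; the paper works with the stronger pairwise condition precisely so that it stays within the proven listing primitive. Either supply the counting/listing algorithm you invoke, or switch to the paper's pairwise condition, after which the proof closes immediately.
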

It remains to analyze the complexity of the Unbalanced Clique problem.
It is straightforward to obtain an algorithm solving this problem in time $\frac{m^2}{n} \cdot \big(\frac{m}{n}\big)^{\omega k/3+o(1)}$, which in case of very sparse graphs ($m=\tilde \bigO(n)$) yields a near-linear running time.
However, this running time analysis is still crude, and we can do even better for sufficiently ``nice'' values of $m/n$ and $k$.
More precisely, we show that for each positive integer $p$, we can solve $k$-Clique($n^{\frac{1}{p}}, \dots, n^{\frac{1}{p}}, n$) in time $(n^{\frac{1}{p}(k-1)+1})^{\frac{\omega}{3}+o(1)}$ for all sufficiently large $k$ satisfying $k\equiv 2p+1 \pmod 3$.
\begin{restatable}{proposition}{proposition:unbalanced-k-clique-running-time}
    Let $G=(V_1,\dots, V_k, E)$ be a $k$-partite graph with vertex part sizes $|V_1| = n$ and ${|V_2|=\dots = |V_{k}| = n^\gamma}$ for some $0\leq \gamma\leq 1$. 
    If 
    \begin{enumerate}
        \item $(k-1 + \frac{1}{\gamma})$ is an integer divisible by $3$,\label{item:divisibility-by-3}
        \item $\frac{2}{\gamma}<k-1$, \label{item:bounding-gamma-k}
    \end{enumerate}
    then we can decide if $G$ has a clique of size $k$ in time $(n^{\gamma(k-1)+1})^{\frac{\omega}{3}+o(1)}$.
\end{restatable}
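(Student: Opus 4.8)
The plan is to reduce the problem to triangle detection in a \emph{balanced} tripartite auxiliary graph, following the N\v{e}set\v{r}il--Poljak grouping strategy, but choosing the grouping so as to exactly rebalance the unequal part sizes. Write $P := |V_1|\prod_{i=2}^{k}|V_i| = n^{\gamma(k-1)+1}$ for the product of all part sizes; this is the number of candidate cliques (one vertex per part), and $P^{\omega/3}$ is exactly the target running time. The key observation is that if we can partition the index set $\{1,\dots,k\}$ into three groups $S_1,S_2,S_3$ whose part-size products are all equal to $P^{1/3}$, then creating one auxiliary vertex per clique living on the parts of each $S_i$ yields a tripartite graph with $N := P^{1/3}$ vertices per side, in which a ``rainbow'' triangle corresponds precisely to a $k$-clique of $G$; detecting such a triangle by (integer) matrix multiplication of two $N\times N$ matrices then costs $N^{\omega+o(1)} = P^{\omega/3 + o(1)}$, as desired.

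Everything therefore hinges on producing such a balanced grouping, and this is exactly what the two hypotheses guarantee. Measure sizes in units of $n^{\gamma}$: each of $V_2,\dots,V_k$ has unit weight $1$, while $V_1$ has unit weight $1/\gamma =: p$, and by Condition~\ref{item:divisibility-by-3} the total weight $q := (k-1) + p$ is an integer divisible by $3$. I would place $V_1$ into $S_1$ together with $q/3 - p$ of the small parts, and split the remaining $2q/3$ small parts evenly, putting $q/3$ of them into each of $S_2$ and $S_3$. This uses $(q/3 - p) + 2\cdot(q/3) = q - p = k-1$ small parts in total, i.e.\ exactly $V_2,\dots,V_k$, and gives each group total weight $q/3$, hence part-size product $(n^{\gamma})^{q/3} = n^{(\gamma(k-1)+1)/3} = P^{1/3}$. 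The only thing to check is that this is well defined: $q/3$ is an integer by Condition~\ref{item:divisibility-by-3}, and $q/3 - p \ge 0$ because Condition~\ref{item:bounding-gamma-k} states $2p = 2/\gamma < k-1$, which rearranges to $3p < (k-1)+p = q$, i.e.\ $p < q/3$. Thus the balanced partition always exists under the stated hypotheses; this feasibility is the crux of the argument, since without balance the triangle step would be governed by rectangular rather than square matrix multiplication and would no longer meet the $P^{\omega/3}$ bound.

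It then remains to assemble and analyze the reduction. For each $i$, I would enumerate all $\prod_{j\in S_i}|V_j| = N$ tuples of vertices (one per part in $S_i$), keep those inducing a clique, and take them as the vertices of side $i$; this costs $\bigO(N\cdot\mathrm{poly}(k))$. Two auxiliary vertices from different sides are joined iff the union of their constituent vertices induces a clique in $G$, which is checkable in $\bigO(\mathrm{poly}(k))$ time per pair, so the graph is built in $\bigO(N^2 \cdot \mathrm{poly}(k))$ time. A rainbow triangle is found by multiplying the two adjacency matrices of dimension $N\times N$ and scanning the product against the third side, for a total of $N^{\omega+o(1)}$; since $\omega\ge 2$, the enumeration and construction costs $\bigO(N^2\,\mathrm{poly}(k)) \le N^{\omega + o(1)}$ are absorbed. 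Correctness is immediate: a rainbow triangle is a triple of mutually compatible cliques on $S_1,S_2,S_3$ whose union is a $k$-clique, and conversely every $k$-clique splits into one clique per group. The overall running time is therefore $N^{\omega + o(1)} = (n^{\gamma(k-1)+1})^{\omega/3 + o(1)}$, completing the proof.
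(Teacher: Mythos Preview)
Your proof is correct and follows essentially the same approach as the paper: both group $V_1$ together with some of the small parts into one super-part and split the remaining small parts evenly into the other two, choosing the split so that all three super-parts have size $n^{(\gamma(k-1)+1)/3}$, then reduce to triangle detection. Your parameters $q/3$ and $q/3 - p$ coincide with the paper's $\beta$ and $\alpha$, and your feasibility check (that $q/3$ is an integer and $q/3 - p > 0$) matches the paper's verification that $\alpha,\beta$ are positive integers.
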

Applying this running time to our setting, this yields an algorithm solving $(k-1)$-Multiple $k$-Dominating Set in the applicable cases in time $\big(n(\frac{m}{n})^{k-1}\big)^{\omega/3 + o(1)}$.
We also complement this with a matching conditional lower bound based on the $k$-Clique Hypothesis for the $(k-1)$-Multiple $k$-Dominating Set problem.
\begin{restatable}{theorem}{theorem:reqk-1:lb}
    Let $G$ be a graph with $n$ vertices and $m=\Theta(n^{1+\gamma})$ edges for any rational $0\leq \gamma\leq 1$. 
    For any $\varepsilon>0$, an algorithm solving $(k-1)$-Multiple $k$-Dominating Set on $G$ in time 
    $\big(n(\frac{m}{n})^{k-1}\big)^{\omega/3 - \varepsilon}$
    would refute the $k$-Clique Hypothesis.
\end{restatable}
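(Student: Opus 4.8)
The plan is to establish the lower bound by reversing the reduction of Theorem~\ref{theorem:reduction-to-unbalanced-clique}: starting from a hard instance of \emph{unbalanced} $k$-clique, I would build a sparse graph~$G'$ whose $(k-1)$-multiple $k$-dominating sets correspond exactly to the $k$-cliques, and tune the parameters so that the clique lower bound transfers verbatim. First I would pin down the right source of hardness. Writing $m/n = \Theta(n^\gamma)$, the claimed running time $(n(m/n)^{k-1})^{\omega/3}$ equals $(n\cdot (n^\gamma)^{k-1})^{\omega/3}$, i.e.\ the $(\prod_i |V_i|)^{\omega/3}$ bound for $k$-Clique with one part of size $n$ and $k-1$ parts of size $n^\gamma$. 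I would therefore reduce from precisely this unbalanced instance, whose $(\prod_i |V_i|)^{\omega/3-o(1)}$ hardness is the natural unbalanced form of the $k$-Clique Hypothesis and is exactly the quantity attained by the matching algorithm.

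The combinatorial engine is the characterization already used algorithmically: a size-$k$ set $S$ is a $(k-1)$-multiple dominating set iff \emph{every pair} of its vertices dominates $V\setminus S$, equivalently each $v\notin S$ misses at most one element of $S$. I would put one selector part in $G'$ for each of the $k$ clique parts, add $k$ ``part-enforcer'' vertices $q_i$ (adjacent to all selectors outside $V_i$) so that any valid $S$ is forced to take exactly one selector per part, and encode the bipartite adjacency of the clique instance into the selector--selector edges; the large part contributes the single ``arbitrary'' slot, mirroring the fact that at most one solution vertex can have low degree. Soundness (a valid $S$ induces a clique) and completeness (a clique yields a valid $S$) would then be verified through the pair-domination characterization, together with a standard argument ruling out solutions that place a vertex on a gadget.

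For the bookkeeping I would check that $G'$ has $N=\Theta(n)$ vertices and $M=\Theta(n^{1+\gamma})$ edges, so that $m/n=\Theta(n^\gamma)$ and $N(M/N)^{k-1}=n^{1+\gamma(k-1)}=\prod_i |V_i|$ matches the clique product exactly. A $(n(m/n)^{k-1})^{\omega/3-\varepsilon}$ algorithm for $(k-1)$-Multiple $k$-Dominating Set would then solve the unbalanced clique instance in time $(n^{1+\gamma(k-1)})^{\omega/3-\varepsilon'}$ (the reduction being of near-linear size in $N+M$ and hence subpolynomial in the clique bound), refuting the $k$-Clique Hypothesis for every rational $0<\gamma\le 1$; the boundary case $\gamma=0$ is handled in the same way with a balanced instance.

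The step I expect to be the genuine obstacle is respecting the sparsity budget $M=\Theta(n^{1+\gamma})$ while keeping the reduction \emph{tight}. The tempting approach of adding one ``joint-domination witness'' per non-edge fails on two counts: such a witness must be adjacent to all but the two endpoints, and since the large part has $n$ vertices this forces degree $\Theta(n)$ (a witness may not ``waste'' its single allowed miss on the unknown large-part selector), so even the $\Theta(n^{2\gamma})$ witnesses among the small parts already cost $\Theta(n^{1+2\gamma})\gg n^{1+\gamma}$ edges; and a witness per non-edge inflates the vertex count to $\Theta(n^{2\gamma})$ beyond the selectors, shifting the exponent by an additive constant and destroying tightness. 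Consequently the clique constraint cannot be carried by auxiliary witnesses at all and must live in the selector adjacency itself, which in turn makes it delicate to guarantee that \emph{non-selected} selectors remain dominated (each needs $k-1$ selected neighbours) without resorting to a dense complete-multipartite wiring. The heart of the proof is thus a bounded-degree selection/domination gadget that exploits the single large slot — connecting the large part to the small parts so that non-selected vertices are dominated automatically, while the clique edges are recorded within the $O(n^{1+\gamma})$ edge budget — and I would spend most of the effort verifying that this construction is simultaneously faithful, sparse, and exponent-tight.
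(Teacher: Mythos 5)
Your high-level plan (reduce from an unbalanced $k$-Clique instance with parts of size $n, n^\gamma,\dots,n^\gamma$, use the ``every pair of solution vertices dominates'' characterization, and force one selector per part with a small gadget) points in the right direction, but the proof has a genuine gap exactly where you flag it: you never construct the gadget that records the clique constraint within the $O(n^{1+\gamma})$ edge budget, and your own analysis of the two natural attempts correctly shows that both fail. The paper closes this gap with an idea absent from your proposal: it does \emph{not} reduce from an unbalanced instance living on the final vertex set, but from a \emph{balanced} $k^*$-Independent Set instance with $k^* = d\bigl((k-1)p+q\bigr)$ parts of size $n$ each (where $\gamma = p/q$ in lowest terms and $d\ge 2$), and builds the selectors as \emph{tuples}: each $V_i$ with $i\le k-1$ consists of all independent $dp$-tuples drawn from $dp$ designated parts, and $V_k$ of all independent $dq$-tuples, so that $|V_i| = n^{dp} = N^{\gamma}$ and $|V_k| = n^{dq} = N$. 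The selector--selector wiring is the \emph{full} complete multipartite graph --- this costs only $O(N^{2\gamma} + N^{1+\gamma}) = O(N^{1+\gamma})$ edges and makes every non-selected selector automatically adjacent to exactly the $k-1$ chosen nodes outside its own part, which is precisely the domination issue you could not resolve --- so the selector adjacency carries no information. The independence constraint is instead carried by one witness node per \emph{edge} of the source graph, adjacent to a selector tuple iff that tuple avoids both endpoints; a witness being missed by at most one of the $k$ chosen tuples is exactly the statement that the union of the tuples contains at most one endpoint of that edge. The quantitative point you missed is that after this tensoring the source graph has only $n^2 = N^{2/(dq)}$ edges, so even though each witness has degree $\Theta(N)$, all witnesses together contribute only $O(N^{1+2/(dq)}) \le O(N^{1+\gamma})$ edges once $d\ge 2$: the ``witness per constraint'' approach you ruled out becomes affordable because the constraints are counted in the small source instance, not in the blown-up target.

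A secondary gap: you take the $(\prod_i|V_i|)^{\omega/3-o(1)}$ hardness of the unbalanced clique instance as given, calling it ``the natural unbalanced form of the $k$-Clique Hypothesis,'' but the hypothesis used by the paper is stated only for balanced instances. Deriving the unbalanced bound requires precisely the grouping-by-rational-exponents argument (write $\gamma=p/q$, pick the blow-up factor $d$, reduce from a clique/independent set of size $k^*\ne k$, and check that $\bigl(N^{\gamma(k-1)+1}\bigr)^{\omega/3}=n^{k^*\omega/3}$) that constitutes most of the technical content of the paper's proof. So the two pieces you defer --- the sparse faithful gadget and the hardness of the unbalanced source --- are really the same piece, and it is the part of the argument that is missing.
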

\subparagraph{Pattern Domination in Graphs.} 
For graphs $G$ and $H$, we say a subset $S\subseteq V(G)$ is \emph{$H$-Dominating Set} if $S$ dominates $G$ and induces a subgraph of $G$ that is isomorphic to $H$.
For a fixed constant $k$, we define a \emph{$k$-Pattern Dominating Set Problem} as follows.
Given a graph $G$ with $n$ vertices and $m$ edges and a graph $H$ with $k$ vertices, decide if $G$ contains an $H$-Dominating Set.
We can observe that this problem is at most as hard as listing all dominating sets of size $k$ in a graph and hence we can solve it in $mn^{k-2+o(1)}$ for all sufficiently large $k$.
On the other hand, it has been implicitly proved in \cite{fischer2024effect} that detecting the patterns isomorphic to the star graph $K_{1,(k-1)}$ that dominate $G$ takes at least $mn^{k-2-o(1)}$, unless $k$-OV hypothesis fails, hence settling the fine-grained complexity of $k$-Pattern Domination in sparse graphs.
A more interesting direction is to ask what happens to the complexity of this problem if $H$ is a fixed graph rather than a part of the input.
We call this problem \emph{$H$-Dominating Set Problem}. 
In the context of graph theory, this class of problems has been widely studied, for a variety of natural choices of pattern $H$ that include:
\begin{itemize}
    \item Dominating Clique~\cite{chepoi1998note,cozzens1990dominating,dragan1994dominating,Kohler00,kratsch1990finding,Balliu0KO23,kratsch1994dominating}
    \item Dominating Independent Set~\cite{cho2023tight,cho2023independent,cho2023independentdom,kuenzel2023independent,pan2023improved}
    \item Dominating Path~\cite{FaudreeFGHJ18,FaudreeGJW17,Veldman83}
    \item Dominating Cycle~\cite{FANG2021112196,FANG202343}.
\end{itemize}
Notably, the Dominating Independent Set problem is equivalent to the well-known Maximal Independent Set problem \cite{AssadiK022,AssadiOSS18,AssadiOSS19,000122}. 
It turns out that the fine-grained complexity of $H$-Dominating Set problem in sparse graphs depends heavily on the choice of $H$.
Obviously, for any fixed $H$ this problem is at most as hard as the $k$-Pattern Dominating Set Problem and hence can be solved in time $mn^{k-2+o(1)}$.
As our first contribution for this problem we show that for no $k$-vertex graph $H$ and $\varepsilon>0$, can we solve this problem in the running time $\big(\frac{m^{(k+1)/2}}{n}\big)^{1-\varepsilon}$, unless $k$-OV hypothesis fails.
\begin{restatable}{theorem}{PropHDomLB}\label{prop:h-dom-lb}
    Let $H$ be any graph on $k\geq 3$ vertices.
    For no $\varepsilon>0$ is there an algorithm solving $H$-Dominating Set in time $O(\big(\frac{m^{(k+1)/2}}{n}\big)^{1-\varepsilon})$, unless the $k$-OV hypothesis fails.
\end{restatable}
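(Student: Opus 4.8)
The plan is to prove this lower bound by a reduction from $k$-Orthogonal Vectors ($k$-OV). Given a $k$-OV instance consisting of sets $A_1,\dots,A_k\subseteq\{0,1\}^d$ of $N$ vectors each, with $d=\mathrm{polylog}(N)$, I will construct a graph $G$ with $n=\Theta(N)$ vertices and $m=\Theta(N^2)$ edges such that $G$ has an $H$-Dominating Set if and only if there is an orthogonal tuple $(a_1,\dots,a_k)\in A_1\times\dots\times A_k$ (i.e.\ for every coordinate $\ell$ some $a_i$ has $a_i[\ell]=0$). Since then $\frac{m^{(k+1)/2}}{n}=\Theta\big((N^2)^{(k+1)/2}/N\big)=\Theta(N^k)$, an algorithm running in time $O\big((m^{(k+1)/2}/n)^{1-\varepsilon}\big)$ would decide $k$-OV in time $O(N^{k(1-\varepsilon)})=O(N^{k-\delta})$ for some $\delta>0$, refuting the $k$-OV hypothesis.

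The vertex set of $G$ comprises $k$ \emph{selection gadgets} $V_1,\dots,V_k$ and $d$ \emph{coordinate vertices} $c_1,\dots,c_d$. Each gadget $V_i$ is a clique containing one vertex $v_{i,a}$ for every $a\in A_i$. For $i\ne j$ I add a complete bipartite graph between $V_i$ and $V_j$ exactly when $\{h_i,h_j\}\in E(H)$, writing $V(H)=\{h_1,\dots,h_k\}$; thus every transversal $T=\{v_{1,a_1},\dots,v_{k,a_k}\}$ induces precisely $H$ via $v_{i,\cdot}\mapsto h_i$ (the within-gadget clique edges are invisible to $T$, which takes a single vertex per gadget). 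For each coordinate $\ell$ I join $c_\ell$ to $v_{i,a}$ iff $a[\ell]=0$, so $T$ dominates $c_\ell$ iff some chosen vector vanishes in coordinate $\ell$; hence $T$ dominates all coordinate vertices simultaneously iff $(a_1,\dots,a_k)$ is orthogonal.

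The forward direction is then immediate: an orthogonal tuple yields a transversal $T$ that induces $H$ and dominates $G$, since every gadget vertex is dominated through its clique and every coordinate vertex through orthogonality. For the size parameters, the within-gadget cliques already contribute $\Theta(N^2)$ edges, the pattern edges at most $\binom{k}{2}N^2=\Theta(N^2)$, and the coordinate edges only $kNd=\tilde{O}(N)$; together with $kN+d=\Theta(N)$ vertices this gives $m=\Theta(N^2)$ and $n=\Theta(N)$, as required.

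The crux is the converse direction. For an \emph{arbitrary} pattern $H$ one must show that every $H$-Dominating Set is in fact a transversal — exactly one vertex per gadget and no coordinate vertex — because only for transversals does the domination requirement collapse to orthogonality. This is delicate: nothing a priori prevents an induced copy of $H$ from placing two vertices inside one gadget clique (while skipping another gadget) or from using a coordinate vertex in the role of some $h_i$. I would rule these out by (i) attaching to each coordinate vertex a small forcing gadget so that a coordinate vertex can never simultaneously lie in an induced $H$ and dominate $G$, and (ii) exploiting the domination constraint to force each gadget to be met, so that $|S|=k$ leaves exactly one vertex per gadget. I expect handling \emph{all} $H$ uniformly to be the main technical work; a convenient route is to first establish the bound for a coloured (partitioned) variant, in which one vertex per colour is selected by definition, and then to uncolour via standard gadget attachments in the spirit of \cite{fischer2024effect}.
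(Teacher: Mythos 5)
Your reduction establishes the bound only in the dense regime, and that misses the actual content of the theorem. With all $k$ sets of size $N$ you produce a graph with $n=\Theta(N)$ and $m=\Theta(N^2)$, where $m^{(k+1)/2}/n=\Theta(n^k)$; so what you prove is the (essentially standard) statement that brute force is optimal on \emph{dense} instances. The theorem is meant to match the algorithms running in time $(m^{(k+1)/2}/n)^{1+o(1)}$ on sparse graphs, so the lower bound must hold for every density $n\le m\le n^2$, and for $m\ll n^2$ your construction says nothing. The paper's proof gets the full range by reducing from an \emph{unbalanced} $k$-OV instance calibrated to the target density: $|A_1|=\dots=|A_{k-1}|=\sqrt{m}$ and $|A_k|=m/n$, so that $|A_1|\cdots|A_k|=m^{(k-1)/2}\cdot(m/n)=m^{(k+1)/2}/n$ while the resulting graph still has $O(n)$ vertices and $\tilde{O}(m)$ edges (the within-gadget cliques and the pattern bicliques contribute $O(m)$ edges because $(\sqrt{m})^2=m$ and $\sqrt{m}\cdot(m/n)\le m$, and the forcing gadget attached to $X_k$ has $n$ vertices, contributing $n\cdot(m/n)=m$ edges). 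The unbalanced $k$-OV hypothesis with arbitrary part sizes is equivalent to the balanced one, as the paper notes, so this is legitimate. Without this sizing trick your approach cannot be repaired by tuning constants; it is the essential missing idea.

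Separately, you explicitly defer the converse direction (forcing every $H$-dominating set to be a transversal avoiding coordinate vertices) to unspecified gadgets. This part is fillable and your instincts are right: the paper attaches, for each $i$, a set $R_i$ of $k+1$ private vertices adjacent only to $X_i$ (with $|R_k|=n$ to also pad the edge count), so any dominating set of size $k$ must intersect every $X_i$, hence contains exactly one vertex per gadget and no coordinate or redundant vertices; the pattern bicliques then guarantee the induced subgraph is exactly $H$ for \emph{every} transversal, so no case analysis over $H$ is needed. But as written your proposal has both an acknowledged gap (the forcing gadgets) and an unacknowledged one (the density regime), and the latter is fatal to the claimed statement in sparse graphs.
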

We then consider the two most studied patterns, namely $k$-Clique and $k$-Independent Set.
For sufficiently large $k$, by leveraging the simple fact that there are at most $\bigO(m^{k/2})$ many cliques in the graph with $m$ edges as well as  fast matrix multiplication, we can obtain an algorithm for Dominating $k$-Clique problem running in time $\big(\frac{m^{(k+1)/2}}{n}\big)^{1+o(1)}$, thus matching the lower bound from \autoref{prop:h-dom-lb}.
\begin{restatable}{theorem}{cliqueDomAlg}
    Let $k\geq 5$ be a fixed constant. Dominating $k$-Clique problem can be solved on graphs with $n$ vertices and $m$ edges in time 
    \[
    \MM(\frac{m}{n}\cdot m^{\frac{1}{2}\lfloor\frac{k-1}{2}\rfloor}, n, m^{\frac{1}{2}\lceil\frac{k-1}{2}\rceil}).
    \]
    where $\MM(a,b,c)$ is the time required to multiply an $a\times b$ matrix by a $b\times c$ matrix. If $\omega = 2$, this becomes $\big (\frac{m^{(k+1)/2}}{n}\big)^{1+o(1)}$.
\end{restatable}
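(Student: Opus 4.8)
The plan is to reduce the search for a dominating $k$-clique to a single rectangular matrix multiplication, using the clique count $O(m^{j/2})$ to keep the matrices small while relying on fast matrix multiplication to avoid ever listing all $k$-cliques (whose number $\Theta(m^{k/2})$ already exceeds the target running time for sparse $G$). First I would fix a balanced split: designate one vertex as a connector/pivot and partition the remaining $k-1$ vertices into an $A$-part of size $a=\lfloor (k-1)/2\rfloor$ and a $B$-part of size $b=\lceil (k-1)/2\rceil$. The $B$-part is indexed by $b$-cliques, of which there are only $O(m^{b/2})$ and which can be listed in that time via a degeneracy-ordering enumeration (Chiba--Nishizeki); this gives the column dimension $m^{b/2}$.

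Second, I would encode domination and the clique constraint into a matrix product whose middle dimension is $V(G)$, blown up by a constant factor. For domination I use the P\u{a}tra\c{s}cu--Williams undominated-vertex trick: the $(R,C)$-entry accumulates $\sum_{v}[v\notin N[R]]\,[v\notin N[C]]$, the number of vertices dominated by neither half, which vanishes exactly when $R\cup C$ dominates $G$. The extra ingredient over plain dominating set is the clique constraint, in particular the cross edges between the two halves and between the pivot and the $B$-part. I would enforce these by appending, for each of the $O(1)$ cross position-pairs $(i,j)$, a block of $n$ middle coordinates contributing the indicator $[r_i\not\sim c_j]=\sum_u [u=r_i]\,[u\notin N(c_j)]$; adding these to the undominated count yields an $(R,C)$-entry that is zero iff the $k$ chosen vertices form a clique and dominate $G$. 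Within-half edges are free since $R,C$ range only over cliques, and the pivot-to-$A$ edges are guaranteed by how the row index is formed, so the middle dimension stays $\Theta(n)$.

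Third --- and this is the crux --- I must bound the row dimension by $\frac{m}{n}\,m^{a/2}$ rather than the naive $m^{(a+1)/2}$ that results from pairing every $a$-clique with each of its common neighbors (which can be $\Theta(m^{(a+1)/2})$ and would only yield the weaker bound $m^{k/2}$). Obtaining the factor $\frac{m}{n}$ (average degree) in place of $m^{1/2}$ is exactly the sparsity-exploiting step: I would root each clique at a carefully chosen vertex, e.g.\ via the degree/degeneracy orientation already used for clique enumeration, or by charging the pivot to an edge of a bounded-out-degree orientation, so that summed over all $a$-cliques the number of admissible (clique, pivot) pairs telescopes to $O(\frac{m}{n}\,m^{a/2})$. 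I expect this counting argument to be the main obstacle: one must simultaneously keep the pivot constrained enough that its enumeration costs only the average degree $\frac{m}{n}$ on average, and ensure the chosen rooting is compatible with the cross-edge gadget so that every dominating $k$-clique is still witnessed by some zero entry $(R,C)$.

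Finally, feeding the dimensions into rectangular matrix multiplication gives running time $\MM(\frac{m}{n}\,m^{a/2},\,n,\,m^{b/2})$. Assuming $\omega=2$, the cost is (up to $n^{o(1)}$ factors) the sum of the three pairwise products, and a short calculation shows the output term $\frac{m}{n}\,m^{a/2}\cdot m^{b/2}=\frac{m^{(k+1)/2}}{n}$ dominates the terms $m^{(a+2)/2}$ and $n\,m^{b/2}$ precisely once $k$ is large enough --- this is where the hypothesis $k\ge 5$ enters --- yielding the claimed $\big(\frac{m^{(k+1)/2}}{n}\big)^{1+o(1)}$ bound.
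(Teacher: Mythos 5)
Your overall architecture (split the $k-1$ non-pivot vertices into cliques of sizes $\lfloor\frac{k-1}{2}\rfloor$ and $\lceil\frac{k-1}{2}\rceil$, enumerate them in $O(m^{\frac{1}{2}\lceil\frac{k-1}{2}\rceil})$ time, and run a rectangular matrix product over middle dimension $\Theta(n)$ counting undominated vertices) matches the paper, and you correctly identify where the factor $\frac{m}{n}$ must come from. But the mechanism you propose for that crux --- rooting each clique via a degeneracy or bounded-out-degree orientation so that the (clique, pivot) pairs ``telescope'' to $O(\frac{m}{n}m^{a/2})$ --- does not work, and you rightly flag it as the obstacle. The problem is that this is a pure clique-counting argument: the number of $(a+1)$-cliques can genuinely be $\Theta(m^{(a+1)/2})$ (e.g.\ inside a single dense subgraph on $\Theta(\sqrt m)$ vertices), and if each such clique had to be represented by at least one rooted row index, no choice of rooting could push the row dimension below $m^{(a+1)/2}$. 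The out-degree of a degeneracy orientation is bounded by the degeneracy, which can be $\Theta(\sqrt m)$, not $\Theta(m/n)$; the average degree $m/n$ is not something an orientation argument hands you per vertex. The saving has to come from the \emph{domination} constraint, not from clique structure.

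The paper's resolution is Lemma~\ref{lemma:heavy-nodes} specialized to $r=1$: any dominating set of size $k$ contains a vertex $v$ with $\deg^*(v)\geq n/k$, and a counting argument shows there are only $O(m/n)$ such heavy vertices. The row index set is then simply the full Cartesian product (all $\lfloor\frac{k-1}{2}\rfloor$-cliques) $\times$ (all heavy vertices), of size $O(m^{\frac{1}{2}\lfloor\frac{k-1}{2}\rfloor}\cdot\frac{m}{n})$, with \emph{no} adjacency requirement between the pivot and the $A$-clique at indexing time. Completeness holds because in any dominating $k$-clique the maximum-degree vertex is heavy and can serve as the pivot. This also removes the need for your cross-edge gadget inside the product: the paper's matrix entry $C[S,T]$ only certifies domination, and the clique condition (including pivot-to-$A$ and cross edges) is verified in a post-processing pass over all $\frac{m}{n}m^{(k-1)/2}=\frac{m^{(k+1)/2}}{n}$ output entries in $O(1)$ time each, which fits the budget. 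Your in-product clique gadget would also be fine, but without the heavy-vertex restriction on the pivot the row dimension cannot be bounded as claimed, so as written the proof has a genuine gap.
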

On the other hand, the number of $k$-Independent Sets in sparse graphs is typically much larger and can be as large as $\Theta(n^k)$.
Still, perhaps surprisingly, by leveraging some simple structural properties of maximal independent sets, we can obtain an algorithm matching the lower bound from \autoref{prop:h-dom-lb}.
\begin{restatable}{theorem}{indsetdomalg}\label{th:indset-dom-alg}
    Let $k\geq 3$ be fixed. Dominating $k$-Independent Set problem can be solved on graphs with $n$ vertices and $m$ edges in time $(\frac{m^{(k-1 + \omega)/2}}{n})^{1+o(1)}$.
    If $\omega = 2$, this becomes $\big (\frac{m^{(k+1)/2}}{n}\big)^{1+o(1)}$.
\end{restatable}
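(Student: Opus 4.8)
The plan is to mirror the structure of the Dominating $k$-Clique algorithm, reducing to a rectangular matrix product of dimensions $\MM(\frac{m}{n}\,m^{\frac12\lfloor (k-1)/2\rfloor},\, \cdot,\, m^{\frac12\lceil (k-1)/2\rceil})$, whose output size is exactly $\frac{m^{(k+1)/2}}{n}$ when $\omega=2$. The obstacle relative to the clique case is that a graph may contain $\Theta(n^t)$ independent sets of size $t$, so we cannot afford to list candidate half-solutions the way one lists the $O(m^{t/2})$ cliques. The whole proof therefore hinges on showing that, for a \emph{dominating} independent set, every solution vertex may be assumed to come from a pool of only $O(\sqrt m)$ candidates.

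First I would isolate a high-degree pivot. Since a dominating set $S$ of size $k$ satisfies $\sum_{v\in S}(\deg(v)+1)\ge n$, at least one $v^\ast\in S$ has $\deg(v^\ast)\ge n/k-1$, and there are only $O(m/n)$ such vertices in $G$. Iterating over $v^\ast$ (the $m/n$ factor) and deleting $N[v^\ast]$, the task reduces to finding a dominating independent set of size $k-1$ inside the induced subgraph $G[U]$, $U:=V\setminus N[v^\ast]$; here I use that, since $S$ is independent, all remaining solution vertices lie in $U$, and domination of $U$ by a subset of $U$ coincides with domination in $G[U]$. This yields a clean self-reduction to a smaller instance of the same problem.

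Next I would exploit maximality to bound the candidate pool for the remaining $k-1$ vertices. Fix the threshold $\theta=\sqrt m$; then $G$ has at most $2\sqrt m$ vertices of degree $\ge\theta$. The structural property of a maximal independent set I would use is that every undominated vertex has a neighbour in $S$: repeatedly take an undominated vertex $w$ of minimum degree; if $\deg(w)<\theta$ then its dominator in $S$ lies in the $<\theta$-size set $N[w]$, while if $\deg(w)\ge\theta$ then \emph{all} still-undominated vertices have degree $\ge\theta$ and hence number at most $2\sqrt m$. In both cases each newly identified solution vertex is drawn from a pool of size $O(\sqrt m)$, and the set of not-yet-dominated vertices shrinks to size $O(\sqrt m)$. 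Consequently I can enumerate $k-3$ of the non-pivot vertices explicitly ($m^{(k-3)/2}$ choices) and reserve the last two for a matrix product: form a $\sqrt m\times \sqrt m$ matrix over the two remaining candidate pools, with inner dimension the $O(\sqrt m)$ residual undominated vertices, each entry recording whether the corresponding pair leaves some vertex undominated, and with pairwise non-adjacency pre-filtered into the pools. Multiplying costs $m^{\omega/2}$, and over all pivots and enumerated vertices the total is $\frac{m}{n}\cdot m^{(k-3)/2}\cdot m^{\omega/2}=\frac{m^{(k-1+\omega)/2}}{n}$, which is $\frac{m^{(k+1)/2}}{n}$ for $\omega=2$ and matches \autoref{prop:h-dom-lb}.

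The main obstacle is the regime in which many solution vertices have low degree, equivalently where the minimum-degree argument forces the ``all undominated vertices have degree $\ge\theta$'' case early: there I must guarantee both that the covering solution vertices can still be confined to an $O(\sqrt m)$ pool and that the residual undominated set stays of size $O(\sqrt m)$, so that the final product remains a $\sqrt m$-dimensional one. Making this accounting uniform across all $k\ge 3$ (so that exactly $k-3$ vertices are enumerated and two are handled by the product, with no off-by-one blow-up for small $k$) is the delicate part, and is where the interplay between sparsity, the degree threshold, and maximality must be combined carefully; the independence constraint itself is comparatively benign, being a pairwise condition that can be encoded directly into the matrices built over the $O(\sqrt m)$-size pools.
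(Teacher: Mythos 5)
Your overall skeleton matches the paper's: pivot on one of the $O(m/n)$ heavy vertices, use independence to delete $N[v^\ast]$ and self-reduce to a $(k-1)$-instance, and aim for a total of $\tfrac{m}{n}\cdot m^{(k-3)/2}\cdot m^{\omega/2}$. The genuine gap is in your final step. You assert that the last two solution vertices can be confined to candidate pools of size $O(\sqrt m)$ and that the residual undominated set has size $O(\sqrt m)$, so that a $\sqrt m\times\sqrt m\times\sqrt m$ product suffices. Neither holds. In your ``all undominated vertices have degree $\ge\sqrt m$'' case, the undominated set indeed has size $O(\sqrt m)$, but the vertices that dominate it are drawn from the union of their closed neighbourhoods, which can have size $\Theta(\min(n,m))$; worse, one of the two remaining solution vertices may dominate nothing new at all and hence range over essentially all of $V$. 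In the other case (you keep finding low-degree undominated vertices), the pools are fine but the residual undominated set never shrinks to $O(\sqrt m)$. The failure is already visible at $k=3$: after deleting the pivot's neighbourhood, up to $n-n/k$ vertices remain undominated and the two remaining solution vertices are essentially unrestricted, so your product has dimensions more like $n\times\sqrt m\times n$, costing $n^{2}$ rather than $m^{\omega/2}$ --- too slow whenever $m\ll n^2$. You also leave unaddressed what happens when the high-degree case triggers while more than two solution vertices remain to be found.

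The missing ingredient is the randomized $m^{\omega/2+o(1)}$ algorithm for \emph{listing all dominating sets of size $2$} (Lemma~\ref{lemma:2ds-listing}, the Bloom-filter technique of~\cite{fischer2024effect}), which the paper uses as the $k=2$ base case; this is precisely what handles the two remaining vertices without any $O(\sqrt m)$ pool, and it cannot be replaced by a plain Boolean matrix product. The paper's recursion also differs from your enumeration in the intermediate levels: it recurses on the heavy vertices of the \emph{shrinking subgraph} and splits on whether the current vertex count $n'$ is below $\sqrt m$ (in which case it falls back to brute-force listing of all $(k-1)$-dominating sets in time $(n')^{k-1}<m^{(k-1)/2}$) or at least $\sqrt m$ (in which case the number of heavy vertices is at most $m/n'\le\sqrt m$). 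Your min-degree-undominated-vertex branching could probably be made to serve the same purpose for the intermediate $k-3$ vertices, but without the base-case lemma the claimed running time is not established.
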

So far we mentioned the full classification of three structurally very different choices of patterns $H$, that all fall into one of the two extreme regimes of being either as hard as the general $k$-Pattern Domination problem, or being as easy as any pattern can be. 
This raises a question if we could provide a fine-grained dichotomy for this class of problems by showing that for each pattern $H$, the conditionally optimal running time for solving $H$-Dominating Set problem is either $mn^{k-2\pm o(1)}$, or $\big(\frac{m^{(k+1)/2}}{n}\big)^{1+o(1)}$.

As our last contribution, we answer this question negatively (assuming $k$-OV hypothesis), by tightly classifying the $k$-Induced Matching Domination problem that lies in neither of those two regimes, unless $k$-OV hypothesis fails.
More precisely, we show that this problem can be solved in running time $m^{\frac{k}{2}+o(1)}$ for all sufficiently large $k$, and provide a simple matching conditional lower bound by addapting the reduction from \autoref{prop:h-dom-lb}.

\section{Preliminaries}
\label{sec:prelims}
Let $n$ be a positive integer. 
We denote by $[n]$ the set $\{1,\dots, n\}$. 
If $S$ is an $n$-element set and $0\leq k\leq n$ is an integer, then $\binom{S}{k}$ denotes the set of all $k$-element subsets of $S$.


Let $\omega<2.3716$ \cite{fastmatrixmultiplication} denote the optimal exponent of multiplying two $n\times n$ matrices and $\MM(a,b,c)$ the time required to multiply two rectangular matrices of dimensions $a\times b$ and $b\times c$.
Note that if $\omega=2$, $\MM(a,b,c) \leq (ab + ac + bc)^{1+o(1)}$.
Let $\mathbb Z_{\leq d}[X]$ denote the set of all polynomials of degree at most $d$ whose coefficients are integers.
For a polynomial $f\in \mathbb Z_{\leq d}[X]$, the \emph{(maximum) degree} of $f$ is the largest exponent $r$ such that the term $X^r$ has a non-zero coefficient in $f$.
Symmetrically, the \emph{minimum degree} of $f$ denotes the smallest exponent $r$ such that the term $X^r$ has a non-zero coefficient in $f$.

For a graph $G$ and a vertex $v\in V(G)$, the \emph{neighbourhood} of $v$ is the set of vertices adjacent to $v$, denoted $N(v)$. 
The \emph{closed neighbourhood} of $v$, denoted $N[v]$ is defined as $N[v]:=N(v)\cup\{v\}$.
For the subset $S\subseteq V(G)$, we denote $N(S) := \bigcup_{v\in S} N(v)$ (resp. $N[S]:=\bigcup_{v\in S} N[v]$).
The \emph{degree} of $v$ denotes the size of its neighbourhood ($\deg(v) = |N(v)|$).
We further denote by $\deg^*(v)$ the size of the closed neighbourhood of the vertex $v$ ($\deg^*(v) = |N[v]|$). 
For any two vertices $u,v\in V(G)$, we denote by $d_G(u,v)$ the length of the shortest path between $u$ and $v$ in $G$.
The \emph{clique} (resp. \emph{independent set}) in a graph $G$ is a set of pairwise adjacent (resp. nonadjacent) vertices.
The \emph{Unbalanced $k$-Clique} problem, denoted \emph{$k$-Clique($\alpha_1,\dots, \alpha_k$)} is given a $k$-partite graph with part $i$ consisting of $\alpha_i$ vertices to decide if $G$ has a clique of size $k$. 

\section{$r$-Multiple $k$-Dominating Set} \label{sec:multiple-kds}
In this section, we provide the algorithms for the $r$-Multiple $k$-Dominating Set in sparse graphs. In particular, we prove a refined version of Theorem \ref{th:rleqk-2:algorithm-simple} and prove Theorem \ref{theorem:reduction-to-unbalanced-clique}.
We also show that the first algorithm cannot be significantly improved without violating some standard fine-grained hypotheses, by proving Theorem \ref{th:rleqk-2:lb} and finally show a conditional lower bound for the second algorithm.
\subsection{Algorithms}\label{section:mutliple-kds-algoritms}
All of our algorithms leverage the following simple lemma.
\begin{restatable}{lemma}{lemmaheavynodes}\label{lemma:heavy-nodes}
        For any fixed $k\geq 2$ and $r\le k$, any $r$-Multiple Dominating Set of size $k$ contains at least $r$ vertices $v_1,\dots, v_r$ with $\deg^*(v_i) \geq \frac{n}{k}$.
\end{restatable}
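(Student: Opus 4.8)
The plan is to argue by contradiction using a simple double-counting of closed neighbourhoods, after splitting the solution into ``heavy'' and ``light'' vertices. Fix an $r$-multiple dominating set $S=\{x_1,\dots,x_k\}$ and call a vertex $x_i$ \emph{heavy} if $\deg^*(x_i)\geq n/k$ and \emph{light} otherwise. Let $H$ denote the set of heavy vertices and $L=S\setminus H$ the light ones. Suppose, for contradiction, that $|H|\leq r-1$, so that $|L|\geq k-r+1\geq 1$; the goal is to derive a contradiction with the domination requirement.

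The first key step is to observe that every vertex outside $S$ must have a neighbour in $L$. Indeed, each $v\in V\setminus S$ has at least $r$ neighbours in $S$, but at most $|H|\leq r-1$ of these can lie in $H$, so at least one neighbour lies in $L$. Hence $V\setminus S\subseteq N(L)\subseteq N[L]$, and since trivially $L\subseteq N[L]$, the closed neighbourhood of $L$ contains $L\cup(V\setminus S)$. As $L\subseteq S$ is disjoint from $V\setminus S$, this union has size $|L|+(n-k)$, giving the lower bound $|N[L]|\geq |L|+(n-k)$. The matching upper bound comes from the light property via subadditivity: $|N[L]|\leq\sum_{x\in L}\deg^*(x)<|L|\cdot\frac{n}{k}$, since every light vertex satisfies $\deg^*(x)<n/k$ (and $|L|\geq 1$). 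Combining the two bounds yields $|L|+(n-k)<|L|\cdot\frac{n}{k}$, i.e. $(n-k)<|L|\cdot\frac{n-k}{k}$. For $n>k$ we may divide by $n-k>0$ to obtain $|L|>k$, contradicting $|L|\leq k$. The boundary case $n=k$ is immediate: then $V\setminus S=\emptyset$, $n/k=1$, and every vertex $v$ satisfies $\deg^*(v)\geq 1$ (as $v\in N[v]$), so all $k\geq r$ vertices are heavy.

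The main obstacle to watch for is the \emph{choice} of counting argument, rather than any single hard step. The naive approach of lower-bounding the total sum $\sum_{i=1}^k\deg^*(x_i)\geq r(n-k)+k$ and comparing it against the trivial per-vertex cap $\deg^*(x_i)\leq n$ turns out to be too lossy---it only establishes the claim in the regime $n>k^2$. Restricting attention to the light vertices $L$, and exploiting that the entire uncovered part $V\setminus S$ must in fact be dominated \emph{by $L$ alone}, is precisely what tightens the estimate enough to cover all $n>k$. Beyond that, the only care needed is verifying the disjointness used in the lower bound $|N[L]|\geq |L|+(n-k)$ and treating the $n=k$ edge case, both of which are straightforward.
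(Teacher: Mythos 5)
Your proof is correct, and it takes a genuinely different route from the paper's. The paper's proof is exactly the global degree-sum argument you describe as ``naive'': it orders the solution vertices by degree, assumes for contradiction that $v_r,\dots,v_k$ all have $\deg^*(v_i)\le \frac{n}{k}-1$, and compares the resulting upper bound $(r-1)n+(k-r+1)(\frac{n}{k}-1)$ on $\sum_{i=1}^k \deg^*(v_i)$ against a lower bound of $rn$ claimed to follow ``from the definition''. Your diagnosis of that approach is essentially on target: under the paper's stated definition of $r$-multiple domination (only vertices \emph{outside} $S$ need $r$ neighbours in $S$), the honest lower bound is $r(n-k)+k$ rather than $rn$ --- the bound $rn$ is what one gets for the $r$-tuple variant, where vertices of $S$ must also be covered $r$ times --- and with the honest bound the global comparison only closes once $n$ is on the order of $k^2$. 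Your localized argument, which uses that the light vertices $L$ alone must dominate all of $V\setminus S$ and plays $|N[L]|\ge |L|+(n-k)$ against $|N[L]|<|L|\cdot\frac{n}{k}$, avoids spending the trivial cap $\deg^*(v)\le n$ on the heavy vertices and establishes the lemma for every $n>k$, with $n=k$ handled directly. Since $k$ is a fixed constant the difference is immaterial for the paper's algorithmic applications (constant-size graphs can be brute-forced), but as a proof of the lemma as stated yours is the more robust of the two.
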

The proof of the lemma can be found in Appendix \ref{appendix-mult-dom}.
We call a vertex $v$ satisfying $\deg^*(v)\geq \frac{n}{k}$ a \emph{heavy vertex} and we let $\mathcal{H}$ denote the set of all heavy vertices.
A simple counting argument shows that there are at most $\bigO(\frac{m}{n})$ heavy vertices.

We distinguish between two cases based on the dependence of $r$ and $k$, namely $r\leq k-2$ and $r=k-1$
(note that if $r=k$, the problem becomes trivial), and in both cases we are able to show polynomial improvements in the sparse graphs.
Let us first consider the case $r\leq k-2$.
\subparagraph{Case $r\leq k-2$.}
To construct the desired algorithm, we modify the approach of~\cite{eisenbrand2004complexity,fischer2024effect} by employing polynomials to not only determine if a vertex is dominated by a set $D$, but also count how many vertices from $D$ are in its closed neighbourhood. We obtain the following refined version of Theorem~\ref{th:rleqk-2:algorithm-simple}.

\begin{restatable}{theorem}{ThRSmallAlgo}\label{th:rleqk-2:algorithm}
    For any fixed $k\geq 3$ and $r\leq k-2$, we can solve the $r$-Multiple $k$-Dominating Set in time
    \[
    \MM\Big(n^{\lceil \frac{k-r}{2}\rceil}\cdot\big(\tfrac{m}{n}\big)^{\lfloor\frac{r}{2}\rfloor}, n, n^{\lfloor \frac{k-r}{2}\rfloor}\cdot\big(\tfrac{m}{n}\big)^{\lceil\frac{r}{2}\rceil}\Big).
    \]
    If $\omega=2$, or if $k$ is sufficiently large, this running time becomes $(m/n)^r n^{k-r+o(1)}$. 
\end{restatable}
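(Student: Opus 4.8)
The plan is to combine the heavy-node structure of Lemma~\ref{lemma:heavy-nodes} with the polynomial matrix-multiplication technique of \cite{eisenbrand2004complexity,fischer2024effect}, now refined to \emph{count} domination multiplicities rather than merely detect domination. First I would observe that it suffices to search for solutions of size exactly $k$: any $r$-multiple dominating set of size $k' < k$ can be padded with arbitrary vertices up to size $k$ without destroying the property, since adding a vertex to $S$ only raises the neighbour counts of the remaining vertices and drops any requirement on the added vertex. By Lemma~\ref{lemma:heavy-nodes} every such solution contains at least $r$ heavy vertices, and there are only $\bigO(m/n)$ heavy vertices in total. I would therefore look for $S$ as a union of $r$ designated heavy vertices and $k-r$ unrestricted vertices; since the unrestricted slots may themselves be filled by heavy vertices, no solution is lost, and every configuration found is genuine.

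To exploit fast matrix multiplication, split $S$ into two halves $A$ and $B$: place $\lceil \tfrac{k-r}{2}\rceil$ unrestricted and $\lfloor \tfrac{r}{2}\rfloor$ heavy vertices into $A$, and the remaining $\lfloor \tfrac{k-r}{2}\rfloor$ unrestricted and $\lceil \tfrac{r}{2}\rceil$ heavy vertices into $B$. Enumerating half-solutions yields $n^{\lceil (k-r)/2\rceil}(m/n)^{\lfloor r/2\rfloor}$ choices for $A$ and $n^{\lfloor (k-r)/2\rfloor}(m/n)^{\lceil r/2\rceil}$ choices for $B$, which are exactly the two outer dimensions in the claimed bound, while the inner dimension $n$ indexes the vertices that must be dominated.

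The key step is the polynomial encoding. For a formal variable $X$, set $c_A(v) = |N(v)\cap A|$ and define $L[A,v] = X^{c_A(v) + r\cdot[v\in A]}$ and, symmetrically, $R[v,B] = X^{c_B(v) + r\cdot[v\in B]}$. Working in the ring $\mathbb Z[X]/(X^r)$, each product entry is $(L\cdot R)[A,B] = \sum_v X^{c_A(v)+c_B(v) + r\cdot[v\in S]} \bmod X^r$, which collapses every vertex that is either contained in $S$ or dominated at least $r$ times to $0$, and records each under-dominated vertex $v\notin S$ by the monomial $X^{c_A(v)+c_B(v)}$ of degree $<r$. Hence $(A,B)$ forms a valid $r$-multiple dominating set precisely when this entry is the zero polynomial, and it remains to scan the product for a zero entry. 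Since all exponents are bounded by $2k=\bigO(1)$, arithmetic in $\mathbb Z[X]/(X^r)$ costs only a constant factor over integer arithmetic (e.g.\ by Kronecker substitution, with entries staying $\bigO(\log n)$-bit), so the whole computation reduces to $\bigO(1)$ rectangular integer matrix multiplications of the stated dimensions.

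Finally I would read off the simplified bound. When $\omega = 2$ we have $\MM(a,n,c)\le (an+ac+nc)^{1+o(1)}$; because $r\le k-2$ forces both $\lceil\tfrac{k-r}{2}\rceil\ge 1$ and $\lfloor\tfrac{k-r}{2}\rfloor\ge 1$, both outer dimensions are at least $n$, so the term $ac = n^{k-r}(m/n)^r$ dominates and yields $(m/n)^r n^{k-r+o(1)}$; for general $\omega$ but large $k$ the inner dimension $n$ is negligible against the outer dimensions $\approx n^{k/2}$, and unbalanced matrix multiplication again achieves $(ac)^{1+o(1)}$. The main obstacle I anticipate is correctness of the counting under vertex collisions: the encoding double-counts any vertex lying in both $A$ and $B$, which could inflate $c_A(v)+c_B(v)$ beyond the true multiplicity $|N(v)\cap S|$ and produce a false positive. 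I would resolve this by enforcing disjointness of the two halves via color-coding (or inclusion--exclusion), which costs only a $2^{\bigO(k)}=\bigO(1)$ factor and guarantees $c_A(v)+c_B(v)=|N(v)\cap S|$ for the vertices that matter.
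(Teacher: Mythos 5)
Your proposal is correct and follows essentially the same route as the paper: restricting half-solutions via the heavy-vertex lemma and using polynomial-valued matrix multiplication to track domination multiplicities. The only cosmetic differences are that the paper computes the full degree-$\leq k$ polynomial and checks that its minimum degree is at least $r$ (rather than truncating mod $X^r$), and it handles overlapping halves by simply discarding non-disjoint pairs $(S,T)$ during the final scan instead of color-coding.
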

\begin{proof}
    Let $\mathcal{S}$ be the set consisting of all subsets of $V$ of size $\lceil \frac{k-r}{2}\rceil + \lfloor \frac{r}{2} \rfloor$ that contain at least $\lfloor \frac{r}{2} \rfloor$ heavy vertices and $\mathcal{T}$ be the set consisting of all subsets of $V$ of size $\lfloor \frac{k-r}{2}\rfloor + \lceil \frac{r}{2}\rceil$ that contain at least $\lceil \frac{r}{2} \rceil$ heavy vertices.
    By Lemma \ref{lemma:heavy-nodes}, any potential $r$-Multiple Dominating Set of size at most $k$ in $G$ can be written as a union of two elements $S\in \mathcal{S}, T\in \mathcal{T}$. Moreover, as argued above, we can bound the size of $\mathcal{S}$ and $\mathcal{T}$ as $|\mathcal{S}|\leq \bigO(n^{\lceil \frac{k-r}{2}\rceil}\cdot\big(\tfrac{m}{n}\big)^{\lfloor\frac{r}{2}\rfloor})$ and $|\mathcal{T}|\leq \bigO( n^{\lfloor \frac{k-r}{2}\rfloor}\cdot\big(\tfrac{m}{n}\big)^{\lceil\frac{r}{2}\rceil})$.
    We now construct the matrices $A$ and $B$ as follows.
    Let the rows of $A$ be indexed by $\mathcal{S}$ and columns by $V$ and set the entry $A[S,v]$ to $x^c\in \mathbb Z_{\leq k}[X]$ if and only if there are exactly $c$ elements in $S$ that are in the closed neighbourhood of $v$.
    Similarly let $B$ have columns indexed by $\mathcal{T}$ and rows by $V$ and set the entry $B[v,T]$ to $x^c\in \mathbb Z[X]$ if and only if there are exactly $c$ elements in $T$ that are in the closed neighbourhood of $v$.
    
    Let $C:=A\cdot B$.
    We observe that if $S,T$ are disjoint, then the coefficient of $x^c$ in $C[S,T]$ counts the number of vertices in $V$ that are dominated by exactly $c$ vertices from $S\cup T$.
    Hence, it suffices to verify if there exists a pair $S\in \mathcal{S}, T\in \mathcal{T}$ that are disjoint, such that the minimum degree of the polynomial $C[S,T]$ is $\geq r$.
    Moreover, it is straightforward to see that the degree of any entry in $C$ is bounded by $k=\bigO(1)$, and hence we can compute $C$ in the desired running time by applying the fastest matrix multiplication algorithm over the ring $\mathbb Z_{\leq k}[X]$. 
    The claimed running time follows.
\end{proof}
\subparagraph{Case $r=k-1$.}
By running the same algorithm as above, we can achieve a running time of $\bigO\big((\frac{m}{n})^{k-1+o(1)}n + n^2\big)$ (assuming $\omega = 2$, or sufficiently large $k$). 
%
However, perhaps surprisingly we can beat this running time significantly for larger $k$.
In fact, we proceed to show that for each $k\geq 3$, we can reduce the $(k-1)$-Multiple $k$-Dominating Set problem to an instance of $k$-Clique$(\frac{m}{n},\dots,\frac{m}{n},n)$.
To achieve that, we leverage the following simple observation.
\begin{observation}\label{obs:reduction-to-k-clique}
    For any fixed $k\geq 2$, let $x_1,\dots,x_k$ be any $(k-1)$-Multiple $k$-Dominating Set.
    Then for each $i\neq j$, vertices $x_i,x_j$ form a dominating set.
\end{observation}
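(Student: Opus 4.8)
The plan is to prove the slightly cleaner statement that \emph{every} vertex $v\in V$ lies in the closed neighbourhood of $\{x_i,x_j\}$, which is exactly the assertion that $\{x_i,x_j\}$ is a dominating set. I would fix an arbitrary pair $i\neq j$ together with an arbitrary vertex $v\in V(G)$. If $v\in\{x_i,x_j\}$ the membership $v\in N[\{x_i,x_j\}]$ is immediate, so the interesting case is $v\notin\{x_i,x_j\}$, where the goal reduces to showing that $v$ is adjacent to at least one of $x_i,x_j$.

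The heart of the argument is a one-line pigeonhole count on how many of the $k$ solution vertices can fail to cover $v$. Writing $S=\{x_1,\dots,x_k\}$, the defining property of a $(k-1)$-Multiple $k$-Dominating Set guarantees that at least $k-1$ of the $k$ vertices of $S$ lie in the closed neighbourhood $N[v]$; equivalently, \emph{at most one} vertex $x_\ell\in S$ lies outside $N[v]$. Since $x_i$ and $x_j$ are two distinct elements of $S$, they cannot both be this single missed vertex, so at least one of them belongs to $N[v]$. Because $v\notin\{x_i,x_j\}$, such membership forces genuine adjacency, so $v$ is dominated by $\{x_i,x_j\}$. Letting $v$ range over all of $V$ then yields that $\{x_i,x_j\}$ dominates $G$, as claimed.

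The only point that needs care — and the step I expect to be the main (if mild) obstacle — is the case $v=x_\ell$ with $\ell\neq i,j$, i.e.\ a solution vertex other than the chosen pair. Here the conclusion requires that the $(k-1)$-fold covering condition apply to vertices of $S$ as well, which is precisely why I would state the domination property through the closed neighbourhood $N[v]$, consistently with Lemma~\ref{lemma:heavy-nodes} and with the counting carried out in the proof of Theorem~\ref{th:rleqk-2:algorithm}, where the product $C[S,T]$ records, for \emph{every} $v\in V$, the number of solution vertices contained in $N[v]$. With that formulation the pigeonhole count above treats the case $v\in S$ uniformly with the case $v\notin S$, so no separate analysis of edges inside $S$ is ever required.
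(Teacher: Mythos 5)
Your pigeonhole argument is correct and is exactly the intended justification: the paper states this as an observation with no proof at all, and ``at most one of the $k$ solution vertices misses $N[v]$, so any two of them cover $v$'' is the whole content. You are also right to flag the case $v=x_\ell\in S$: under the paper's literal definition of an $r$-multiple dominating set (which only constrains vertices in $V\setminus S$) the claim would fail for solution vertices not adjacent to the chosen pair, and your fix---requiring $|N[v]\cap S|\ge k-1$ for \emph{every} $v$, i.e.\ the $r$-tuple/closed-neighbourhood formulation---is precisely the convention the paper's algorithms (and Lemma~\ref{lemma:heavy-nodes}) actually use, so your proof is consistent with how the observation is applied.
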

Given a graph $G$, we can exploit this observation to preprocess the graph as follows.
Recall that $\mathcal{H}$ denotes the set of heavy vertices in our graph and by \autoref{lemma:heavy-nodes}, any $(k-1)$ multiple dominating set of size $k$ contains at least $(k-1)$ heavy vertices.
Let $V_1,\dots, V_{k-1}$ be copies of $\mathcal{H}$ and $V_k$ a copy of $V(G)$. 
Let $G' = (V',E')$, where $V' = V_1\cup\dots \cup V_k$ and for any pair $v_i\in V_i$, $v_j\in V_j$ (for $i\neq j$), add an edge between $v_i,v_j$ if and only if they form a dominating set in~$G$.
\begin{lemma}
    Let $G'$ be constructed as above.
    Then vertices $v_1,\dots, v_k$ form a clique in $G'$ if and only if they form a $(k-1)$-Multiple $k$-Dominating Set in $G$. 
\end{lemma}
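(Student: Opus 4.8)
The plan is to prove the biconditional by unfolding both sides of the equivalence and matching them through the edge-construction of $G'$. Recall that $G'$ is $k$-partite with parts $V_1, \dots, V_{k-1}$ each a copy of the heavy-vertex set $\mathcal{H}$ and $V_k$ a copy of $V(G)$, where an edge is placed between $v_i \in V_i$ and $v_j \in V_j$ (for $i \neq j$) exactly when $\{v_i, v_j\}$ is a dominating set of $G$. Since $G'$ is $k$-partite, any clique $v_1, \dots, v_k$ of size $k$ must use exactly one vertex from each part, so in particular $v_1, \dots, v_{k-1}$ are heavy vertices and $v_k$ is an arbitrary vertex of $G$; I would state this structural fact first.

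For the forward direction, suppose $v_1, \dots, v_k$ form a clique in $G'$. Then for every pair $i \neq j$ the edge $v_iv_j$ is present, which by construction means $\{v_i, v_j\}$ dominates $G$. I claim this forces $\{v_1, \dots, v_k\}$ to be a $(k-1)$-multiple dominating set. Fix any vertex $u \in V(G)$; I must show $u$ has at least $k-1$ neighbours (in the closed-neighbourhood sense consistent with \autoref{lemma:heavy-nodes} and \autoref{obs:reduction-to-k-clique}) among $v_1, \dots, v_k$. The key counting step: if $u$ were dominated by at most $k-2$ of the $v_i$, then at least two of them, say $v_i$ and $v_j$, would both fail to dominate $u$; but then $\{v_i, v_j\}$ would not dominate $G$ (since $u$ is dominated by neither), contradicting that $v_iv_j$ is an edge of $G'$. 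Hence $u$ is dominated by at least $k-1$ of the chosen vertices, and since $u$ was arbitrary, $\{v_1,\dots,v_k\}$ is indeed a $(k-1)$-multiple dominating set of size $k$.

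For the reverse direction, suppose $v_1, \dots, v_k$ form a $(k-1)$-multiple $k$-dominating set in $G$. By \autoref{lemma:heavy-nodes} at least $k-1$ of these vertices are heavy, so after relabelling I may assume $v_1, \dots, v_{k-1} \in \mathcal{H}$, placing them legitimately in $V_1, \dots, V_{k-1}$ and $v_k$ in $V_k$. By \autoref{obs:reduction-to-k-clique}, every pair $v_i, v_j$ with $i \neq j$ forms a dominating set of $G$, which is precisely the condition under which $G'$ contains the edge $v_iv_j$. Therefore all $\binom{k}{2}$ edges among $v_1, \dots, v_k$ are present in $G'$, so they form a clique.

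The main obstacle I anticipate is making the forward direction's counting argument airtight with respect to the exact domination convention: the reduction relies on the contrapositive that ``$u$ missed by two solution vertices'' breaks a pairwise-dominating edge, and I must ensure the notion of ``dominating'' (open vs.\ closed neighbourhood, and whether solution vertices dominate themselves) is used consistently with \autoref{obs:reduction-to-k-clique}, so that a vertex dominated by fewer than $k-1$ of the $v_i$ genuinely yields a non-dominating pair. A secondary care point is the relabelling in the reverse direction: I should verify that the at-least-$(k-1)$ heavy vertices can always be assigned injectively to the $k-1$ heavy parts and the remaining vertex to $V_k$, which is immediate since the parts $V_1,\dots,V_{k-1}$ are identical copies of $\mathcal{H}$.
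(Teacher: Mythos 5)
Your proposal is correct and follows essentially the same argument as the paper: the forward direction is the identical counting/contrapositive step (a vertex dominated by at most $k-2$ solution vertices yields a pair that is not a dominating set, hence a missing edge), and the reverse direction likewise invokes \autoref{obs:reduction-to-k-clique}. Your additional care about placing the heavy vertices into the parts $V_1,\dots,V_{k-1}$ via \autoref{lemma:heavy-nodes} is implicit in the paper's construction but is a reasonable point to make explicit.
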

\begin{proof}
    Assume first that some vertices $v_1,\dots, v_k$ form a clique in $G'$. We will call these vertices \emph{solution vertices}. 
    Take any vertex $w\in V(G)$ and assume that it is dominated by at most $k-2$ solution vertices.
    In particular, this means $w$ is not dominated by some pair of solution vertices $v_i, v_j$.
    However, this means that $v_i,v_j$ is not a dominating set and consequently, there is no edge between $v_i$ and $v_j$ in $G'$, contradicting that the solution vertices form a clique.
    The converse follows directly from \autoref{obs:reduction-to-k-clique}.
\end{proof}
By using the approach from \cite{fischer2024effect}, we can list all dominating sets of size $2$ in time $m^{\omega/2 + o(1)}$.
\begin{lemma} \cite{fischer2024effect}\label{lemma:2ds-listing}
    Given a graph $G$ with $n$ vertices and $m$ edges, there exists a randomized algorithm listing all dominating sets of size $2$ in time $m^{\omega/2+o(1)}$.
\end{lemma}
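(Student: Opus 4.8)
The plan is to reformulate domination as a threshold condition on neighbourhood sizes, bound the number of solutions, and then recover them through a single rectangular matrix multiplication whose dimensions are controlled by a degree split combined with a randomized dimension reduction.

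First I would set up the combinatorial skeleton. A pair $\{u,v\}$ dominates iff $N[u]\cup N[v]=V$, equivalently $\overline{N[u]}\cap\overline{N[v]}=\emptyset$, equivalently (by inclusion--exclusion) $|N[u]\cap N[v]| = \deg^*(u)+\deg^*(v)-n$. Thus the entire task reduces to computing, for the relevant pairs, the single integer $|N[u]\cap N[v]|$ and comparing it against a pair-dependent threshold. Two easy bounds make the target time plausible as an \emph{output} bound: by \autoref{lemma:heavy-nodes} (with $r=1$, $k=2$) every dominating pair contains a heavy vertex, of which there are only $O(m/n)$; and, choosing a minimum-degree vertex $x$, every dominating pair must meet $N[x]$, so there are at most $|N[x]|\cdot n = O(m)$ dominating pairs in total. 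Since $\omega\ge 2$, the desired bound $m^{\omega/2+o(1)}$ always dominates this output size, so an output-sensitive approach is not a priori ruled out.

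Second, I would reduce to matrix multiplication after a high/low degree split at threshold $\sqrt m$. In the interesting sparse regime ($m\le n^2/4$; otherwise a dense $n^{\omega}\le m^{\omega/2}$ algorithm already suffices) the heavy endpoint has degree $\ge n/2\ge\sqrt m$, so every solution has a high-degree endpoint, and there are at most $O(\sqrt m)$ high-degree vertices $\mathcal D$. For pairs in which both endpoints are high-degree, the intersection counts $|N[u]\cap N[v]|$ are exactly the entries of $A_{\mathcal D}A_{\mathcal D}^{\top}$, where $A_{\mathcal D}$ has rows indexed by $\mathcal D$. Naively this product has inner dimension $n$, which is too expensive; the whole point is to bring the inner dimension down to $\sqrt m$ so that the product costs $\MM(\sqrt m,\sqrt m,\sqrt m)=m^{\omega/2+o(1)}$. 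I would further split the common neighbour $w$ by degree: high-degree $w$ contribute an inner dimension of only $O(\sqrt m)$ and are handled directly by fast matrix multiplication, while pairs having a low-degree endpoint satisfy $|\overline{N[h]}|\le\sqrt m$ and can be enumerated by a direct scan.

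The crux --- and the step I expect to be the main obstacle --- is controlling the contribution of low-degree common neighbours (equivalently, the $n$-sized inner dimension) without paying $n$. This is precisely what the \emph{Bloom-filter inspired} randomized technique of \cite{fischer2024effect} is designed for: one hashes the $n$-element universe of potentially-dominated vertices into $O(\sqrt m)$ buckets using several independent hash functions, collapsing the inner dimension to $O(\sqrt m)$ at the cost of introducing collisions. The delicate point is that domination is an \emph{exact} threshold condition, not an approximate or purely linear one, so the sketch can only be used to produce a near-linear-size list of candidate pairs, exploiting that bucket-disjointness is a one-sided-error certificate of genuine disjointness and amplifying over $O(\log n)$ independent hashes. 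Each surviving candidate is then checked exactly against the inclusion--exclusion identity above in near-constant amortized time. The technical heart is to prove that, with high probability, the candidate list simultaneously (i) contains every true dominating pair and (ii) has size $m^{1+o(1)}$, so that the final exact verification does not dominate the running time; this is exactly where the heavy-vertex structure and the independence of the hash functions must be combined quantitatively.
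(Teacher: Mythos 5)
You should first note that the paper does not actually prove this lemma---it is imported verbatim from \cite{fischer2024effect}---so the comparison is between your reconstruction and what such a proof would have to establish. Your structural observations are sound and do belong to the landscape of that result: the reformulation of domination as $\overline{N[u]}\cap\overline{N[v]}=\emptyset$, the $\bigO(m)$ bound on the number of solutions via a minimum-degree vertex, the fact that every solution contains a vertex of degree $\geq n/2\geq\sqrt m$ once $m\le n^2/4$, and the identification of the inner dimension $n$ as the obstacle to a $\MM(\sqrt m,\sqrt m,\sqrt m)$ bound. The problem is that the one step you yourself flag as the crux is both left unproven and, as described, quantitatively unsound. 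You propose to hash the $n$ vertices to be dominated into $\bigO(\sqrt m)$ buckets and to use bucket-disjointness of the images of $\overline{N[u]}$ and $\overline{N[v]}$ as a one-sided certificate, amplified over $\bigO(\log n)$ independent hashes. Soundness is fine, but completeness fails badly: for a genuine dominating pair the two complement neighbourhoods are disjoint yet only satisfy $|\overline{N[u]}|+|\overline{N[v]}|\le n$, so each may have size $\Theta(n)$. With $\bigO(\sqrt m)$ buckets, the probability that two disjoint sets of size $\gg \sqrt m\log n$ occupy disjoint bucket sets is exponentially small (with high probability each set already hits every bucket), and no polylogarithmic number of repetitions can repair this. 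On such instances your candidate list would miss true solutions with overwhelming probability.

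In other words, the actual content of the lemma---whatever additional structure the proof in \cite{fischer2024effect} exploits to make the sketching step simultaneously complete and output-bounded---is exactly the part your proposal defers to "(i) and (ii)". As it stands the write-up is a plausible frame around a hole, and the specific mechanism filling the hole is refuted by the size estimate above. Two smaller issues: only one endpoint of a dominating pair is guaranteed to be high-degree, so the product $A_{\mathcal{D}}A_{\mathcal{D}}^{\top}$ over high-degree vertices does not cover all solution pairs; and the "direct scan" you invoke for pairs with a low-degree endpoint is not obviously within the $m^{\omega/2+o(1)}$ budget, since identifying the partners of a heavy vertex $h$ from $\overline{N[h]}$ can naively cost up to $\sqrt m\cdot n$ per heavy vertex.
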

This gives us all the tools necessary to prove \autoref{theorem:reduction-to-unbalanced-clique}.
\TheoremReductionToUnbalancedClique*
\begin{proof}
    Note that it is sufficient to show that we can construct the graph $G'$ as defined above in time $m^{\frac{\omega}{2}+o(1)}$.
    Given a graph $G$, let $V_i,V_j$ be two arbitrary parts of $G'$ as described above. 
    Using the algorithm from \autoref{lemma:2ds-listing}, we can construct all the edges between the two parts in time at most $m^{\frac{\omega}{2} + o(1)}$ with high probability. 
    We repeat this for each pair $1\leq i<j\leq k$.
\end{proof}
Interestingly, this procedure also gives a polynomial improvement over brute-force in dense graphs.
\begin{corollary}
    We can solve $(k-1)$-Multiple $k$-Dominating Set in time $\bigO(n^{\omega \frac{k}{3}+1})$.
\end{corollary}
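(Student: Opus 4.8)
The plan is to specialize the reduction of \autoref{theorem:reduction-to-unbalanced-clique} to the dense regime $m=\Theta(n^2)$ and then invoke a fast $k$-clique detection routine on the auxiliary graph. First I would observe that in a dense graph the set of heavy vertices has size $|\mathcal{H}| = \bigO(m/n) = \bigO(n)$, so the $(k-1)$ heavy parts $V_1,\dots,V_{k-1}$ of $G'$ all have size $\bigO(n)$, as does the part $V_k$, which is a copy of $V(G)$. Thus $G'$ is simply a $k$-partite graph on $\bigO(n)$ vertices, and by the lemma preceding \autoref{theorem:reduction-to-unbalanced-clique} it suffices to decide whether $G'$ contains a $k$-clique. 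Building $G'$ costs $m^{\omega/2+o(1)} = n^{\omega + o(1)}$ by \autoref{lemma:2ds-listing}, which is dominated by the target running time for every $k\ge 3$, since $\omega \tfrac{k}{3} + 1 \ge \omega + 1 > \omega$.

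Next I would solve $k$-clique on $G'$ via the Nešetřil–Poljak reduction to triangle detection, taking care to handle $k\not\equiv 0 \pmod 3$ without incurring an unbalanced rectangular matrix product. Writing $k = 3q + \rho$ with $\rho\in\{0,1,2\}$, I would enumerate all $\rho$-tuples of mutually adjacent vertices that could belong to a clique (at most $\bigO(n^\rho)$ of them), and for each such tuple restrict $G'$ to the common neighbourhood of the guessed vertices and search there for a $3q$-clique. The $3q$-clique search partitions the $3q$ clique slots into three blocks of $q$, uses the $\bigO(n^q)$ cliques of size $q$ in each block as super-vertices (two super-vertices being adjacent iff their union is a clique), and detects a triangle among the three super-vertex sets by a single square matrix multiplication of dimension $\bigO(n^q)$, costing $\bigO(n^{\omega q})$. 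The total running time is therefore $\bigO(n^{\rho + \omega q})$.

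It remains to verify that the exponent $\rho + \omega q$ never exceeds $\omega \tfrac{k}{3} + 1$. For $\rho = 0$ it equals $\omega\tfrac{k}{3}$; for $\rho = 1$ it equals $1 + \omega\tfrac{k-1}{3} = \omega\tfrac{k}{3} + 1 - \tfrac{\omega}{3} \le \omega\tfrac{k}{3} + 1$; and for $\rho = 2$ it equals $2 + \omega\tfrac{k-2}{3} = \omega\tfrac{k}{3} + 1 + (1 - \tfrac{2\omega}{3}) \le \omega\tfrac{k}{3} + 1$, where the final inequality uses $\omega\ge 2$. Combining with the $n^{\omega+o(1)}$ construction cost yields the claimed bound $\bigO(n^{\omega \frac{k}{3}+1})$.

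The one mildly delicate point is the third step: a naive balanced split of all $k$ parts into three groups fails when $3\nmid k$, because the associated rectangular matrix multiplication has no convenient closed-form exponent. The guess-the-remainder device circumvents this, and the main thing to check is precisely that the residue $\rho = 2$ case — the worst case for the brute-force factor $n^{\rho}$ — still fits under the additive slack of $1$ in the exponent, which is exactly where $\omega\ge 2$ is invoked.
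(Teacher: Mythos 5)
Your proof is correct and follows the route the paper intends: the corollary is stated without an explicit proof immediately after \autoref{theorem:reduction-to-unbalanced-clique}, and the argument is precisely that every part of $G'$ has size $\bigO(m/n)=\bigO(n)$, so one builds $G'$ in time $m^{\omega/2+o(1)}\le n^{\omega+o(1)}$ and runs the Ne\v{s}et\v{r}il--Poljak-style $k$-clique detection on a graph with $\bigO(n)$ vertices. Your explicit treatment of the residue $k \bmod 3$ via guessing $\rho$ vertices, together with the check that the worst case $\rho=2$ still fits under the additive slack of $1$ in the exponent because $\omega\ge 2$, supplies exactly the detail the paper glosses over in claiming the bound $\bigO(n^{\omega k/3+1})$.
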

\autoref{theorem:reduction-to-unbalanced-clique} gives us a useful way to think about our problem in terms of Vertex Unbalanced $k$-Clique problem.
However, the question arises how to optimally solve this variation of $k$-Clique.
We partially answer this question by providing infinitely many values of $0\leq \gamma\leq 1$, such that if $\frac{m}{n} = \bigO(n^{\gamma})$, then for infinitely many values of $k$, we can solve this problem in time $\big((\frac{m}{n})^{k-1}n\big)^{\frac{\omega}{3}+o(1)}$, which is optimal under $k$-Clique Hypothesis (see \autoref{section:lower-bound} for details).
The idea is to apply the standard technique of grouping the vertices that form smaller cliques into three groups $W_1, W_2, W_3$ of roughly the same size, in such a way that there is a triangle between any three vertices $w_1\in W_1, w_2\in W_2, w_3\in W_3$ if and only if there are vertices $v_1\in V_1,\dots, v_k\in V_k$ that form a $k$-clique.
In order to be able to achieve this tightly, the values $k$ and $\gamma$ need to satisfy certain conditions.
\begin{restatable}{proposition}{PropositionUnbalancedCliqueConditions}
    Let $G=(V_1,\dots, V_k, E)$ be a $k$-partite graph with $|V_1| = n$ and $|V_2|=\dots = |V_{k}| = \bigO(n^\gamma)$ for some $0\leq \gamma\leq 1$. 
    If 
    \begin{enumerate}
        \item $(k-1 + \frac{1}{\gamma})$ is an integer divisible by $3$,\label{item:divisibility-by-3}
        \item $\frac{2}{\gamma}<k-1$, \label{item:bounding-gamma-k}
    \end{enumerate}
    then we can decide if $G$ has a clique of size $k$ in time $(n^{\gamma(k-1)+1})^{\frac{\omega}{3}+o(1)}$.
\end{restatable}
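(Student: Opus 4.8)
The plan is to reduce the given Unbalanced $k$-Clique instance to a single triangle-detection (equivalently, rectangular matrix-multiplication) call, via the standard ``partition the parts into three groups and multiply'' technique. Write $N := n^{\gamma(k-1)+1}$ for the product of all $k$ part sizes (up to constants, the number of candidate $k$-tuples); the target running time $(n^{\gamma(k-1)+1})^{\omega/3+o(1)}$ is exactly $N^{\omega/3+o(1)}$, i.e.\ the cost of multiplying two $N^{1/3}\times N^{1/3}$ matrices. First I would partition $\{1,\dots,k\}$ into three groups $G_1,G_2,G_3$, placing the large part $V_1$ into $G_1$, and show that the two hypotheses let us balance the groups so that the product of part sizes inside each group equals $N^{1/3}=n^{(\gamma(k-1)+1)/3}$.

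Measuring sizes in the exponent of $n$, the part $V_1$ carries weight $1$ and each of the $k-1$ small parts carries weight $\gamma$, for a total weight of $1+\gamma(k-1)$. If $G_1$ receives $a_1$ small parts and $G_2,G_3$ receive $a_2,a_3$ respectively, with $a_1+a_2+a_3=k-1$, then the balance requirement
\[
1+\gamma a_1 \;=\; \gamma a_2 \;=\; \gamma a_3 \;=\; \tfrac{1+\gamma(k-1)}{3}
\]
forces $a_2=a_3=\tfrac{(k-1)+1/\gamma}{3}$ and $a_1=(k-1)-2a_2=\tfrac{(k-1)-2/\gamma}{3}$. Here Condition~1 is precisely what makes $a_2=a_3$ a non-negative integer (note it also forces $1/\gamma\in\mathbb{Z}$), while Condition~2 is precisely what guarantees $a_1>0$; since $a_1=(k-1)-2a_2$ is then automatically an integer, the partition is well defined, and every group has weight exactly $\tfrac{1+\gamma(k-1)}{3}$, i.e.\ at most $N^{1/3}$ tuples per super-part.

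Next I would build the tripartite super-graph $H$ with parts $W_1,W_2,W_3$, where $W_i$ is the set of cliques of $G$ containing exactly one vertex from each part in $G_i$; each $W_i$ is obtained by scanning all $\prod_{j\in G_i}|V_j|\le N^{1/3}$ tuples and discarding the non-cliques. I then place a super-edge between $w_i\in W_i$ and $w_j\in W_j$ iff $w_i\cup w_j$ induces a clique in $G$. Since every one of the $\binom{k}{2}$ pairs of original parts lies either inside a single group (covered by the cliqueness of the $W_i$) or across two groups (covered by the super-edges), a triangle $w_1w_2w_3$ in $H$ corresponds exactly to the $k$-clique $w_1\cup w_2\cup w_3$ in $G$, and conversely. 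Detecting such a triangle reduces to one product of (at most) $N^{1/3}\times N^{1/3}$ matrices, costing $\MM(N^{1/3},N^{1/3},N^{1/3})=N^{\omega/3+o(1)}=(n^{\gamma(k-1)+1})^{\omega/3+o(1)}$; constructing $H$ takes only $\bigO(N^{2/3})$ time, which is dominated since $\omega\ge 2$.

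The main obstacle is entirely the bookkeeping of the middle step: verifying that Conditions~1 and~2 are jointly necessary and sufficient to split the uneven weight profile $(1,\gamma,\dots,\gamma)$ into three integer-sized groups of equal weight $\tfrac{1+\gamma(k-1)}{3}$ --- this is exactly where divisibility by $3$ and the bound $2/\gamma<k-1$ become essential. Everything downstream (the correctness of the clique-to-triangle reduction and the matrix-multiplication running time) is the routine clique-to-triangle argument.
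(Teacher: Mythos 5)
Your proposal is correct and follows essentially the same route as the paper: both place $V_1$ together with $a_1=\frac{(k-1)-2/\gamma}{3}$ small parts in one group and $\beta=\frac{(k-1)+1/\gamma}{3}$ small parts in each of the other two, use Condition~1 for integrality and Condition~2 for positivity of the first group's small-part count, and then run the standard clique-to-triangle reduction with a single $N^{1/3}\times N^{1/3}$ matrix product. The bookkeeping identity $1+\gamma a_1=\gamma a_2$ is exactly the paper's claim $1+\gamma\alpha=\gamma\beta$, so no gap remains.
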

Notice that for each positive integer $p$, by setting $\gamma = \frac{1}{p}$, 
the first condition is satisfied by every $k \equiv 2p + 1 \pmod 3$, and the second condition is satisfied by every sufficiently large $k$ (in particular, for each of the infinitely many such choices of $\gamma$, both conditions can be satisfied by any of the infinitely many choices of $k$). 
For a detailed proof, see Appendix \ref{appendix-mult-dom}.
\subsection{Lower Bound} \label{section:lower-bound}
In this section, we show that the algorithms provided in the previous section are conditionally optimal. 
To do so, we introduce an intermediate problem, \emph{$r$-Multiple $k$-Orthogonal Vectors} defined as follows.
\begin{definition}[$r$-Multiple $k$-Orthogonal Vectors]
    Given sets $A_1,\dots, A_k \subseteq \{0,1\}^d$, determine if there exist vectors $a_1\in A_1, \dots, a_k \in A_k$ such that for each $t\in [d]$ there are pairwise distinct indices $i_1, \dots, i_r\in [k]$ with ${a_{i_1}[t] = \dots = a_{i_r}[t]=0}$.
\end{definition}
We note that when $r=1$, this problem is exactly the $k$-Orthogonal Vectors problem.
We can now adapt the reduction from \cite{fischer2024effect} to show that this problem reduces to a sparse instance of $r$-Multiple $k$-Dominating Set.
We note that we are using the \emph{moderate dimensional} variant of $r$-Multiple $k$-OV problem (i.e. $d = n^\delta$ for some small $\delta>0$).
\begin{lemma}
    For any fixed $k\geq 2$, $1\leq r\leq k-1$, let $A_1,\dots, A_k$ be a given instance of $r$-Multiple $k$-Orthogonal Vectors with $|A_1| = \dots = |A_r| = \bigO(\frac{m}{n})$ (for any $n\leq m\leq n^2$) and $|A_{r+1}| = \dots = |A_k| = n$. We can construct in linear time an equivalent instance of $r$-Multiple $k$-Dominating Set $G$ with $\bigO(n)$ vertices and $\bigO(m + dn)$ edges.
\end{lemma}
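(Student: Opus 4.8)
The plan is to adapt the orthogonal-vectors-to-dominating-set reduction of~\cite{fischer2024effect}, enriching it so that a single coordinate vertex enforces the $r$-fold coverage requirement rather than a single hit. First I would introduce one \emph{selector} vertex for every vector: the selectors of $A_1,\dots,A_r$ form $r$ \emph{small} groups (each of size $\bigO(m/n)$) and those of $A_{r+1},\dots,A_k$ form $k-r$ \emph{large} groups (each of size $n$). For each coordinate $t\in[d]$ I add a \emph{coordinate} vertex $c_t$, joined to the selector of a vector $a$ precisely when $a[t]=0$. Then, if a solution selects exactly one vector per group, the number of neighbours of $c_t$ inside the solution equals the number of selected vectors having a $0$ in coordinate $t$; demanding that every $c_t$ have at least $r$ selected neighbours is \emph{exactly} the $r$-Multiple $k$-OV condition. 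This is the core correspondence, and it contributes the $\bigO(nd)$ selector--coordinate edges.

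The second ingredient guarantees that \emph{every} non-solution vertex, not merely the coordinate vertices, is $r$-dominated, while keeping the edge count at $\bigO(m+dn)$. Here I would make the small-group selectors pairwise adjacent (a clique) and join every large-group selector to every small-group selector. If a solution selects one vector per group, then it contains exactly $r$ small-group selectors, and since every remaining selector is adjacent to \emph{all} small-group selectors, it automatically acquires at least $r$ solution-neighbours. The counts work out: the bipartite join contributes $(k-r)n\cdot\bigO(rm/n)=\bigO(m)$ edges (this is the source of the $m$ term), the clique contributes $\bigO((m/n)^2)\le\bigO(m)$ edges since $m/n\le n$, and the selectors number $\bigO(n)$ vertices (again because $m/n\le n$), while the coordinate vertices number $d=\bigO(n)$. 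Note also that each small-group selector now has degree $\Theta(n)$ and is therefore \emph{heavy}, and there are only $\bigO(m/n)$ of them, matching the picture of \autoref{lemma:heavy-nodes}.

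It then remains to force that any valid solution selects exactly one vector from each of the $k$ groups, which is what makes the two instances \emph{equivalent} rather than merely giving one direction. Completeness is immediate: a one-per-group selection arising from an OV solution $r$-dominates the coordinate vertices (OV condition) and all other vertices (previous paragraph). For soundness I would add, for every group, guard vertices that fail to be $r$-dominated unless that group contributes a selector; since the budget $k$ equals the number of groups, forcing ``at least one per group'' upgrades to ``exactly one per group'', after which the selected vectors read off as an OV solution. The delicate point --- and the step I expect to be the main obstacle --- is engineering these guards under $r$-fold domination: a guard cannot simply demand $r$ dominators from a single group, since for $r\ge 2$ that would force $r$ selections there and overrun the budget. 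Each guard must instead collect $r-1$ of its required dominators from a fixed reference set whose contribution is already pinned down (using \autoref{lemma:heavy-nodes} to seed the guaranteed $r$ heavy small-group selectors) and only its final dominator from the target group. Verifying that this shared-credit scheme simultaneously forces all $k$ groups without admitting a spurious solution --- for instance, two selectors in one group compensating for an empty group, or a coordinate vertex being recruited into the solution --- is the crux; once it is in place, the construction manifestly runs in time linear in its output size $\bigO(m+dn)$.
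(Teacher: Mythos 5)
Your construction agrees with the paper's up to the last step: the vector vertices, the dimension vertices with $a_i[t]=0$ edges, and the complete join from the $r$ small groups to everything else (which both $r$-dominates all selector vertices and accounts for the $\bigO(m)$ join edges) are exactly what the paper does. The gap is the guard gadget forcing one selection per group, which you explicitly leave unconstructed and flag as the crux --- and the specific scheme you sketch does not work. If a guard for group $j$ draws $r-1$ of its required dominators from a fixed reference set (say $r-1$ of the small groups) and its last from $X_j$, the only constraint it imposes is $|S\cap(X_{i_1}\cup\dots\cup X_{i_{r-1}}\cup X_j)|\ge r$, and nothing pins the reference set's contribution to exactly $r-1$: for $r=2$, $k=4$ with small groups $X_1,X_2$, the selection with $|S\cap X_1|=2$, $|S\cap X_2|=0$, $|S\cap X_3|=|S\cap X_4|=1$ satisfies every such guard (each sees two dominators via $X_1$ alone) within budget $k$, yet selects no vector from $A_2$, so it need not correspond to any $r$-Multiple $k$-OV solution. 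Lemma~\ref{lemma:heavy-nodes} cannot rescue this: it guarantees $r$ heavy vertices in any solution, not one per small group.

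The paper closes this gap with a stronger, symmetric gadget: for \emph{every} $r$-subset $Q\in\binom{[k]}{r}$ it adds a block $R_Q$ of $k+1$ guards, each adjacent to all of $\bigcup_{i\in Q}X_i$ and to nothing else. Since $|S|\le k$, at least one guard in each block lies outside $S$ and must be $r$-dominated, forcing $\sum_{i\in Q}|S\cap X_i|\ge r$ for all $r$-subsets $Q$ simultaneously; applying this to the $r$ groups with the fewest solution vertices and counting against the budget $|S|\le k$ shows every group contributes at least one vertex, hence exactly one, and that no solution vertex can sit in $D$ or $R$ --- which also disposes of your worry about a coordinate vertex being recruited into the solution. (Your counterexample above violates the block $Q=\{2,3\}$, which your scheme omits.) As $\binom{k}{r}(k+1)=\bigO(1)$ guards suffice, the vertex, edge, and time bounds are unaffected. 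With this gadget substituted for yours, the rest of your argument goes through.
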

\begin{proof}
    Given an instance $A_1,\dots, A_k$ of $r$-Multiple $k$-Orthogonal Vectors, let $V(G) = X_1\cup \dots \cup X_k \cup D\cup R$ where the set $X_i$ corresponds to the set $A_i$, $D := [d]$ corresponds to the set of dimensions and $R$ is a set containing $(k+1)\binom{k}{r}$ vertices, representing ``redundant'' vertices.
    For each vertex $x_i\in X_i$ add an edge between $x_i$ and $t\in D$ if and only if the corresponding vector $a_i$ satisfies $a_i[t] = 0$.
    Partition $R$ into $\binom{k}{r}$ many sets $R_Q$ (for each $Q\in \binom{[k]}{r}$)
    of size $(k+1)$, and add an edge between any vertex $x_i\in X_i$ and any vertex $y\in R_Q$ if and only if $i\in Q$. It is straightforward to verify now that if $G$ has an $r$-multiple $k$-dominating set $S$, it must satisfy $|S\cap X_i| = 1$ for each $i\in [k]$.
    Finally for each vertex $x_i\in X_i$ for $i\leq r$ and $x_j\in X_j$ for $j\neq i$ add an edge between $x_i$ and $x_j$.

    It is straightforward to verify that the vectors $a_1\in A_1, \dots, a_k\in A_k$ satisfy the $r$-Multiple $k$-OV condition if and only if the corresponding vertices $x_1\in X_1,\dots, x_k\in X_k$ form an $r$-multiple dominating set in $G$.
    It remains to show that the constructed graph has $\bigO(n)$ many vertices and $\bigO(m + dn)$ many edges.
    Clearly, $G$ consists of $\bigO(r\frac{m}{n} + kn + d) = \bigO(\frac{m}{n} + n) = \bigO(n)$ many vertices and there are at most $\bigO(d (r\frac{m}{n} + kn) + r\frac{m}{n} \cdot kn) = \bigO(dn + m)$ edges.
\end{proof}
\begin{corollary}\label{cor:r-mult-kov-kds-reduction}
    Let $k\geq 2$, $1\geq r\leq k-1$ be fixed and $m = \Theta(n^{1+\gamma})$ for some $0< \gamma \leq 1$.
    If we can solve $r$-Multiple $k$-Dominating Set on graphs with $n$ vertices and $m$ edges in time $T(m,n)$, then 
    there exists a $\delta>0$, such that we can solve any instance $A_1,\dots, A_k$ of $r$-Multiple $k$-Orthogonal Vectors with $|A_1| = \dots = |A_r| = \bigO(\frac{m}{n})$, $A_{r+1} = \dots = A_k = n$ and $d=n^\delta$ in time $\bigO(T(m,n))$.
\end{corollary}
\begin{proof}
    Let $\delta = \gamma/2$ and given an instance $A_1,\dots, A_k$ of $r$-Multiple $k$-Orthogonal Vectors with $|A_1| = \dots = |A_r| = \bigO(\frac{m}{n})$, $A_{r+1} = \dots = A_k = n$ and $d=n^\delta$, apply the reduction from the previous lemma to obtain a graph with $\bigO(n)$ many vertices and $\bigO(m)$ many edges and run the algorithm solving $r$-Multiple $k$-Dominating Set in time $T(m,n)$ to this graph to obtain a $\bigO(T(m,n))$ algorithm for $r$-Multiple $k$-Orthogonal Vectors.
\end{proof}
It now remains to show that $r$-Multiple $k$-Orthogonal Vectors problem is conditionally hard.
In order to do this, we leverage the fine-grained classification of the first-order properties provided in \cite{bringmann2019fine} (for details see Appendix \ref{appendix-mult-dom}).
This allows us to prove the following result.
\begin{restatable}{lemma}{LemmaMultKOVHardness}
    Let $X_1,\dots, X_k$ be an instance of $r$-Multiple $k$-OV for $1\leq r \leq k-2$.
    There is no algorithm solving $r$-Multiple $k$-OV for $r\leq k-2$ in time $O(\big(|X_1|\cdot\dots \cdot |X_k|\big)^{1-\varepsilon})$ for any $\varepsilon>0$, unless the $(k-r+1)$-Uniform Hyperclique Hypothesis fails. This holds even when restricted to $|X_i|=\Theta(n^{\gamma_i})$ for an arbitrary choice of $\gamma_1, \dots, \gamma_k\in (0,1]$.
\end{restatable}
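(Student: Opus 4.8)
The plan is to recognise $r$-Multiple $k$-OV as the model-checking problem for a fixed first-order sentence in the $\exists^k\forall$ fragment, and then to read off its fine-grained complexity from the classification theorem of~\cite{bringmann2019fine}. Concretely, I would introduce a relational structure whose universe carries the vectors of $X_1,\dots,X_k$ together with the $d$ coordinates, equipped with a binary relation $E(x,y)$ encoding ``vector $x$ has a $1$ in coordinate $y$'' (and unary predicates marking the part each vector belongs to). Then $r$-Multiple $k$-OV is exactly the sentence $\varphi := \exists x_1\in X_1\cdots\exists x_k\in X_k\ \forall y\ \psi(x_1,\dots,x_k,y)$, where $\psi:=\bigwedge_{Q\in\binom{[k]}{s}}\bigvee_{i\in Q}\neg E(x_i,y)$ with $s:=k-r+1$. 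Here $\psi$ asserts that at most $k-r$ of the chosen vectors are $1$ in coordinate $y$, which is equivalent to ``at least $r$ of them are $0$ in $y$'', i.e.\ the $r$-Multiple condition.

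The key step is to compute the combinatorial parameter that governs the dichotomy of~\cite{bringmann2019fine} for this $\psi$ and to certify that $\varphi$ falls on the hard side. The negation $\neg\psi=\bigvee_{Q\in\binom{[k]}{s}}\bigwedge_{i\in Q}E(x_i,y)$ describes exactly how a coordinate $y$ can violate the property; each minimal violating term involves exactly $s=k-r+1$ of the existential variables, and none of these is redundant, since the predicates $E(x_i,\cdot)$ are mutually independent and distinct $s$-subsets $Q$ give incomparable terms. Because every $x_i$ participates symmetrically in these violations and the minimal coupling among the universal clause has width precisely $s$, the property does not admit any of the decompositions that yield the near-linear-time cases of the classification, and instead is equivalent, in the classification's sense, to detecting a $k$-hyperclique in an $s$-uniform hypergraph. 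This is exactly the regime in which~\cite{bringmann2019fine} certifies an $n^{k-o(1)}$ lower bound under the $(k-r+1)$-uniform Hyperclique Hypothesis. The restriction $r\le k-2$ guarantees $s\ge 3$ so that this governing hyperclique is genuinely $\ge 3$-uniform, while $s<k$ (i.e.\ $r\ge 2$) is what makes the $k$-hyperclique non-degenerate; the remaining boundary $r=1$ is the classical $k$-OV problem, hard under SETH.

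To obtain the prescribed per-set sizes, I would instantiate the classification's reduction from the $k$-partite, unbalanced $(k-r+1)$-uniform $k$-Hyperclique problem, mapping part $i$ (of size $\Theta(n^{\gamma_i})$) to the vector set $X_i$, so that $|X_i|=\Theta(n^{\gamma_i})$ for the desired $\gamma_i\in(0,1]$. The unbalanced hyperclique problem is itself hard under the balanced $(k-r+1)$-uniform Hyperclique Hypothesis by the standard part-splitting argument, giving a lower bound of $\big(\prod_i n^{\gamma_i}\big)^{1-o(1)}=n^{(\sum_i\gamma_i)-o(1)}$, and the reduction can be kept moderate-dimensional (dimension $n^{o(1)}$, or a tunable small polynomial as needed downstream). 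An $O\big((\prod_i|X_i|)^{1-\varepsilon}\big)$ algorithm for $r$-Multiple $k$-OV would then solve the hyperclique instance in time $n^{(\sum_i\gamma_i)(1-\varepsilon)+o(1)}$, refuting the hypothesis.

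I expect the main obstacle to be the second step: correctly instantiating the technical definitions of~\cite{bringmann2019fine} for this particular $\psi$, namely checking that it satisfies the structural conditions placing it in the hard class rather than any near-linear-time class, and that the uniformity extracted by the classification is \emph{exactly} $k-r+1$ rather than something smaller. The symmetry and independence of the clauses $\bigvee_{i\in Q}\neg E(x_i,y)$ over all $s$-subsets $Q$ should make this verification clean, but it is where the argument must engage with the normal-form and hypergraph-width machinery of the classification. A secondary technical point is arranging the reduction to simultaneously hit the arbitrary target sizes $\Theta(n^{\gamma_i})$ and keep the dimension moderate, which is precisely what is needed for the lower bound to transfer to $r$-Multiple $k$-Dominating Set via \autoref{cor:r-mult-kov-kds-reduction}.
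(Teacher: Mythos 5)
Your proposal follows the same route as the paper: express $r$-Multiple $k$-OV as an $\exists^k\forall$ first-order property over the incidence relation between vectors and coordinates, invoke the classification of Bringmann, Fischer and K\"unnemann to place it in the hyperclique-hard regime with uniformity $k-r+1$, and transfer the unbalanced part sizes $\Theta(n^{\gamma_i})$ via the standard splitting argument (the paper likewise notes that the unbalanced and balanced hyperclique hypotheses are equivalent). The one substantive step, however, is exactly the one you defer: certifying, in the formal sense of the classification, that the quantifier-free part falls in the hard class with parameter exactly $k-r+1$. Your argument for this --- that the minimal violating terms of $\neg\psi$ have ``width'' $s=k-r+1$, are pairwise incomparable, and that the formula ``does not admit any of the decompositions'' yielding faster algorithms --- is not the criterion the classification actually uses, and as written it is an appeal to intuition rather than a verification.

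The paper closes this gap with the notion of $h$-hardness: a $k$-variable propositional formula is $h$-hard if for every index set $I\in\binom{[k]}{k-h}$ some $I$-restriction has \emph{exactly one} falsifying assignment, and the classification then yields the $(|X_1|\cdots|X_k|)^{1-o(1)}$ lower bound under the $h$-uniform hyperclique hypothesis whenever $h\ge 3$ (which is where $r\le k-2$ enters). For the formula at hand the check is two lines: writing $\phi=\bigvee_{1\le i_1<\dots<i_r\le k}E(x_{i_1},y)\wedge\dots\wedge E(x_{i_r},y)$ with $E$ encoding a zero entry, fix any $I\in\binom{[k]}{r-1}$ and set $E(x_i,y)=1$ for all $i\in I$; the restricted formula is then satisfied as soon as any one of the remaining $k-r+1$ variables is set to $1$, so the all-zero assignment is the unique falsifying assignment, giving $(k-r+1)$-hardness. (In your complemented encoding the identical check works with the roles of $0$ and $1$ swapped.) Your instinct that this verification is ``clean'' and your extracted uniformity $k-r+1$ are both correct, but the proposal as it stands does not contain the verification, and it is the only part of the argument that is not boilerplate.
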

Finally, by combining this lemma and \autoref{cor:r-mult-kov-kds-reduction}, we can now conclude that our first algorithm is conditionally optimal (up to subpolynomial factors and resolution of matrix multiplication exponent) under the $3$-Uniform Hyperclique Hypothesis.
\TheoremrSmallLB*
Moreover, we observe that this implies that in dense graphs ($m=\Theta(n^2)$), there is no algorithm solving $r$-Multiple $k$-Dominating Set polynomially faster than brute force as long as $r\leq k-2$, unless $3$-Uniform Hyperclique hypothesis fails.

Notably, however, combining $r$-Multiple $k$-OV with the tools from \cite{bringmann2019fine} fails to provide a tight lower bound for $(k-1)$-Multiple $k$-Dominating Set in sparse graphs (for dense graphs we do get a tight classification, as discussed in Appendix \ref{appendix-mult-dom}). 
Nevertheless, by a careful reduction from the Independent Set problem, we can obtain a desired conditional lower bound. We sketch the reduction here. For the full proof see Appendix \ref{app-multdom-LB}.
\begin{restatable}{theorem}{ThMultDomCliqueReduction}\label{th:multdom-clique-reduction}
    Let $0< \gamma< 1$ be a rational number. 
    Then solving $(k-1)$-Multiple $k$-Dominating Set on graphs with $N$ vertices and $M=N^{1+\gamma}$ edges in time $\bigO\Big({\big(N^{\gamma(k-1)+1}\big)}^{\omega/3 - \varepsilon}\Big)$ for any $\varepsilon>0$ would refute $k$-Clique Hypothesis.
\end{restatable}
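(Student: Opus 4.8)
The plan is to establish the lower bound by reducing balanced $k'$-Clique (equivalently, by complementation, balanced $k'$-Independent Set) to $(k-1)$-Multiple $k$-Dominating Set, where $k'$ is a constant determined by $k$ and $\gamma$. Writing $\gamma=a/b$ in lowest terms (so $a<b$ since $\gamma<1$), I would set $k':=a(k-1)+b$ and aim to produce an instance on $N$ vertices and $M=\Theta(N^{1+\gamma})$ edges whose ``natural'' clique parameter $N^{\gamma(k-1)+1}$ equals $g^{k'}$, where $g$ is the vertex count of the source $k'$-Clique instance. Then an $\bigO((N^{\gamma(k-1)+1})^{\omega/3-\varepsilon})$ algorithm would solve balanced $k'$-Clique in time $g^{\omega k'/3-\varepsilon k'}$, refuting the Clique Hypothesis for cliques of size $k'$.

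First I would realize the unbalanced part sizes via the standard subset self-reduction: make the source instance $k'$-partite with $k'$ parts of size $g$, group the $k'$ coordinates into $k$ groups, namely $k-1$ groups of $a$ coordinates and one group of $b$ coordinates, and let a super-node of a small group be an $a$-clique on its coordinates (at most $g^a=N^\gamma$ of them) and a super-node of the big group a $b$-clique (at most $g^b=N$). Two super-nodes are adjacent iff their union is a clique. This yields an Unbalanced $k$-Clique instance with $k-1$ parts of size $q:=N^\gamma$ and one of size $N$, where a $k$-clique corresponds exactly to a $k'$-clique in the source; crucially $q^{k-1}N=N^{\gamma(k-1)+1}=g^{k'}$ as required.

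The heart of the argument is to turn this Unbalanced $k$-Clique instance into a $(k-1)$-Multiple $k$-Dominating Set instance $G'$, reversing \autoref{theorem:reduction-to-unbalanced-clique} and invoking \autoref{obs:reduction-to-k-clique}: a solution is $k$ vertices that pairwise dominate $G'$, and I want pairwise domination to coincide with adjacency of the corresponding super-nodes. I would let the $(k-1)q=\bigO(N^\gamma)$ small-part super-nodes become the heavy vertices $\mathcal{H}$ (each made adjacent to enough vertices so that $\deg^*\ge N/k$, while ensuring no other vertex is heavy), let the $N$ big-part super-nodes become light vertices, and introduce $\bigO(N)$ witness vertices together with a redundant-vertex gadget (as in the $r$-Multiple $k$-OV reduction) that forces any solution to take exactly one vertex per part; by \autoref{lemma:heavy-nodes} the $k-1$ small-part vertices are then forced to be heavy. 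Adjacency of a witness to a heavy vertex may be specified freely, since $|\mathcal{H}|\cdot N=\bigO(N^{1+\gamma})$ stays within the edge budget, so I can encode the full compatibility relation among the small parts. Finally I would verify $|V(G')|=\Theta(N)$, $|E(G')|=\Theta(N^{1+\gamma})$, near-linear-time constructibility, and that a $(k-1)$-multiple $k$-dominating set exists iff the source has a $k'$-clique.

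The main obstacle will be making pairwise domination encode adjacency when one endpoint lies in the big part: big-part vertices must be light (average degree $\bigO(N^\gamma)$), so a pair $(x_i,x_k)$ with $x_k$ light cannot dominate the $\Theta(N)$ witnesses through $x_k$ alone. I would resolve this with a two-tier witness design, consisting of a bulk of witnesses dominated via the heavy vertex $x_i$, together with a small set of $\bigO(N^\gamma)$ ``special'' witnesses reachable only from big-part vertices carrying the correct $b$-subset, so that the light vertex $x_k$ contributes exactly the information distinguishing its super-node while the heavy vertex covers the remainder. A secondary technical point is handling an arbitrary rational $\gamma=a/b$ rather than $\gamma=1/p$: the choice $k'=a(k-1)+b$ together with padding all part sizes to exact powers of $g$ absorbs the rounding into the $N^{o(1)}$ slack, and one must check that $k'$ is a genuine constant and that the resulting source instance is non-degenerate for this $k'$.
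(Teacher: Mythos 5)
Your overall skeleton matches the paper's: reduce from a balanced $k'$-clique/independent-set instance with $k'=a(k-1)+b$ for $\gamma=a/b$, form super-nodes by grouping coordinates into $k-1$ blocks of size $a$ and one block of size $b$, add a constant-size redundancy gadget forcing exactly one chosen node per part, and let the $k-1$ small parts play the role of the heavy vertices. The gap is in the witness gadget, which is the heart of the reduction. You allocate $\Theta(N)$ witnesses and then try to encode the pairwise compatibility relation through them, which creates a tension you do not resolve: a witness that is non-adjacent to two incompatible heavy super-nodes must be adjacent to essentially every big-part super-node that could occur in a valid solution (otherwise it already spends its single permitted non-dominator on the chosen light vertex $x_k$ and incorrectly rejects valid solutions), so each such witness needs degree $\Omega(N)$, and $\Theta(N)$ of them exceed the $N^{1+\gamma}$ edge budget. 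Your two-tier fix does not escape this: bulk witnesses that are non-adjacent to the whole big part can only express unary constraints on heavy vertices, never pairwise incompatibilities, and the role of the $\bigO(N^\gamma)$ ``special'' witnesses is left unspecified.

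The missing idea is to index the witnesses by the \emph{edges of the source graph} (reducing from Independent Set rather than Clique), so that there are only $n^2$ of them, where $n$ is the source part size and $N=n^{dq}$: the witness for edge $\{x,y\}$ is adjacent to a super-node $v$ iff $v\cap\{x,y\}=\emptyset$. Then each super-node is non-adjacent to only $O(n)$ witnesses, and the condition ``every vertex has at least $k-1$ dominators'' reads off exactly that no source edge has its endpoints split between two chosen blocks, i.e.\ that the union of the chosen blocks is independent; the heavy/light degrees and the edge count all fall into place simultaneously. This also surfaces a quantitative point that your padding remark does not address: to keep the $n^2\cdot N$ witness-incident edges below $N^{1+\gamma}$ one must reduce from $(dk')$-Independent Set for a constant multiplier $d\ge 2$, so that $n^2=N^{2/(dq)}\le N^{\gamma}$; reducing from $k'$-Clique directly fails when $a=1$ (e.g.\ $\gamma=1/2$).
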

\begin{proofsketch}
    Write $\gamma = p/q$ for coprime positive integers $p,q$ and let $k^* := 3(k-1)p + 3q$. We reduce from $k^*$-Independent Set Detection. Let $G = (X_1,\dots, X_{k^*}, E)$ be a $k^*$-partite graph with $n$ vertices in each part.
    For each $i\in [k-1]$, let $A_i$ be the set consisting of all independent sets of size $3p$ from $X_{(i-1)\cdot 3p + 1}, \dots, X_{i\cdot 3p}$ and $A_k$ be the set consisting of all independent sets of size $3q$ from $X_{k^*-3q+1},\dots, X_{k^*}$.
    For each $i\in [k]$, let $V_i$ consist of nodes corresponding to the elements in $A_i$.
    Let $F$ be the set corresponding to the edge set $E$ of~$G$.
    We now construct a graph $G'$ as follows. Let $V(G') = V_1\cup\dots \cup V_k\cup F\cup R$, where $R$ is a gadget of size $\bigO(1)$ that ensures that if $G'$ contains any $(k-1)$-multiple dominating set of size $k$, it contains exactly one node from each set $V_i$.
    We add the remaining edges as follows.
    For any pair of nodes $v_i\in V_i$, $v_j\in V_j$ for $i\neq j$, add an edge between them.
    Finally, add an edge between a node $f\in F$ and $v_i\in V_i$ if and only if none of the vertices contained in the corresponding independent set $a_i\in A_i$ are among the two endpoints of the edge corresponding to $f$.
    By setting $N:=n^{3q}$, we can verify that $G'$ contains $\bigO(N)$ nodes and $\bigO(N^{1+\gamma})$ edges and that it contains a $(k-1)$-multiple dominating set of size $k$ if and only if $G$ has an independent set of size $k^*$.
    Finally, if there was an algorithm solving $(k-1)$-Multiple $k$-Dominating Set in time $\bigO\Big({\big(N^{\gamma(k-1)+1}\big)}^{\omega/3 - \varepsilon}\Big)$, by running the reduction above, we could solve the $k^*$-Independent Set problem in time 
    \[
    \bigO\Big(\big(N^{\gamma(k-1)+1}\big)^{\omega/3 - \varepsilon}\Big) =  \bigO\Big(\big(n^{3p(k-1)+3q)}\big)^{\omega/3 - \varepsilon}\Big) = \bigO\big(n^{k^*\omega/3 - \varepsilon'}\big)
    \]
    refuting the $k$-Clique Hypothesis.
\end{proofsketch}

\section{Dominating Patterns in Sparse Graphs} 
In this section we consider the Pattern Domination problem.
In particular, we first provide a simple argument that shows that for every pattern $H$ consisting of $k$ vertices (for $k\geq 3$), we can solve this problem in $mn^{k-2+o(1)}$ running time. 
On the lower bound side, we observe that the literature implicitly proves existence of a pattern $H$ for which this running time is optimal under the $k$-OV Hypothesis, thus settling the complexity of the case when the pattern $H$ is a part of the input.
We then consider the problem of detecting an $H$-Dominating Set for a fixed $k$-vertex graph $H$.
To this end, we show that for any fixed pattern $H$ consisting of $k$ vertices, the existence of an $O(\frac{m^{(k-1)/2+1-\varepsilon}}{n})$-time algorithm for this problem would refute the $k$-OV Hypothesis. We then show that this general lower bound is matched by a corresponding algorithm for some patterns $H$. 
The fine-grained complexity thus depends heavily on the structure of the graph $H$ itself, and we focus our attention to some of the most important patterns.

\begin{restatable}{proposition}{PropkPatternDomAlg}
Let $k\ge 7$. The $k$-Pattern Domination on graphs with~$n$ vertices and $m$ edges can be solved in time $O(mn^{k-2 + o(1)})$ (if $\omega = 2$, we can achieve this running time for all $k\geq 3$).
\end{restatable}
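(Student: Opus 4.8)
The plan is to adapt the heavy-vertex plus matrix-multiplication strategy underlying the $mn^{k-2+o(1)}$-time algorithm for plain $k$-Dominating Set, and to deal with the extra induced-pattern requirement by an explicit per-candidate test rather than inside the matrix product. Recall from \autoref{lemma:heavy-nodes} (applied with $r=1$) that every dominating set of size $k$ contains at least one heavy vertex, and that there are only $\bigO(m/n)$ heavy vertices. I would therefore first enumerate the heavy vertex $h\in\mathcal H$ that the sought $H$-dominating set contains ($\bigO(m/n)$ choices). Having fixed $h$, the remaining $k-1$ solution vertices must dominate $V\setminus N[h]$, and I split these $k-1$ vertices into a part $S_1$ of size $a:=\lceil\frac{k-1}{2}\rceil$ and a part $S_2$ of size $b:=\lfloor\frac{k-1}{2}\rfloor$.

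For a fixed $h$, I would build two $0$/$1$ matrices: $A$ indexed by $a$-subsets $S_1$ and by vertices $v$, with $A[S_1,v]=1$ iff $v\in V\setminus N[h]$ and $S_1\cap N[v]=\emptyset$; and $B$ indexed by vertices $v$ and $b$-subsets $S_2$, with $B[v,S_2]=1$ iff $v\in V\setminus N[h]$ and $S_2\cap N[v]=\emptyset$. Then $(A\cdot B)[S_1,S_2]$ counts the vertices of $V\setminus N[h]$ dominated by neither $S_1$ nor $S_2$, so it equals $0$ exactly when $\{h\}\cup S_1\cup S_2$ dominates $G$. Computing this product costs $\MM\big(n^{a},n,n^{b}\big)$; for $\omega=2$ this is $(n^{a+1}+n^{a+b}+n^{b+1})^{1+o(1)}=n^{k-1+o(1)}$ for all $k\ge 3$ (as $a,b\ge 1$), and for $k\ge 7$, so that $a,b\ge 3$, the same bound $n^{k-1+o(1)}$ follows from known bounds on rectangular matrix multiplication with the small inner dimension~$n$. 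Building the matrices is dominated by this cost.

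The point that bypasses the usual difficulty of induced-subgraph detection — enforcing the non-edges and cross-edges between $S_1$ and $S_2$ inside a matrix product — is that I do not attempt to encode the pattern into the product at all. Instead, after computing $A\cdot B$ I would iterate over all pairs $(S_1,S_2)$ (there are $n^{a}\cdot n^{b}=n^{k-1}$ of them for each $h$), and for every pair whose $k$ vertices $\{h\}\cup S_1\cup S_2$ are pairwise distinct and whose entry $(A\cdot B)[S_1,S_2]$ is $0$, test directly in $\bigO(k^2)=\bigO(1)$ time, via $\bigO(1)$ adjacency queries, whether $G[\{h\}\cup S_1\cup S_2]\cong H$. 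This explicit test is affordable precisely because the number of candidate $k$-sets examined over all heavy vertices is only $\bigO(\tfrac mn)\cdot n^{k-1}=\bigO(mn^{k-2})$; the sole role of the matrix product is to reduce the domination test per candidate from $\Theta(n)$ to $\bigO(1)$.

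Correctness follows since any $H$-dominating set $D$ is a size-$k$ dominating set inducing $H$, hence contains a heavy vertex $h$ (taken as the fixed vertex) whose remaining $k-1$ vertices split into some $(S_1,S_2)$ of sizes $a,b$ enumerated by the algorithm; conversely, every reported candidate is by construction a dominating set inducing $H$. Summing over the $\bigO(m/n)$ heavy vertices, the total running time is $\bigO(m/n)\cdot n^{k-1+o(1)}=mn^{k-2+o(1)}$ under the stated conditions. I expect the main obstacle to be exactly the induced-pattern constraint, which resists a matrix-product encoding because adjacency between a vertex of $S_1$ and a vertex of $S_2$ is not a separable row-times-column quantity; the resolution is to verify the pattern outside the product, charging it to the $\bigO(mn^{k-2})$ candidate sets that the heavy-vertex decomposition forces us to enumerate anyway.
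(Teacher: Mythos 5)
Your proposal is correct and takes essentially the same approach as the paper: the paper's proof simply lists all size-$k$ dominating sets via the heavy-vertex/matrix-multiplication algorithm of Fischer--Künnemann--Redzic and then checks each listed candidate for isomorphism to $H$ in $\bigO(1)$ time (Lemma~\ref{lemma:listing-ds-h-dom}), which is exactly your ``verify the pattern outside the product'' idea. The only cosmetic difference is that you enumerate the heavy vertex outside the matrix product instead of folding the $m/n$ factor into one matrix dimension; this is immaterial when $\omega=2$, but for the unconditional claim your balanced product $\MM(n^{3},n,n^{3})$ at $k=7$ sits marginally beyond current rectangular matrix multiplication bounds (it needs the dual exponent $\alpha\ge 1/3$), so $k\ge 8$ is the safe threshold for your variant.
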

For a proof, see Appendix \ref{appendix-patt-dom}. On the other hand, it has been implicitly proved in \cite{fischer2024effect} that if $H$ is isomorphic to a complete bipartite graph $K_{1,(k-1)}$ (i.e. star graph on $k$ vertices), then detecting $H$-Dominating Set in time $O(mn^{k-2 - \varepsilon})$ for any $\varepsilon>0$ would refute $k$-OV Hypothesis, and thus in the general case, the algorithm above is the best we can do, up to subpolynomial factors, unless $k$-OV Hypothesis fails. 
We summarise this result in the following.
\begin{proposition}\cite{fischer2024effect}
    Let $H$ be a star graph on $k$ vertices.  
    Then for no $\varepsilon>0$ is there an algorithm solving $H$-Domination in time $mn^{k-2 - \varepsilon}$, unless $k$-OV Hypothesis fails.
\end{proposition}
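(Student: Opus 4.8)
The plan is to reduce from a suitably unbalanced, moderate-dimensional $k$-OV instance and build a sparse graph in which every $K_{1,k-1}$-dominating set is forced to have the intended shape, so that its existence is equivalent to orthogonality. Concretely, I would start from $A_1,\dots,A_k\subseteq\{0,1\}^d$ with $|A_1|=m/n$, $|A_2|=\dots=|A_k|=n$ and $d=n^{o(1)}$. The product of set sizes is $(m/n)\cdot n^{k-1}=mn^{k-2}$, so by the robustness of the $k$-OV Hypothesis to this kind of unbalancing this instance cannot be solved in time $O((mn^{k-2})^{1-\varepsilon})$. The goal is then a graph $G$ with $O(n)$ vertices and $O(m)$ edges whose star-dominating sets correspond exactly to orthogonal tuples; this is essentially the sparse $k$-Dominating Set reduction of~\cite{fischer2024effect}, specialized so that the dominating set is visibly a star.

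For the construction I would create vertex classes $X_1,\dots,X_k$ (one vertex per vector), a coordinate set $D=[d]$, and ``dummy'' gadgets. Designate the small class $X_1$ as the \emph{center} and $X_2,\dots,X_k$ as the \emph{leaf} classes: add a complete bipartite connection between $X_1$ and each $X_i$ for $i\ge 2$, and no edges among $X_2,\dots,X_k$. This way any transversal $x_1\in X_1,\dots,x_k\in X_k$ induces exactly $K_{1,k-1}$ centered at $x_1$. Orthogonality is encoded in the usual way by putting an edge $x_i\sim t$ iff the vector of $x_i$ is $0$ in coordinate $t$, so that a coordinate vertex is dominated precisely when some chosen vector vanishes there. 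The edge budget works out: the star edges contribute $|X_1|\cdot\sum_{i\ge 2}|X_i|=(m/n)(k-1)n=\Theta(m)$, while the $O(n^{1+o(1)})$ orthogonality edges and the $O(n)$ gadget edges are lower order, and the vertex count is $O(n)$.

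The dummy gadgets (mirroring the ``redundant vertices'' device already used in this section) force any size-$k$ dominating set to pick exactly one vertex per class: attaching $k+1$ dummies to each $X_i$, adjacent only to $X_i$, makes it impossible to dominate all of them without placing a solution vertex inside $X_i$, so the $k$ solution vertices must form a transversal and no dummy is ever selected. Once this is in place, the induced pattern is automatically the star with center $x_1$ and leaves $x_2,\dots,x_k$, and a routine check shows $X_1$ is dominated by the leaves, the leaf classes by the center, the dummies by their incident transversal vertex, and $D$ exactly under orthogonality; hence $G$ has a $K_{1,k-1}$-dominating set iff the $k$-OV instance is satisfiable. A hypothetical $O(mn^{k-2-\varepsilon})$ algorithm would then solve this $k$-OV instance in time $O(mn^{k-2-\varepsilon})=O((mn^{k-2})^{1-\varepsilon'})$ for $\varepsilon'=\varepsilon/(k-1)$, a contradiction.

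I expect the main obstacle to be the soundness bookkeeping that rules out \emph{degenerate} stars — for instance a star whose center lies in a leaf class with all leaves inside $X_1$, or one that tries to cover an entire class through its dummies — rather than the honest transversal. The dummy gadget is precisely the tool I would use to eliminate these cases, since it pins down ``one vertex per class'' \emph{before} the star shape is analyzed; verifying that no alternative star-dominating configuration survives the domination requirement for every fixed $k$ is the delicate part. A secondary, more routine point is formally invoking the hardness of $k$-OV in the specific unbalanced regime $|A_1|=m/n$, $|A_2|=\dots=|A_k|=n$ with $d=n^{o(1)}$.
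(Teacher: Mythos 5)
Your reduction is correct and is essentially the standard argument this paper relies on: the proposition is only cited here (attributed to~\cite{fischer2024effect}) without its own proof, and the closest in-paper analogue is the proof of Theorem~\ref{prop:h-dom-lb} in Appendix~\ref{appendix-patt-dom}, which uses the same template (vector classes, dimension vertices, and $k+1$ redundant vertices per class to force a transversal). The only substantive difference is the unbalancing: for the general-$H$ bound the paper takes $|A_1|=\dots=|A_{k-1}|=\sqrt{m}$ and $|A_k|=m/n$ to target $(m^{(k+1)/2}/n)^{1-\varepsilon}$, whereas you take $|A_1|=m/n$ and $|A_2|=\dots=|A_k|=n$ and exploit that the star's center must land in the small class, so the complete bipartite center--leaf connections cost only $(k-1)\cdot(m/n)\cdot n=O(m)$ edges while the product of set sizes is $mn^{k-2}$. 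Your soundness argument is complete: the redundant-vertex gadget pins down one solution vertex per class before any structural analysis, the induced subgraph on a transversal is then automatically $K_{1,k-1}$, and domination of $D$ is equivalent to orthogonality, so the degenerate-star configurations you worry about in the last paragraph cannot arise. Two trivial bookkeeping points: the paper's unbalanced $k$-OV Hypothesis already rules out time $n^{(\sum_i\gamma_i)-\varepsilon}=mn^{k-2}\cdot n^{-\varepsilon}$ directly, so no conversion to the form $(mn^{k-2})^{1-\varepsilon'}$ is needed; and if you do convert, $\varepsilon'=\varepsilon/k$ (not $\varepsilon/(k-1)$) is the safe choice since $mn^{k-2}\le n^{k}$.
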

The previous two propositions settle the fine-grained complexity of $k$-Pattern Domination in sparse graphs, but leave open an interesting research direction.
Namely, are there fixed patterns $H$ for which we can beat this running time, and if so, by how much.
Towards answering this question, we first provide a conditional lower bound, showing that for no pattern $H$ can we do better than $\big(\frac{m^{1+(k-1)/2}}{n}\big)^{1-o(1)}$ under $k$-OV hypothesis.
\PropHDomLB*
We adapt the reduction by Fischer, Künnemann and Redzic~\cite{fischer2024effect} to force any dominating set of size $k$ to induce the graph $H$.
For the detailed proof see Appendix \ref{appendix-patt-dom}.
\subsection{Dominating $k$-Clique and $k$-Independent Set}
In this section we consider the two classic graph patterns for which this problem has been well-studied, namely, $k$-Clique and $k$-Independent Set. 
Particularly, we settle the fine-grained complexity of both Dominating $k$-Clique and Dominating $k$-Independent Set by providing algorithms that match the conditional lower bound from \autoref{prop:h-dom-lb}.
Let us focus on the $k$-Clique case first.
In order to obtain a faster algorithm in sparse graphs, we leverage the following observation (for a proof see Appendix \ref{appendix-patt-dom}).
\begin{restatable}{observation}{ObsNumberOfkCliquesInGraph}[folklore]\label{obs:k-clique-number}
    A graph with $n$ vertices and $m$ edges has at most $\bigO(m^{\frac{k}{2}})$ $k$-cliques.
\end{restatable}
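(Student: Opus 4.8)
The plan is to split on the parity of $k$, disposing of even $k$ by a direct injection and reducing odd $k$ to the triangle bound via a degree-counting argument. Throughout let $t_k(G)$ denote the number of $k$-cliques of $G$.

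First I would handle even $k = 2\ell$ by a one-line injection. Fix an arbitrary total order on $V(G)$; to each $k$-clique, whose vertices in increasing order are $v_1 < \dots < v_k$, associate the $\ell$-element edge set $\{v_1 v_2\}, \{v_3 v_4\}, \dots, \{v_{k-1}v_k\}$ (all present since the vertices form a clique). From such an $\ell$-set of edges one recovers the clique's vertex set as the union of endpoints, so the map is injective and $t_k(G) \le \binom{m}{\ell} = \bigO(m^{\ell}) = \bigO(m^{k/2})$. The difficulty is that for odd $k$ the same injection would use $\lceil k/2 \rceil$ edges and lose a factor of $\sqrt{m}$; resolving this is where I expect the real work to lie.

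To fix the odd case I would first establish the triangle bound $t_3(G) = \bigO(m^{3/2})$. Here I would use that the number of triangles equals $\tfrac13\sum_{\{u,v\}\in E} |N(u)\cap N(v)| \le \tfrac13\sum_{\{u,v\}\in E} \min(\deg u, \deg v)$, and bound $\sum_{\{u,v\}\in E}\min(\deg u,\deg v) = \bigO(m^{3/2})$ by a heavy/light split: call $v$ \emph{heavy} if $\deg v \ge \sqrt m$. Edges with a light endpoint contribute at most $\sqrt m$ each, hence $\bigO(m^{3/2})$ in total. For an edge between two heavy vertices I would charge $\min(\deg u, \deg v)$ to its lower-degree endpoint; since $\sum_v \deg v = 2m$ there are at most $\bigO(\sqrt m)$ heavy vertices, so each vertex has at most $\bigO(\sqrt m)$ heavy neighbours, giving a total charge $\bigO(\sqrt m)\sum_v \deg v = \bigO(m^{3/2})$.

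Finally I would settle all $k \ge 4$ (both parities, though only odd $k$ is actually needed) by the neighbourhood recursion $t_k(G) = \tfrac1k \sum_{v} t_{k-1}(G[N(v)])$, which holds because deleting any vertex of a $k$-clique leaves a $(k-1)$-clique inside its neighbourhood. Writing $m_v$ for the number of edges of $G[N(v)]$, the induction hypothesis gives $t_{k-1}(G[N(v)]) = \bigO(m_v^{(k-1)/2})$; then using $m_v^{(k-1)/2} \le m_v\, m^{(k-3)/2}$ together with $\sum_v m_v = 3\,t_3(G) = \bigO(m^{3/2})$ yields $t_k(G) = \bigO(m^{(k-3)/2}\cdot m^{3/2}) = \bigO(m^{k/2})$, closing the induction with base cases $k=2$ (where $t_2 = m$) and $k=3$ (the triangle bound). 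The main obstacle, as noted, is the odd case: the naive matching injection is off by a $\sqrt m$ factor, and the degree-based triangle bound together with the bootstrapping recursion is what recovers the tight exponent $k/2$.
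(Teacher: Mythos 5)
Your proof is correct, but it takes a genuinely different route from the paper's. The paper argues by strong induction on $k$: any $k$-clique is the union of a $\lceil k/2\rceil$-clique and a $\lfloor k/2\rfloor$-clique, so the count is at most the product $\bigO(m^{\lceil k/2\rceil/2})\cdot\bigO(m^{\lfloor k/2\rfloor/2})=\bigO(m^{k/2})$, with base cases $k=2$ (trivial) and $k=3$ (the triangle bound, which the paper only cites as well known). You instead (i) dispose of even $k$ by an injection into $\binom{E(G)}{k/2}$ via a perfect matching of the clique, (ii) prove the triangle bound from scratch by the degree-counting argument with a heavy/light split, and (iii) handle $k\ge 4$ via the neighbourhood recursion $t_k(G)=\tfrac1k\sum_v t_{k-1}(G[N(v)])$ together with the identity $\sum_v m_v=3\,t_3(G)$ and the estimate $m_v^{(k-1)/2}\le m_v\,m^{(k-3)/2}$. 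All three steps check out: the charging of heavy--heavy edges to their lower-degree endpoint is sound because there are only $\bigO(\sqrt m)$ heavy vertices, the recursion correctly counts each $k$-clique $k$ times, and since $k$ is a fixed constant the implied constants do not accumulate harmfully through the induction. What each approach buys: yours is fully self-contained (it actually proves the $\bigO(m^{3/2})$ triangle bound rather than citing it) and the even-$k$ injection is a nice one-liner, at the cost of overall length; the paper's two-halves split is shorter and uniform in $k$, but leans on the literature for its only nontrivial base case. As you note yourself, your step (iii) subsumes the even case, so the separate even-$k$ injection is redundant and could be dropped without loss.
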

{\renewcommand\footnote[1]{}\cliqueDomAlg*}
The idea of the proof is to combine \autoref{obs:k-clique-number} with the observation that each dominating $k$-clique contains a heavy vertex, in order to restrict the size of our solution space.
After restricting the solution space, we follow the similar lines of the matrix multiplication algorithm for $k$-Dominating Set from \cite{eisenbrand2004complexity,fischer2024effect}.
For the full proof, see Appendix \ref{appendix-patt-dom}, Section \ref{app-section:clique}.

The last theorem shows that considering the density of the dominating pattern can be beneficial in obtaining a significant speedup over the standard $k$-Dominating Set algorithm, by observing that there are fewer such dense patterns (e.g. $k$-cliques) in sparse graphs.
On the other extreme of the density spectrum lie the independent sets.
There are typically many $k$-independent sets in sparse graphs ($\Omega(n^k)$), so we cannot use \autoref{obs:k-clique-number} to obtain a faster algorithm for Dominating $k$-Independent Set problem.
To nevertheless obtain a fast algorithm, we take advantage of one simple observation.
Namely, if we know that some fixed vertex $v$ is contained in some dominating $k$-independent set, by removing $N[v]$ from $G$, we can recursively obtain an instance of Dominating $(k-1)$-Independent Set problem, since no solution vertices will appear in $N[v]$ and moreover, $N[v]$ is already dominated by $v$.
As a technical note, the crucial reason why this approach fails for instances of the usual $k$-Dominating Set problem (without the restriction that the solution vertices induce a $k$-Independent Set) lies in the distinction between monochromatic and bichromatic versions of Dominating Set. 
In particular, after fixing a solution vertex $v$ of a dominating set, we no longer have to dominate the vertices from $N[v]$, but some of them might still appear in our solution.
We thus obtain an instance of Bichromatic $(k-1)$-Dominating Set (essentially a graph formulation of $k$-Set Cover), and it is known that this problem is hard already in very sparse instances (see, e.g.,~\cite{fischer2024effect}).
\begin{lemma}\label{lemma:domindset-recursion}
    Let $A_k(G)$ be an algorithm that finds a dominating $k$-independent set.
    Given a graph $G$, any dominating $k$-independent set containing a fixed vertex $v$ can be found by running $A_{k-1}(G-N[v])$.
\end{lemma}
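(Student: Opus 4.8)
The plan is to prove \autoref{lemma:domindset-recursion} by establishing a bijection (or at least an equivalence of existence) between dominating $k$-independent sets of $G$ that contain the fixed vertex $v$ and dominating $(k-1)$-independent sets of the graph $G - N[v]$. The key structural insight, already sketched in the surrounding text, is that once $v$ is committed to the solution, the closed neighbourhood $N[v]$ is rendered irrelevant for the remaining search in two independent ways, which I would verify as separate claims.

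First I would show the forward direction. Suppose $S = \{v, x_1, \dots, x_{k-1}\}$ is a dominating $k$-independent set of $G$ containing $v$. I claim $S' := S \setminus \{v\} = \{x_1, \dots, x_{k-1}\}$ is a dominating $(k-1)$-independent set of $G - N[v]$. The independence is immediate: $S'$ is a subset of the independent set $S$, so it remains independent, and crucially each $x_i$ must lie in $V(G) \setminus N[v]$ precisely because $S$ is independent — no $x_i$ can be adjacent to $v$, and no $x_i$ equals $v$, so $x_i \in V(G) \setminus N[v] = V(G - N[v])$. Hence $S'$ is a well-defined independent set of the right size inside $G - N[v]$. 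For domination, take any $w \in V(G - N[v])$; since $S$ dominates $G$, either $w \in S$ or $w$ is adjacent to some $s \in S$. If the dominating vertex were $v$ itself, then $w \in N[v]$, contradicting $w \in V(G-N[v])$; therefore $w$ is dominated by some $x_i \in S'$, and that edge survives in $G - N[v]$, so $S'$ dominates $G - N[v]$.

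Conversely, I would take any dominating $(k-1)$-independent set $S'$ of $G - N[v]$ and show $S := S' \cup \{v\}$ is a dominating $k$-independent set of $G$. Independence holds because the vertices of $S'$ all lie outside $N[v]$, hence none is adjacent to $v$ and none equals $v$, so adding $v$ preserves independence; it also has exactly $k$ vertices since $v \notin S'$. For domination in $G$, I split $V(G)$ into $N[v]$ and $V(G) \setminus N[v]$: every vertex of $N[v]$ is dominated by $v$ directly, and every vertex of $V(G) \setminus N[v] = V(G-N[v])$ is dominated by $S'$ since $S'$ dominates $G-N[v]$. Thus $S$ dominates $G$. These two directions together establish that $A_{k-1}(G - N[v])$ returns a valid $(k-1)$-independent dominating set of $G-N[v]$ exactly when the desired $k$-independent dominating set of $G$ through $v$ exists, and the lift $S' \mapsto S' \cup \{v\}$ recovers it.

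The argument is essentially a careful bookkeeping of closed neighbourhoods, so I expect no single hard obstacle; the one point demanding care is the \emph{why-this-fails-without-independence} subtlety flagged in the preceding discussion. The independence of the solution is exactly what guarantees no $x_i$ hides inside $N[v]$, which is what lets us delete $N[v]$ outright rather than merely exempting it from the domination requirement. I would make sure the proof highlights that the containment $x_i \in V(G)\setminus N[v]$ is a \emph{consequence} of independence and not an assumption, since this is the precise place where the analogous reduction for ordinary (bichromatic) Dominating Set breaks down.
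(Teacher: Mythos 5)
Your proof is correct and follows exactly the reasoning the paper itself uses: the paper gives no formal proof of this lemma, justifying it only by the one-line observation preceding it (``no solution vertices will appear in $N[v]$ and moreover, $N[v]$ is already dominated by $v$''), and your two directions are precisely a careful expansion of that sentence. In particular, your emphasis that $x_i \notin N[v]$ is a \emph{consequence} of independence — the exact point where the reduction fails for plain Dominating Set — matches the technical remark the authors make just before the lemma.
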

This gives rise to a simple recursive algorithm whose time complexity we can bound as follows. (For a formal proof, we refer to Section~\ref{app-section:indset}.)
\begin{restatable}{lemma}{LemmaIndSetDomTC}
    Let $T_k(n,m)$ denote the running time of an algorithm solving Dominating $k$-Independent Set problem and $\mathcal{H}$ denote the set of heavy vertices.
    Then for every $k\geq 3$, the following inequality holds: 
    \[T_k(n,m) \leq \sum_{v\in \mathcal{H}}T_{k-1}(n - |N[v]|,m).\]
\end{restatable}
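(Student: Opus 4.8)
The plan is to bound the running time of the recursive algorithm suggested by \autoref{lemma:domindset-recursion} by a careful case analysis of the branching. The algorithm works as follows: to find a dominating $k$-independent set, we observe (via \autoref{lemma:heavy-nodes}, applied with $r=1$) that any dominating set of size $k$ must contain at least one heavy vertex, i.e., a vertex $v$ with $\deg^*(v) \geq n/k$. Hence we may branch over all choices of which heavy vertex $v \in \mathcal{H}$ is the ``first'' solution vertex. For each such choice, by \autoref{lemma:domindset-recursion}, finding a dominating $k$-independent set containing $v$ reduces to finding a dominating $(k-1)$-independent set in the graph $G - N[v]$, which has $n - |N[v]|$ vertices and at most $m$ edges. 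Summing the work over all heavy vertices yields exactly the claimed recurrence.

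First I would make precise which recursion the lemma statement refers to: the algorithm $A_k(G)$ enumerates heavy vertices $v \in \mathcal{H}$, and for each one recursively calls $A_{k-1}(G - N[v])$ to search for a dominating $(k-1)$-independent set among the vertices not in $N[v]$; it reports success if any branch succeeds. Correctness of this branching is what I must justify before bounding the time, and it rests on two facts established earlier: (i) every dominating $k$-independent set contains at least one heavy vertex (so branching over $\mathcal{H}$ is exhaustive), and (ii) \autoref{lemma:domindset-recursion} guarantees that the recursive call on $G - N[v]$ correctly finds the dominating $k$-independent sets containing $v$ — because the vertices of $N[v]$ are already dominated by $v$ and, crucially, no further solution vertex can lie in $N[v]$ (a neighbor of $v$ would violate independence, and $v$ itself is already chosen). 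This is exactly where the independence of the pattern is used, and it is the reason the monochromatic-vs-bichromatic subtlety flagged in the surrounding text does not cause trouble here.

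Second I would account for the time. The recursive algorithm spends essentially no overhead beyond the recursive calls themselves plus linear-time bookkeeping (computing $N[v]$ and forming $G - N[v]$), which is dominated by the cost folded into the leaf computations. Since for each heavy vertex $v$ we make exactly one recursive call $A_{k-1}(G - N[v])$ on an instance with $n - |N[v]|$ vertices and at most $m$ edges, the total time obeys
\[
T_k(n,m) \leq \sum_{v \in \mathcal{H}} T_{k-1}(n - |N[v]|, m),
\]
which is precisely the inequality to be proven. I would note that $m$ does not decrease in the bound since $G - N[v]$ has at most $m$ edges, and any additive per-node overhead is subsumed into the subpolynomial factors of the base case, so it need not appear explicitly in the recurrence.

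I do not expect a serious obstacle here, as the statement is essentially a direct transcription of the branching strategy into a recurrence. The one point requiring genuine care — and the closest thing to a main obstacle — is the correctness argument that justifies branching only over heavy vertices while still capturing all solutions: one must verify both that every dominating $k$-independent set is detected (exhaustiveness of the $\mathcal{H}$-branching, from \autoref{lemma:heavy-nodes}) and that the recursion on $G - N[v]$ neither loses valid solutions nor introduces spurious ones (from \autoref{lemma:domindset-recursion} together with the independence constraint ruling out solution vertices inside $N[v]$). Once that is in place, the time bound follows immediately by summing the single recursive call made per heavy vertex.
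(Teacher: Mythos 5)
Your proposal is correct and follows essentially the same route as the paper: branch over heavy vertices (which exist in every dominating set by Lemma~\ref{lemma:heavy-nodes}), recurse on $G-N[v]$ via Lemma~\ref{lemma:domindset-recursion}, and sum the resulting calls, using that $G-N[v]$ has at most $m$ edges. The only cosmetic difference is that the paper absorbs the per-branch $O(m)$ setup cost explicitly by noting $\sum_{v\in\mathcal{H}} m \le m^2/n \le T_k(n,m)$ and invokes monotonicity of $T_{k-1}$ in $m$, whereas you fold this overhead in more informally, but the substance is the same.
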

As the base case of our algorithm, we take the $k=2$ case, which we can solve in randomized time $m^{\omega/2 + o(1)}$.
\begin{lemma} \label{lemma:2ids-alg}
    There exists a randomized algorithm solving Dominating $2$-Independent Set in time $m^{\omega/2 + o(1)}$.
\end{lemma}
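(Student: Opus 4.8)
The plan is to reduce the problem directly to listing all dominating sets of size $2$ via \autoref{lemma:2ds-listing}, and then post-filter for the independence condition. Recall that a Dominating $2$-Independent Set is a pair of distinct, non-adjacent vertices $\{x_1,x_2\}$ with $N[x_1]\cup N[x_2] = V(G)$; that is, it is precisely a dominating set of size $2$ whose two vertices form an independent set. Thus the task is to decide whether, among all dominating sets of size $2$, there is one consisting of two non-adjacent vertices.

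First I would invoke the randomized algorithm of \autoref{lemma:2ds-listing} to list all dominating sets of size $2$ in time $m^{\omega/2+o(1)}$. The crucial observation is that, since this algorithm runs in time $m^{\omega/2+o(1)}$, the number of dominating sets of size $2$ it outputs is itself at most $m^{\omega/2+o(1)}$, as each reported pair costs at least constant time. Hence we can afford to examine every listed pair explicitly. Next I would build, in $O(m)$ preprocessing time, a hash table storing $E(G)$, so that for any pair $(u,v)$ we can test $uv\in E(G)$ in $O(1)$ expected time. Iterating over the $m^{\omega/2+o(1)}$ listed pairs $\{x_1,x_2\}$, I would discard those with $x_1=x_2$ or $x_1 x_2\in E(G)$ and return the first surviving pair, which is exactly a dominating set of size $2$ inducing an independent set on two vertices; if none survives, then no Dominating $2$-Independent Set exists. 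The total running time is dominated by the listing step, giving $m^{\omega/2+o(1)}$, and the algorithm is correct with high probability, inheriting the success guarantee of \autoref{lemma:2ds-listing}.

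There is no substantial obstacle beyond the single observation that the running-time bound of the listing subroutine simultaneously bounds its output size, which is what makes the naive post-filtering affordable. The only minor care needed is in the adjacency test, which must be performed in expected constant time via hashing rather than by scanning neighbourhoods, so that the per-pair cost does not blow up the total and the overall bound $m^{\omega/2+o(1)}$ is preserved.
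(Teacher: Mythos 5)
Your proposal is correct and matches the paper's proof: both invoke Lemma~\ref{lemma:2ds-listing} to list all dominating sets of size $2$ in time $m^{\omega/2+o(1)}$ and then filter each listed pair for independence in $O(1)$ time. Your additional remarks about the output size being bounded by the listing time and about implementing the adjacency test via hashing are just explicit justifications of the constant-time check the paper asserts.
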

\begin{proof}
    By \autoref{lemma:2ds-listing}, we can list all dominating sets of size $2$ in time $m^{\omega/2 + o(1)}$ and for each we can in $\bigO(1)$ time check if it forms an independent set.
\end{proof}
We can now give a full algorithm with the analysis by exploiting the previous lemmas.
\indsetdomalg*
\begin{proof}
    If $k=2$ we apply \autoref{lemma:2ids-alg} to solve the problem in $m^{\omega/2+o(1)}$.
    For larger $k$, for each heavy vertex $v$, we ask if $G-N[v]$ contains an independent set of size $(k-1)$ that dominates $G$.
    If for any choice of $v$ the recursive algorithm returns YES, we return YES and otherwise return NO.
    We only have to analyse the time complexity.
    We know by the previous lemma that $T_2(n,m) = m^{\omega/2 + o(1)}$.
    For larger values of $k$, we have
    \begin{align*}
        T_k(n,m) & \leq \sum_{v\in \mathcal{H}}T_{k-1}(n - |N[v]|,m)\\
                 & \leq \min(n,\frac{m}{n})\cdot \max_{v\in \mathcal{H}}T_{k-1}(n - |N[v]|,m)\\
                 &\leq \frac{m}{n}\cdot \max_{\delta \in [0,1]}T_{k-1}(n^\delta,m)
    \end{align*}
    Now it only remains to bound the value $\max_{\delta \in [0,1]}T_{k-1}(n^\delta,m)$.
    If $k\leq 3$, this value is bounded by $m^{\omega/2+o(1)}$ and we obtain the claimed running time. 
    So assume that $k\geq 4$ and consider two separate cases, namely when $n^\delta<\sqrt{m}$ and when $n^\delta\geq \sqrt{m}$.
    In the former case, we can simply list all dominating sets of size $k-1$ in time $n^{k-1 + o(1)}<m^{\frac{k-1}{2} + o(1)}$ (assuming $\omega = 2$) using the algorithm from \cite{PatrascuW10}\footnote{If $k\geq 8$, we can obtain this running time, even with the current value of $\omega$.}, and this again yields a running time of $(\frac{m^{(k-1 + \omega)/2}}{n})^{1+o(1)}$.
    In the latter case we can proceed inductively, since $\frac{m}{n^\delta}\leq \sqrt{m}$, we have that $T_{k-1}(n^\delta,m) \leq \sqrt{m}\max_{\delta' \in [0,1]}T_{k-2}(n^{\delta'},m)$ and we can bound $\max_{\delta' \in [0,1]}T_{k-2}(n^{\delta'},m)\leq m^{(k-2)/2}$ by a simple induction on $k$, yielding the desired running time.
\end{proof}

\subsection{Dominating $k$-Induced Matching}
So far we considered three different pattern classes (cliques, independent sets and stars), and in two out of those three cases we can obtain an algorithm that runs in $\big (\frac{m^{(k+1)/2}}{n}\big)^{1+o(1)}$ (if $\omega = 2$), which is the best we can do for any pattern assuming $k$-OV Hypothesis, and in the remaining case we can show an $mn^{k-2-o(1)}$ conditional lower bound, which makes this pattern as hard as any pattern can be.
This suggests that there might be a dichotomy of all $k$-vertex graphs into two classes:
\begin{enumerate}
    \item Easy Patterns (those for which there exists an algorithm solving $H$-Dominating Set in $\big (\frac{m^{(k+1)/2}}{n}\big)^{1+o(1)}$)
    \item Hard Patterns (those for which we can show an $mn^{k-2-o(1)}$ lower bound under $k$-OV Hypothesis).
\end{enumerate}
In this section we show that such a dichotomy is unlikely.
More precisely, we find a pattern which is in neither of those two categories (unless $k$-OV Hypothesis fails).
Let $k$-induced matching be a graph consisting of $k/2$ independent edges.
In this section we prove that we can solve the Dominating $k$-Induced Matching problem in $m^{k/2 + o(1)}$ running time (if $\omega=2$) and provide a matching conditional lower bound under $k$-OV Hypothesis.
Due to lack of space, we only state our results; the full proofs can be found in Appendix \ref{appendix-patt-dom}.
\begin{restatable}{theorem}{ThInducedMatchingAlgorithm}
    Given a graph $G$ with $n$ vertices and $m$ edges, we can solve Dominating $k$-Induced Matching in time 
    \[\
    \MM(m^{\lceil\frac{k}{2}\rceil}, n, m^{\lfloor \frac{k}{2}\rfloor}).
    \]
    If $\omega=2$, this running time becomes $m^{\frac{k}{2}+o(1)}$ for every even $k\geq 4$.
\end{restatable}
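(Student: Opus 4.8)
The plan is to reduce Dominating $k$-Induced Matching to a single rectangular matrix product, in the style of the $k$-Dominating Set algorithm of~\cite{eisenbrand2004complexity,fischer2024effect} used above, but indexed by \emph{edges} rather than vertices. Observe first that a solution is precisely a collection of $k/2$ pairwise vertex-disjoint edges $e_1,\dots,e_{k/2}$ such that (i) no edge of $G$ joins endpoints of two distinct $e_i,e_j$ (so that $G$ restricted to the $k$ endpoints is \emph{exactly} the matching, i.e.\ the pattern is induced), and (ii) the $k$ endpoints dominate $G$. Since there are only $\bigO(m^{k/2})$ collections of $k/2$ edges, the target running time $m^{k/2+o(1)}$ is essentially ``one operation per candidate''; the difficulty is to verify domination for all candidates without spending $\Theta(n)$ per candidate.

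First I would split the $k/2$ matching edges into two groups. Let $\mathcal{S}$ be the set of all collections of $\lceil k/4\rceil$ pairwise disjoint edges and $\mathcal{T}$ the set of all collections of $\lfloor k/4\rfloor$ pairwise disjoint edges, so that $|\mathcal{S}|=\bigO(m^{\lceil k/4\rceil})$ and $|\mathcal{T}|=\bigO(m^{\lfloor k/4\rfloor})$; for a collection $S$ write $V(S)$ for its set of endpoints. Following the domination-via-matrix-product idea, build the $0/1$ matrix $A$ with rows indexed by $\mathcal{S}$ and columns by $V$, setting $A[S,v]=1$ iff $v\notin N[V(S)]$, and the $0/1$ matrix $B$ with rows indexed by $V$ and columns by $\mathcal{T}$, setting $B[v,T]=1$ iff $v\notin N[V(T)]$. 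Letting $C:=A\cdot B$, the entry
\[
C[S,T]=\bigl|\{v\in V: v\notin N[V(S)]\text{ and }v\notin N[V(T)]\}\bigr|=\bigl|V\setminus N[V(S)\cup V(T)]\bigr|
\]
equals $0$ exactly when the endpoints of $S\cup T$ dominate $G$. Computing $C$ costs $\MM(m^{\lceil k/4\rceil},n,m^{\lfloor k/4\rfloor})$.

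It remains to enforce that $S\cup T$ is an induced matching. Here I would simply scan the $\bigO(m^{k/2})$ entries of $C$ and, for each entry with $C[S,T]=0$, check in $\bigO(k^2)=\bigO(1)$ time whether $V(S)$ and $V(T)$ are disjoint and whether the $k$ endpoints induce \emph{exactly} the $k/2$ edges of $S\cup T$ (condition (i)); output \textsc{yes} iff some zero-entry passes this check. Correctness is immediate: any genuine solution, split into its first $\lceil k/4\rceil$ and remaining $\lfloor k/4\rfloor$ edges, yields a zero-entry that passes the check, and the check rules out all false positives. The total cost is $\MM(m^{\lceil k/4\rceil},n,m^{\lfloor k/4\rfloor})+\bigO(m^{k/2})$; under $\omega=2$ we have $\MM(a,b,c)\le(ab+ac+bc)^{1+o(1)}$, and since $m\ge n$ and $\lfloor k/4\rfloor\ge 1$ for even $k\ge 4$, each of the three terms is at most $m^{k/2}$, giving the claimed $m^{k/2+o(1)}$.

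The main obstacle is the tension between the two requirements: domination is a \emph{global} condition over $V$ that the matrix product handles by summing over the middle index, whereas the induced-matching condition is a \emph{local} relation between the two index sets $\mathcal{S}$ and $\mathcal{T}$ (the absence of cross-edges between $V(S)$ and $V(T)$) that is not mediated by the middle index and hence cannot be folded into the same sum. The clean way around this is the observation above that these structural checks cost only $\bigO(1)$ per pair and need only be run on zero-entries, of which there are at most $|\mathcal{S}|\cdot|\mathcal{T}|=\bigO(m^{k/2})$ -- exactly the budget. If one prefers to avoid the scan, the same structural information can instead be baked into polynomial-valued matrix entries, as in the proof of Theorem~\ref{th:rleqk-2:algorithm}, at no asymptotic cost.
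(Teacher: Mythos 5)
Your proposal is correct and follows essentially the same route as the paper's proof: split the $k/2$ matching edges into groups of $\lceil k/4\rceil$ and $\lfloor k/4\rfloor$, use a product of (complemented) edge-set-versus-vertex domination matrices so that a zero entry certifies domination, and then filter the $\bigO(m^{k/2})$ zero-entries with a constant-time induced-matching check. The only differences are cosmetic (you define the complemented matrices directly and restrict to disjoint edge collections), and your stated cost $\MM(m^{\lceil k/4\rceil},n,m^{\lfloor k/4\rfloor})+\bigO(m^{k/2})$ matches the construction actually carried out in the paper.
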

Finally, we show that this running time cannot be significantly improved, unless $k$-OV Hypothesis fails.
To achieve this, we apply a simple adaptation of the reduction from \autoref{prop:h-dom-lb}.
\begin{restatable}{theorem}{ThInducedMatchingLB}
    For no even $k\geq 4$ and $\varepsilon>0$ is there an algorithm solving Dominating $k$-Induced Matching in time $O(m^{\frac{k}{2}-\varepsilon})$, unless the $k$-OV hypothesis fails.
\end{restatable}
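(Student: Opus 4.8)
The plan is to reduce from (moderate-dimensional) $k$-OV, adapting the reduction behind \autoref{prop:h-dom-lb} so that every size-$k$ dominating set is forced to induce exactly a perfect matching $\tfrac{k}{2}K_2$, while arranging the part sizes so that the number of candidate matchings equals the search space $\prod_i |A_i| = \Theta(m^{k/2})$ of the OV instance. Given sets $A_1,\dots,A_k$, I would create one vertex part $X_i$ per set (a vertex per vector), a set of dimension vertices $D=[d]$, and a constant-size forcing gadget $R$ as in \autoref{prop:h-dom-lb}. The matching is imposed purely through cross-part adjacencies: for each $i\in[k/2]$ I make $X_{2i-1}$ and $X_{2i}$ \emph{complete bipartite}, while leaving every unpaired pair of parts edge-free. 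Then any transversal $x_1\in X_1,\dots,x_k\in X_k$ automatically induces exactly the edges $x_1x_2,\dots,x_{k-1}x_k$ regardless of which vectors are chosen, i.e.\ the desired induced matching.

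Orthogonality is encoded as usual by joining $x_i\in X_i$ to $t\in D$ if and only if $a_i[t]=0$, so that a transversal dominates $D$ precisely when the chosen vectors are orthogonal; and since a paired $x_{2i-1},x_{2i}$ already dominates all of $X_{2i-1}\cup X_{2i}$, every vector vertex is dominated automatically. The gadget $R$ forces any dominating set of size $k$ that induces a matching to be exactly such a transversal. Correctness then follows in both directions: an orthogonal tuple yields a dominating $k$-induced matching, and conversely any dominating $k$-induced matching is a transversal whose vectors must be orthogonal since they dominate $D$.

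For the quantitative part I would tune the part sizes to hit the target density. Taking $|X_{2i-1}|=n$ and $|X_{2i}|=n^{\gamma}$ for each pair gives $\Theta(n^{1+\gamma})$ edges per pair, hence a graph with $N=\Theta(n)$ vertices and $M=\Theta(n^{1+\gamma})$ edges (the $|D|\cdot kN = n^{1+o(1)}$ dimension edges and the $O(1)$ gadget edges are negligible for $d=n^{o(1)}$). The number of candidate transversals is $\prod_i |A_i| = (n^{1+\gamma})^{k/2} = \Theta(M^{k/2})$, so an algorithm running in $O(M^{k/2-\varepsilon})$ would decide the underlying $k$-OV instance in time $O\big((\prod_i |A_i|)^{1-\varepsilon'}\big)$, refuting the $k$-OV Hypothesis. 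Specializing to $\gamma=1$ already gives balanced instances with $M=\Theta(N^2)$ and search space $N^{k}=\Theta(M^{k/2})$, which proves the theorem as stated against the standard $k$-OVH; the general $\gamma$ additionally witnesses hardness across the whole sparse range $M=\Theta(N^{1+\gamma})$, separating this problem from the easy regime $\big(\tfrac{m^{(k+1)/2}}{n}\big)^{1+o(1)}$.

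The main obstacle is verifying the forcing gadget $R$: I must rule out every ``cheating'' size-$k$ dominating set that induces a $\tfrac{k}{2}$-matching but is not a clean transversal --- in particular one that places two vertices in a single part (they would be non-adjacent, breaking the matching), skips a part, or uses a vertex of $D$ or $R$ (whose neighborhoods are incompatible with a perfect matching on the selected $k$ vertices). Checking that the closed neighborhoods of $R$ enforce exactly one solution vertex per part while leaving the induced subgraph equal to $H$ is the delicate step; once this is established, the edge/vertex counting and the equivalence between orthogonality and domination are routine.
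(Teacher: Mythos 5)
Your proposal is correct and follows essentially the same route as the paper's proof: a reduction from unbalanced $k$-OV with one part per vector set, complete bipartite edges between consecutive paired parts (and no edges between unpaired parts) so that any transversal automatically induces a perfect matching, orthogonality encoded via the dimension vertices, and part sizes $n$ and $m/n$ alternating so that $\prod_i|A_i|=\Theta(m^{k/2})$. The only cosmetic difference is that you force one solution vertex per part with the redundant-vertex gadget $R$ from the $H$-domination lower bound, whereas the paper achieves the same forcing by padding the dimension set with $k+1$ ``special coordinates'' per part; both mechanisms are interchangeable and your flagged verification of the gadget goes through as you describe.
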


\bibliographystyle{plainurl}
\bibliography{refs}
\appendix
\section{Hardness Assumptions}\label{app-hardness}
Consider the $k$-Orthogonal Vectors problem ($k$-OV) that is stated as follows.
Given $k$ sets $A_1,\dots, A_{k}$ of $d$-dimensional binary vectors, decide whether there exist vectors $a_1\in A_1,\dots, a_k\in A_k$ such that for all $t\in [d]$, it holds that $\prod_{i=1}^{k} a_i[t] = 0$.
A simple brute force approach solves the $k$-OV in time $\bigO(d\cdot \prod_{i\in [k]}|A_i|)$.
On the other hand, it is known that for sufficiently large $d$ ($d=\log^2(|A_1| + \dots + |A_k|)$), any polynomial improvement over this running time would refute SETH.
\begin{conjecture}[$k$-OV Hypothesis]
    For no $\varepsilon>0$ and for no $0\leq \gamma_1,\dots,\gamma_k\leq 1$ is there an algorithm solving $k$-OV with $|A_1| = n^{\gamma_1}, \dots, |A_k| = n^{\gamma_k}$, $d=\log^2 n$ in time $\bigO(n^{(\sum_{i=1}^{k}\gamma_i)-\varepsilon})$.
\end{conjecture}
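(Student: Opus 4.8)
Strictly speaking, the $k$-OV Hypothesis is an assumption rather than a provable statement, so ``proving'' it can only mean establishing the claim made in the sentence preceding it: that refuting the hypothesis would refute SETH. The plan is therefore to exhibit a reduction showing that SETH \emph{implies} the $k$-OV Hypothesis, i.e., that an algorithm running in time $\bigO(n^{(\sum_i \gamma_i)-\varepsilon})$ in the stated parameter regime would yield a $2^{N(1-\delta)}$-time algorithm for CNF-SAT on $N$ variables for some $\delta>0$. The workhorse is the classical split-and-list reduction of Williams, adapted to the two features that distinguish this formulation from the textbook one: the moderate dimension $d=\log^2 n$ and the unbalanced set sizes $|A_i|=n^{\gamma_i}$.

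First I would reduce from CNF-SAT. Given a formula $\varphi$ on $N$ variables with clause set $\mathcal{C}$, set $\log n := N/(\sum_i \gamma_i)$ and partition the variables into $k$ groups $G_1,\dots,G_k$ with $|G_i| = \gamma_i \log n$, so that $\sum_i |G_i| = N$. For each group $G_i$ I enumerate all $2^{|G_i|} = n^{\gamma_i}$ partial assignments, and to each partial assignment $\alpha$ I associate a vector $v_\alpha \in \{0,1\}^{|\mathcal{C}|}$ by setting $v_\alpha[c]=0$ exactly when $\alpha$ already satisfies clause $c$ using the variables of $G_i$, and $v_\alpha[c]=1$ otherwise. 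Letting $A_i$ be the set of these $n^{\gamma_i}$ vectors, a tuple $(\alpha_1,\dots,\alpha_k)$ combines into a satisfying assignment of $\varphi$ if and only if every clause $c$ is satisfied by at least one group, i.e. $\prod_{i=1}^k v_{\alpha_i}[c]=0$ for all $c$; this is exactly the $k$-OV condition. Thus a hypothetical $\bigO(n^{(\sum_i\gamma_i)-\varepsilon})$ algorithm would decide satisfiability in time $\bigO(2^{((\sum_i\gamma_i)-\varepsilon)\log n}) = \bigO(2^{N - \varepsilon N/(\sum_i \gamma_i)})$, a polynomial-factor improvement over $2^N$. Degenerate groups with $\gamma_i=0$ are harmless: they contribute no variables and a single all-ones vector.

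The remaining issue, and the main obstacle, is to force the dimension $d=|\mathcal{C}|$ down to $\log^2 n$. An arbitrary CNF formula may have $\mathrm{poly}(N)=\mathrm{polylog}(n)$ clauses, which can exceed $\log^2 n$ when the polynomial degree is larger than $2$; to guarantee the bound one first applies the Sparsification Lemma of Impagliazzo, Paturi and Zane, reducing $k$-SAT to a disjunction of $2^{\mu N}$ instances, each with $O(N)$ clauses, for any desired $\mu>0$. After sparsification $|\mathcal{C}| = O(N) = O(\log n)$, which we pad up to exactly $\log^2 n$ by appending all-zero coordinates (these leave every product $\prod_i v_{\alpha_i}[c]$ equal to $0$ and so do not affect the condition). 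Each of the $2^{\mu N} = n^{\mu \sum_i\gamma_i}$ subinstances is then solved by the hypothesized algorithm, and the delicate bookkeeping is to check that the polynomial saving survives this blow-up: choosing $\mu$ small enough relative to $\varepsilon$ keeps the total overhead $n^{\mu\sum_i\gamma_i}$ strictly below the $n^{\varepsilon}$ saving, so the combined running time still beats $2^{N}$ by a factor $2^{\Omega(N)}$, contradicting SETH. Once the interplay between $\mu$ and $\varepsilon$ is made precise, the derivation of the hypothesis from SETH is complete.
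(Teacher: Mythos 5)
Your argument is sound, but it takes a genuinely different route from the paper. The paper does not re-derive the hypothesis from SETH at all: it states the unbalanced, moderate-dimensional form as its working assumption, notes that the standard (balanced) form $\gamma_1=\dots=\gamma_k=1$ is the usual SETH-hard one, and justifies the generalization by citing the known \emph{equivalence} of the balanced and unbalanced versions \cite{BringmannFOCS15,fischer2024effect}. That equivalence is a short blockwise self-reduction: given a balanced instance, partition each size-$n$ set $A_i$ into $n^{1-\gamma_i}$ blocks of size $n^{\gamma_i}$ and run the hypothesized unbalanced algorithm on all $\prod_i n^{1-\gamma_i}$ block combinations, for total time $n^{k-\sum_i\gamma_i}\cdot n^{(\sum_i\gamma_i)-\varepsilon}=n^{k-\varepsilon}$, contradicting the balanced hypothesis. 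You instead prove SETH $\Rightarrow$ unbalanced $k$-OV directly, via Williams' split-and-list with variable groups of proportional sizes $\gamma_i\log n$ and IPZ sparsification to force $d=O(N)\leq \log^2 n$; your accounting ($2^{N-\varepsilon N/\sum_i\gamma_i}$ per instance times $2^{\mu N}$ sparsified instances, with $\mu$ chosen small relative to $\varepsilon/\sum_i\gamma_i$) is correct. What each approach buys: yours is self-contained and explicitly handles the $d=\log^2 n$ regime, while the paper's anchors the unbalanced assumption to the balanced $k$-OV hypothesis itself, which is formally weaker than SETH — the assumption would survive even a refutation of SETH, and the self-reduction is a one-liner. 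Two small points to tighten in your write-up: SETH is a statement about width-$w$ CNFs for every constant $w$ (the Sparsification Lemma requires bounded width, so the reduction should start from $w$-SAT with $\mu=\mu(w,\varepsilon)$ chosen per width, which your final bookkeeping already accommodates but your opening phrase ``CNF-SAT'' obscures), and the group sizes $\gamma_i\log n$ need rounding to integers, which costs only a constant factor in the set sizes.
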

Typically, the $k$-OV hypothesis is stated for the special case for $\gamma_1 = \dots = \gamma_k = 1$, which we refer to as \emph{Balanced $k$-OV Hypothesis}. 
However, these two hypotheses are known to be equivalent \cite{BringmannFOCS15,fischer2024effect}, and for the purposes of this paper, we benefit from using the more general one.

The $k$-Clique Detection problem is given a graph $G$ with $n$ vertices to decide if $G$ contains a clique of size $k$. 
For the special case of $k=3$, there is a simple matrix-multiplication based algorithm that detects triangles in $n^\omega$ running time. 
Moreover, for larger $k$ (divisible by $3$), one can solve $k$-Clique Detection in $n^{\omega k/3}$ by reducing to Triangle Case \cite{nevsetvril1985complexity,eisenbrand2004complexity}.
Notably, no improvements over these simple algorithms have been made for decades, thus suggesting that they might be optimal and leading to the following hypothesis (see e.g. \cite{abboud2018if}).
\begin{conjecture}[$k$-Clique Hypothesis]
    For no $\varepsilon>0$ and $k\geq 3$ is there an algorithm solving $k$-Clique Detection in time $O(n^{k\omega/3 - \varepsilon})$.
\end{conjecture}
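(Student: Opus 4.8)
The statement as worded is a hardness \emph{hypothesis}, which cannot be proven outright; what one can establish, and what motivates it, is the matching algorithmic upper bound asserted in the surrounding text: that $k$-Clique Detection is solvable in time $n^{k\omega/3 + o(1)}$ for $k$ divisible by $3$. The plan is therefore to exhibit the classical reduction of Nešetřil and Poljak from $k$-Clique to Triangle Detection and to bound its cost by $\bigO(n^{k\omega/3})$, thereby justifying the exponent that the hypothesis conjectures to be optimal.

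First I would reduce triangle detection to a single matrix multiplication: given the adjacency matrix $A$ of an $N$-vertex graph, compute $A^2$ over the integers in time $\MM(N,N,N) = \bigO(N^\omega)$, and report a triangle precisely when there is a pair $(i,j)$ with $A[i,j]=1$ and $(A^2)[i,j]\geq 1$. This is correct because $(A^2)[i,j]$ counts common neighbours of $i$ and $j$, so a positive entry together with the edge $ij$ certifies a triangle. Next, for $k=3t$, I would build an auxiliary tripartite graph $G'$ whose three vertex classes each consist of the (at most $\binom{n}{t}=\bigO(n^t)$) vertex sets of size $t$ inducing a $t$-clique in $G$. I place an edge between a $t$-clique $P$ in one class and a $t$-clique $Q$ in another precisely when $P\cup Q$ induces a $2t$-clique, i.e.\ every cross pair $\{p,q\}$ with $p\in P$, $q\in Q$ is an edge of $G$.

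The key correspondence to verify is that a triangle $\{P,Q,R\}$ in $G'$ (one super-vertex per class) exists if and only if $P\cup Q\cup R$ is a $3t=k$-clique in $G$: the three pairwise super-edges guarantee all cross edges among $P,Q,R$, while each of $P,Q,R$ is internally a clique by construction, so the union is pairwise adjacent. Running the triangle detector on $G'$, which has $N=\bigO(n^t)$ vertices, costs $\bigO(N^\omega)=\bigO(n^{\omega k/3})$, and constructing $G'$ costs $\bigO(N^2 k^2)=\bigO(n^{2k/3})$, which is dominated since $\omega\geq 2$.

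The main technical point is the exactness of this triangle-to-clique correspondence: one must ensure super-edges encode only the cross pairs (intra-class adjacencies being already forced by defining super-vertices to be genuine $t$-cliques), so that no spurious triangle arises from super-vertices that are not cliques or whose union fails to be complete; restricting each class to true $t$-cliques and each super-edge to the full bipartite adjacency of $P$ against $Q$ handles this cleanly. For $k$ not divisible by $3$, I would instead partition into groups of sizes $\lceil k/3\rceil, \lceil k/3\rceil, \lfloor k/3\rfloor$ and replace square triangle detection by a rectangular product $\MM(n^{\lceil k/3\rceil}, n^{\lfloor k/3\rfloor}, n^{\lceil k/3\rceil})$, which recovers the exponent $n^{\omega k/3+o(1)}$ when $\omega=2$ and in general reproduces the folklore bound cited as the motivation for the hypothesis.
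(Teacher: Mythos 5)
You correctly recognize that the statement is a hypothesis and therefore not provable, and the paper likewise offers no proof, justifying the conjectured exponent only by citing the folklore Ne\v{s}et\v{r}il--Poljak reduction of $k$-Clique (for $k$ divisible by $3$) to triangle detection via fast matrix multiplication --- which is exactly the algorithm you spell out, so your argument coincides with the paper's own motivation. One small slip in your closing remark: for $k\equiv 1 \pmod 3$ the group sizes $\lceil k/3\rceil, \lceil k/3\rceil, \lfloor k/3\rfloor$ sum to $k+1$, so the correct partition is into sizes $\lceil k/3\rceil$, $\lfloor k/3\rfloor$, and $k-\lceil k/3\rceil-\lfloor k/3\rfloor$, after which the rectangular product $\MM\big(n^{\lceil k/3\rceil}, n^{\lfloor k/3\rfloor}, n^{k-\lceil k/3\rceil-\lfloor k/3\rfloor}\big)$ yields the bound you state (matching $n^{k\omega/3+o(1)}$ when $\omega=2$).
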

The $h$-Uniform $k$-Hyperclique Detection problem is given an $h$-uniform hypergraph $G$ with $n$ vertices to decide if $G$ contains a hyperclique of size $k$ (i.e. $k$ vertices $x_1,\dots, x_k$ such that each $h$-tuple $x_{i_1},\dots, x_{i_h}$ of pairwise distinct $i_1,\dots, i_h\in [k]$ forms an edge in $G$).
When dealing with $h$-uniform hypergraphs, it turns out that the same matrix multiplication techniques used for $k$-Clique Detection cannot be used to improve over brute-force.
In fact, for $h\geq 3$, no algorithm running in $n^{k-\varepsilon}$ is known to be able to detect if an $n$ vertex graph $G$ contains an $h$-uniform hyperclique of size $k$, and moreover any such improvement would cause a breakthrough in other notoriously hard problems as well, most notably Max-$h$-SAT and Max Weight $k$-Clique (see e.g. \cite{lincoln2018tight} for a dedicated discussion on hardness of hyperclique detection).
This prompts the following hypothesis.
\begin{conjecture}[$h$-Uniform $k$-Hyperclique Hypothesis]
    For no $\varepsilon>0, h\geq 3, k\geq h+1$ is there an algorithm solving $h$-Uniform $k$-Hyperclique Detection in time $O(n^{k - \varepsilon})$.
\end{conjecture}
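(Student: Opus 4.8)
The plan is complicated by the fact that the final statement is a \emph{conjecture} --- a fine-grained hardness hypothesis --- rather than a theorem, so there is no proof to give in the usual sense: its role is to serve as an unconditional assumption from which the paper derives conditional lower bounds (e.g.\ \autoref{th:rleqk-2:lb}). Accordingly, what one can actually do is assemble the standard evidence that \emph{motivates} the hypothesis, and I would organize this in three parts: (i) confirm the trivial upper bound that the conjecture claims to be essentially optimal, (ii) explain why the algebraic speed-ups available for ordinary clique fail here, and (iii) relate the problem to notoriously hard problems so that any polynomial improvement would constitute a breakthrough elsewhere.

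First I would record the brute-force upper bound: enumerating all $\binom{n}{k} = O(n^k)$ candidate vertex $k$-tuples and checking, in $O(1)$ time for fixed $h,k$, whether every $h$-subset forms a hyperedge solves $h$-Uniform $k$-Hyperclique Detection in time $O(n^k)$. This fixes the target exponent $k$ that the hypothesis asserts cannot be beaten by any polynomial factor, i.e.\ that no $O(n^{k-\varepsilon})$ algorithm exists.

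Next I would explain the obstruction to the matrix-multiplication approach. For ordinary clique ($h=2$) one partitions the vertices into blocks and reduces $k$-Clique to triangle detection on an auxiliary graph, exploiting that adjacency is a \emph{binary} relation expressible directly as matrix entries; this yields the $n^{\omega k/3}$ bound and, if $\omega=2$, still cannot break the exponent below the clique-hypothesis threshold. For $h\geq 3$ the governing constraint is a genuine $h$-ary relation on vertices, and there is no known way to encode an $h$-ary hyperedge predicate as entries of a two-dimensional matrix product, so the fast-matrix-multiplication toolkit provides no leverage; I would make precise that, despite substantial effort, no $n^{k-\varepsilon}$ algorithm is known for any $h\ge 3$, which is the empirical core of the hypothesis. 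Finally, I would invoke the reductions from the hyperclique-hardness literature \cite{lincoln2018tight}, which show that an $n^{k-\varepsilon}$-time hyperclique algorithm would yield improved algorithms for Max-$h$-SAT and Max-Weight $k$-Clique, each itself believed to require essentially exhaustive search.

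The \textbf{main obstacle}, and precisely the reason this appears as a hypothesis rather than a theorem, is that an \emph{unconditional} lower bound of $n^{k-o(1)}$ would amount to an algorithmic/circuit lower bound far beyond the reach of current techniques. The honest ``proof'' is therefore not a proof at all but the conjunction of the optimal-looking $O(n^k)$ upper bound, the structural failure of algebraic methods for $h\ge 3$, and the web of reductions certifying that any progress here forces simultaneous progress on several independently resistant problems; this is the most one can responsibly claim, and it is exactly the support the paper relies on when using the hypothesis as a source of conditional hardness.
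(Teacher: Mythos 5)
You correctly identify that this statement is a hypothesis rather than a provable theorem, and your supporting discussion mirrors the paper's own treatment in its appendix: both note the $O(n^k)$ brute-force baseline, the failure of matrix-multiplication techniques to extend from graphs to $h$-uniform hypergraphs for $h\geq 3$, and the reductions (via \cite{lincoln2018tight}) showing that an $n^{k-\varepsilon}$ algorithm would imply breakthroughs for Max-$h$-SAT and Max Weight $k$-Clique. This is essentially the same justification the paper gives, so there is nothing to correct.
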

For the purposes of this paper, we need a seemingly slightly more general assumption. Specifically, we assume that we cannot detect an $h$-uniform $k$-hyperclique in a $k$-partite graph $G = (V_1,\dots, V_k, E)$ 
significantly faster than brute-force.
\begin{conjecture}[Unbalanced $h$-Uniform $k$-Hyperclique Hypothesis]
    For no $\varepsilon>0, h\geq 3, k\geq h+1$ is there an algorithm solving $h$-Uniform $k$-Hyperclique Detection in $k$-partite graph $G = (V_1,\dots, V_k, E)$ in time $O(\big(\prod_{i\in [k]}|V_i|\big)^{1-\varepsilon})$.
\end{conjecture}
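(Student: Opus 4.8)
Since the final statement is a hardness \emph{hypothesis} rather than an unconditional claim, the natural goal is not to prove it outright but to justify it by deriving it from the (non-partite) $h$-Uniform $k$-Hyperclique Hypothesis stated just above, exactly in the spirit of the remark made for $k$-OV. The plan is therefore to show that any algorithm \emph{refuting} the unbalanced version would also refute the standard version, so that the unbalanced hypothesis is implied by --- and hence no stronger (indeed no less plausible) than --- the standard one.

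First I would observe that an algorithm $\mathcal{A}$ solving $k$-partite $h$-uniform $k$-hyperclique in time $\bigO\big((\prod_{i\in[k]}|V_i|)^{1-\varepsilon}\big)$ over arbitrary part sizes in particular handles the balanced case $|V_1|=\dots=|V_k|=n$ in time $\bigO\big((n^k)^{1-\varepsilon}\big)=\bigO(n^{k-k\varepsilon})$. It then remains to reduce the general, non-partite problem to this balanced partite case. I would use the standard color-splitting construction: given an $h$-uniform hypergraph $H$ on $n$ vertices, take $k$ disjoint copies $V_1,\dots,V_k$ of $V(H)$, and for every choice of $h$ vertices lying in $h$ pairwise distinct parts, insert a hyperedge if and only if their underlying vertices in $H$ are pairwise distinct and form a hyperedge of $H$. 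A selection of one vertex per part is a $k$-hyperclique in the new graph precisely when the underlying vertices are pairwise distinct and form a $k$-hyperclique of $H$, and conversely every $k$-hyperclique of $H$ lifts to one by placing its $i$-th vertex into part $i$. Running $\mathcal{A}$ on this $\bigO(n)$-vertex instance therefore decides $k$-hyperclique on $H$ in time $\bigO(n^{k-k\varepsilon})$, refuting the standard hypothesis with $\varepsilon':=k\varepsilon>0$.

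The argument is essentially routine, and I do not expect a serious obstacle; the only point needing care is the correctness of the partite reduction --- specifically, that forbidding hyperedges among repeated underlying vertices suffices to force the $k$ chosen vertices of any partite hyperclique to project to $k$ genuinely distinct vertices of $H$. This holds because any $h$-subset of the chosen vertices that contained two copies of the same vertex of $H$ would, by construction, fail to be a hyperedge, so no two parts can be occupied by the same underlying vertex in a valid $k$-hyperclique; here the interaction between $h\ge 3$-uniformity and the one-vertex-per-part structure is exactly what makes the distinctness constraint enforceable. I would close by noting that this establishes the unbalanced hypothesis as a consequence of the standard $h$-Uniform $k$-Hyperclique Hypothesis, so that relying on it throughout the paper is as safe as relying on the standard form.
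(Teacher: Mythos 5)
You correctly identify that the task is to justify the hypothesis by deriving it from the standard $h$-Uniform $k$-Hyperclique Hypothesis, and your colour-splitting reduction from the non-partite balanced problem to the balanced $k$-partite problem is correct (including the observation that forbidding hyperedges among repeated underlying vertices forces the $k$ chosen vertices to project to distinct vertices). However, there is a gap between what you prove and what the paper needs. Your argument hinges on the step ``an algorithm over arbitrary part sizes in particular handles the balanced case'': this only rules out an algorithm that beats $\big(\prod_{i}|V_i|\big)^{1-\varepsilon}$ \emph{uniformly over all size profiles}, because you exploit the hypothesized algorithm solely on balanced instances. The paper, by contrast, asserts that the two hypotheses are fully equivalent (``refuting the Unbalanced \dots would refute the \dots and vice-versa'') and, crucially, uses the unbalanced hypothesis in a per-profile form: the hardness of $r$-Multiple $k$-OV is claimed ``even when restricted to $|X_i|=\Theta(n^{\gamma_i})$ for an arbitrary choice of $\gamma_1,\dots,\gamma_k$'', which requires that no algorithm beats the bound \emph{even on instances of one fixed unbalanced profile}. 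Your reduction produces only balanced partite instances, so it cannot exploit an algorithm that is fast only for, say, $|V_1|=n^{1/2}$ and $|V_2|=\dots=|V_k|=n$.

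The missing ingredient is the self-reduction the paper alludes to (analogous to the equivalence of the balanced and unbalanced $k$-OV hypotheses): given a balanced $k$-partite instance with parts of size $n$ and a target profile $(n^{\gamma_1},\dots,n^{\gamma_k})$, partition part $i$ into $n^{1-\gamma_i}$ blocks of size $n^{\gamma_i}$ and run the hypothesized unbalanced algorithm on all $\prod_i n^{1-\gamma_i}$ combinations of blocks; a $k$-hyperclique exists iff some combination contains one. The total time is $\prod_i n^{1-\gamma_i}\cdot\big(\prod_i n^{\gamma_i}\big)^{1-\varepsilon}=n^{k-\varepsilon\sum_i\gamma_i}$, which, combined with your colour-splitting step to pass from the non-partite to the partite setting, refutes the standard hypothesis. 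Adding this block-decomposition step would both close the gap and establish the ``vice versa'' direction of the equivalence that the paper asserts.
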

However, it turns out that these two assumptions are equivalent in a sense that refuting the Unbalanced $h$-Uniform $k$-Hyperclique Hypothesis would refute the $h$-Uniform $k$-Hyperclique Hypothesis and vice-versa.
The proof is a straightforward self-reduction and is analogous to the proof that the $k$-OV Hypothesis is equivalent to the Unbalanced $k$-OV Hypothesis, see~\cite{BringmannFOCS15,fischer2024effect}.
\section{$r$-Multiple $k$-Dominating Set}\label{appendix-mult-dom}
\subsection{Algorithms}
\lemmaheavynodes*
\begin{proof}
    Let $v_1,\dots, v_k$ be any $r$-Multiple Dominating Set of size $k$, and assume $\deg(v_i)\geq \deg(v_j)$ for any $i\leq j$. 
    From the definition of $r$-Multiple $k$-Dominating Set, we can obtain the following inequality:
    \begin{align} \label{eq:php}
        \sum_{i=1}^{k} \deg^*(v_i) \geq rn.
    \end{align}
    Assume for contradiction that $\deg^*(v_i)\leq \frac{n}{k} - 1$ for each $i\geq r$. 
    We can construct the following chain of inequalities.
    \begin{align*}
        \sum_{i=1}^{k} \deg^*(v_i) &\leq (r-1)\cdot n + (\frac{n}{k}-1)(k-r+1)& \\
        &= (r-1)\cdot n + n - \frac{rn}{k} + \frac{n}{k} - k + r -1&  \\
        &= r\cdot n - \frac{n}{k}(r-1) - (k-r + 1)& (r\geq 1) \\
        &\leq r\cdot n - (k-r+1) & (r\leq k) \\
        & \leq rn-1 &
    \end{align*}
    Contradicting \ref{eq:php} and concluding the proof.
\end{proof}
\PropositionUnbalancedCliqueConditions*
\begin{proof}
    Let $\alpha$ and $\beta$ be positive integers satisfying $\alpha + 2\beta + 1= k$ (we will set the exact values shortly).
    Let $W_1$ be the set consisting of all cliques $v_1\in V_1,\dots, v_{\alpha+1}\in V_{\alpha+1}$ of size $\alpha+1$.
    Similarly, let $W_2$ and $W_3$ be the sets consisting of cliques of size $\beta$ from sets $V_{\alpha + 2}, \dots, V_{\alpha + \beta + 1}$ and $V_{\alpha + \beta + 2}, \dots, V_{k}$ respectively.
    Consider the tripartite graph $G'$ consisting of sets $W_1,W_2,W_3$ by adding an edge between any pair $w_i\in W_i, w_j\in W_j$ for $i\neq j$ if and only if $w_i \cup w_j$ form a clique in $G$.
    It is now straightforward to verify that $G'$ contains a triangle if and only if $G$ contains a clique of size $k$.
    The only thing that now remains is to determine the values $\alpha$ and $\beta$ such that $W_1,W_2,W_3$ are all of the same size (asymptotically).
    \begin{claim}
        If $\beta = \frac{(k - 1 + \frac{1}{\gamma})}{3}$ and $\alpha = k-1-2\beta$, then $|W_1| = |W_2| = |W_3| = n^{\gamma\beta}$.
    \end{claim}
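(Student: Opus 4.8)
The plan is to bound $|W_1|, |W_2|, |W_3|$ by counting candidate tuples (one vertex per relevant part) and then verify that the chosen $\alpha, \beta$ balance these counts. Since each element of $W_i$ is a clique formed by picking exactly one vertex from each of the parts it spans, the number of candidates --- and hence $|W_i|$ --- is at most the product of the sizes of those parts.

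First I would determine how many parts each $W_i$ spans. $W_1$ spans $V_1, \dots, V_{\alpha+1}$, i.e.\ the single heavy part $V_1$ of size $n$ together with $\alpha$ light parts of size $n^\gamma$, so $|W_1| \le n \cdot (n^\gamma)^\alpha = n^{1+\gamma\alpha}$. $W_2$ spans the $\beta$ light parts $V_{\alpha+2}, \dots, V_{\alpha+\beta+1}$, giving $|W_2| \le (n^\gamma)^\beta = n^{\gamma\beta}$. For $W_3$, spanning $V_{\alpha+\beta+2}, \dots, V_k$, I would first check that the number of parts is $k - (\alpha+\beta+1) = \beta$, using the relation $\alpha + 2\beta + 1 = k$; hence $|W_3| \le n^{\gamma\beta}$ as well.

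The crux is the identity $1 + \gamma\alpha = \gamma\beta$, which forces the $W_1$ bound to coincide with the $W_2, W_3$ bounds. Substituting the definitions: since $3\beta = k-1+\tfrac{1}{\gamma}$ and $\alpha = k-1-2\beta$, we obtain $\beta - \alpha = 3\beta - (k-1) = \tfrac{1}{\gamma}$. Multiplying by $\gamma$ yields $\gamma\beta - \gamma\alpha = 1$, i.e.\ $n^{1+\gamma\alpha} = n^{\gamma\beta}$, so all three sets have size (at most) $n^{\gamma\beta}$ as claimed, matching asymptotically once the clique constraint trims the candidate tuples.

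Finally, to make the splitting meaningful, I would note that $\alpha, \beta$ are valid (positive integer) clique sizes: condition~(\ref{item:divisibility-by-3}) (that $k-1+\tfrac{1}{\gamma}$ is divisible by $3$) makes $\beta$ a positive integer, whence $\alpha = k-1-2\beta$ is automatically an integer, and condition~(\ref{item:bounding-gamma-k}) (that $\tfrac{2}{\gamma} < k-1$) gives $\alpha = \tfrac{(k-1)-2/\gamma}{3} > 0$. I expect no genuine obstacle here --- the whole argument is bookkeeping --- and the only points demanding care are correctly verifying that $W_3$ spans exactly $\beta$ parts and establishing the balancing identity $\beta - \alpha = \tfrac{1}{\gamma}$.
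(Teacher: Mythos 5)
Your proposal is correct and follows essentially the same route as the paper: both reduce the claim to the balancing identity $1+\gamma\alpha=\gamma\beta$ and verify it by direct algebra from the definitions of $\alpha$ and $\beta$ (your derivation via $\beta-\alpha=\tfrac{1}{\gamma}$ is just a slightly tidier rearrangement of the paper's computation). Your explicit check that $W_3$ spans exactly $\beta$ parts and that conditions (1) and (2) make $\alpha,\beta$ positive integers matches what the paper notes at the start of its proof.
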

    \begin{proof}
        We first observe that by condition \ref{item:divisibility-by-3}, both $\alpha$ and $\beta$ are integers and moreover by condition \ref{item:bounding-gamma-k}, they are both positive.
        Notice that now it is sufficient to prove that $n\cdot n^{\gamma\alpha} = n^{\gamma \beta}$ (equivalently $1+\gamma\alpha = \gamma\beta$).
        This can easily be verified as:
        \begin{align*}
            \gamma \beta &= (\gamma(k-1) + 1)/3 \\
            &=\gamma(k-1) + 1 - \frac{2}{3}(\gamma(k-1) + 1) \\
            &= \gamma(k-1) + 1 - 2 \gamma \beta\\
            &= 1 + \gamma (k-1-2\beta) \\
            &= 1 + \gamma \alpha.
        \end{align*}
    \end{proof}
    A standard matrix multiplication algorithm can now detect triangles in $G'$ in time 
    \[(n^{\gamma\beta})^{\omega+o(1)} = \big(n^{\gamma(k-1)+1}\big)^{\omega/3 + o(1)}\]
    as desired.
\end{proof}
\subsection{Lower Bounds} \label{app-multdom-LB}
We start this section by proving the conditional hardness of $r$-Multiple $k$-Orthogonal Vectors problem. 
As we already remarked, for $r=1$, it is equivalent to a well-known problem, namely, $k$-Orthogonal Vectors, for which any algorithm running in time $\big(|A_1|\cdot\dots\cdot|A_k|\big)^{1-\varepsilon}$ would refute SETH, a long standing hardness conjecture.

The goal now is to show some similar conditional hardness result of this problem for general $r$.
In order to achieve this, we use the concept of $h$-hardness introduced in \cite{bringmann2019fine}. 
For a propositional formula $f(z_1,\dots, z_k)$ and an index set $I\subseteq [k]$, an \emph{$I$-restriction} of $f$ is a formula obtained from $f$ after substituting the variables $z_i$ for every $i\in I$ by constant values from $\{0,1\}$.
\begin{definition}
    A propositional formula $f(z_1,\dots, z_k)$ is $h$-hard, $0 \leq h \leq k$, if for any index set $I\in {[k]\choose k-h}$,  there exists some $I$-restriction of $f$ with exactly one falsifying assignment.
\end{definition}
Suppose we are given a $(k+1)$-partite graph $G = (X_1,\dots, X_k,Y, E)$. Define 
\[
\psi := (\exists x_1\in X_1)\dots(\exists x_k\in X_k)(\forall y\in Y)\phi(x_1,\dots, x_k,y),
\]
where $\phi$ is an arbitrary Boolean formula defined over the \emph{edge relations} $E(v,y)$ for $v\in \{x_1,\dots, x_k\}$. 
Note that we can view $\phi$ as a Boolean function $\{0,1\}^k\to \{0,1\}$ that maps the values of $E(x_i,y)$ to a truth value and hence it makes sense to talk about $h$-hardness of $\phi$.

In \cite{bringmann2019fine}, Bringmann, Fischer and Künnemann provide a full fine-grained classification of such properties $\psi$, based on the maximum value $h$ for which $\phi$ is $h$-hard.
Note that for our use-case we don't need the full classification, but the following lemma that follows directly from their classification suffices.
\begin{lemma} \label{lemma:h-hardness}
    Given a graph $G$ and a property $\psi$ as above, let $h$ be the maximum value such that $\phi$ is $h$-hard. 
    Assume we can decide if there are vertices $x_1\in X_1, \dots, x_k\in X_k$ that satisfy $\phi(x_1,\dots, x_k, y)$ for every $y$ in time $T(|X_1|, \dots, |X_k|, |Y|)$. Then
    \begin{enumerate}
        \item \label{item:hyperclique}If $3\leq h$, 
        \[
            T(|X_1|, \dots, |X_k|, |Y|) \geq (|X_1|\cdot \dots \cdot |X_k|)^{1-o(1)},
        \]
        unless the $h$-Uniform Hyperclique hypothesis fails.
        \item \label{item:clique}If $2\leq h< k$, and $|X_1| = \dots =|X_k| = n$, then
        \[
            T(|X_1|, \dots, |X_k|, |Y|) \geq n^{\omega\frac{k}{3}-o(1)}
        \]
        unless the $k$-Clique hypothesis fails.
    \end{enumerate}
\end{lemma}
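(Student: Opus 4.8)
The plan is to obtain the two bounds as the \emph{hyperclique} and \emph{clique} regimes of the first-order classification of Bringmann, Fischer and Künnemann~\cite{bringmann2019fine}: the statement is essentially a restatement of the relevant cases of their dichotomy, so I would prove it by exhibiting the two fine-grained reductions that underlie those cases and reading off the claimed running times. The common engine is $h$-hardness of $\phi$: for an index set $I$ of size $k-h$ it provides an $I$-restriction whose $h$ free coordinates admit \emph{exactly one} falsifying assignment $a^\ast\in\{0,1\}^h$. This single ``bad pattern'' is precisely what allows a universal-variable element $y\in Y$ to serve as a constraint forbidding one specific $h$-tuple of chosen vertices.

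For item~\ref{item:hyperclique} ($h\ge 3$) I would reduce from (unbalanced) $h$-Uniform $k$-Hyperclique, which by the equivalence noted in Appendix~\ref{app-hardness} is as hard as its balanced version. Given a $k$-partite $h$-uniform hypergraph with parts $V_1,\dots,V_k$, set $X_i:=V_i$. For every $h$-subset $S\subseteq[k]$ take $I:=[k]\setminus S$, obtain the restriction constants $(c_i)_{i\in I}$ and the unique falsifying pattern $a^\ast$ granted by $h$-hardness, and for every $h$-tuple $e=(u_i)_{i\in S}$ that is \emph{not} a hyperedge introduce a vertex $y_e\in Y$. I would wire $y_e$ so that $E(x_i,y_e)=c_i$ holds uniformly over all of $X_i$ for $i\in I$ (a complete or empty bipartite connection), while for $i\in S$ the edge bit equals $a^\ast_i$ exactly at $x_i=u_i$ and is the flipped value $1-a^\ast_i$ at every other vertex of $V_i$. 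Then $\phi$ is violated at $y_e$ iff the chosen tuple on $S$ equals the non-hyperedge $e$, so $\forall y\,\phi$ holds iff every $h$-subset of $x_1,\dots,x_k$ is a hyperedge, i.e.\ iff they form a $k$-hyperclique. Here $|Y|=O(\sum_{|S|=h}\prod_{i\in S}|X_i|)$, which is a polynomial factor smaller than $\prod_i|X_i|$ because $h\le k-1$ leaves at least one polynomially large part outside each $S$; hence an algorithm running in $(\prod_i|X_i|)^{1-\varepsilon}$ would solve $h$-uniform $k$-hyperclique in time $(\prod_i|V_i|)^{1-\Omega(\varepsilon)}$, refuting the hypothesis.

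For item~\ref{item:clique} ($2\le h<k$, balanced) I would first record that $h$-hardness is downward closed: fixing one additional free coordinate to its value in $a^\ast$ keeps ``exactly one falsifying assignment'', so $\phi$ is in particular $2$-hard. Instantiating the construction above with $h=2$ turns the non-hyperedge gadgets into non-\emph{edge} gadgets and yields a reduction from balanced $k$-partite $k$-Clique (parts of size $n$) to model-checking $\psi$; now $|Y|=O(n^{2})$, which is dominated by the target for the relevant range of $k$. Consequently an algorithm running in $n^{\omega k/3-\varepsilon}$ would detect $k$-cliques in time $n^{\omega k/3-\varepsilon}$, contradicting the $k$-Clique Hypothesis, and the balance assumption $|X_1|=\dots=|X_k|=n$ is exactly what makes the clique bound read $n^{\omega k/3}$.

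The main obstacle I anticipate is gadget bookkeeping rather than any conceptual hurdle: I must verify that the uniform wiring for the fixed coordinates $i\in I$ together with the ``flip everywhere but $u_i$'' wiring for $i\in S$ jointly realize precisely the intended restriction of $\phi$ at each $y_e$, and that no element of $Y$ spuriously rules out a genuine (hyper)clique. The second point requiring care is the size accounting, namely confirming that the $|Y|$-dependence of $T$ and the construction cost stay polynomially below $\prod_i|X_i|$ (respectively $n^{\omega k/3}$); this is where the hypotheses $h\le k-1$ and polynomially large parts enter, and it is what lets the assumed speed-up transfer cleanly to the hyperclique, respectively clique, problem.
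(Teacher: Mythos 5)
The paper never actually proves this lemma: it is imported wholesale from the classification theorem of Bringmann, Fischer and K\"unnemann~\cite{bringmann2019fine} (``the following lemma \ldots follows directly from their classification''). Your proposal instead reconstructs the two underlying reductions, which is essentially how the cited work establishes the corresponding cases of its dichotomy, so you are supplying a self-contained proof where the paper offers only a citation. The reconstruction is sound: for a fixed $h$-set $S$ with complement $I$, the $I$-restriction supplied by $h$-hardness has a unique falsifying assignment $a^\ast$, and your wiring (uniform bits $c_i$ on the $I$-parts; the bit $a^\ast_i$ exactly at $u_i$ and its negation elsewhere for $i\in S$) makes $y_e$ falsify $\phi$ precisely when the selected tuple restricted to $S$ equals the non-hyperedge $e$, so $\forall y\,\phi$ holds iff the selection is a hyperclique; and the downward-closure observation (extend the restriction by fixing further coordinates to their values in $a^\ast$, preserving uniqueness of the falsifying assignment) is a correct, genuinely needed step to drop from $h$-hardness to $2$-hardness in item~\ref{item:clique}. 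The two caveats you flag are the right ones: the size accounting needs all parts polynomially large and $h\le k-1$ (the latter is forced anyway by the hyperclique hypothesis requiring clique size at least $h+1$, though note the lemma as literally stated admits $h=k$, where neither your reduction nor that hypothesis applies), and the instance must be handed over in a representation that does not swamp the target time, which works because each $y_e$'s adjacency is uniform on the $I$-parts and deviates from uniform in only $h$ positions. What your route buys is independence from the full classification machinery; what the paper's citation buys is brevity and coverage of the regimes (e.g.\ the $h=k$, OV/SETH case) that a hand-rolled reduction would have to treat separately.
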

We can observe that $r$-Multiple $k$-OV can be stated equivalently as follows:
\[
    (\exists x_1\in X_1)\dots(\exists x_k\in X_k)(\forall y\in Y)\big(\bigvee_{1\leq i_1<\dots<i_r\leq k} E(x_{i_1},y)\land \dots \land E(x_{i_r},y)\big)
\]
It is now sufficient to argue $h$-hardness of the propositional formula in the statement of $r$-Multiple $k$-OV above, for every dependence of $r$ and $k$.
\begin{lemma}
    Let \[
    \phi:= \bigvee_{1\leq i_1<\dots<i_r\leq k} E(x_{i_1},y)\land \dots \land E(x_{i_r},y).\]
    For every $r\leq k-1$, $\phi$ is $(k-r+1)$-hard
\end{lemma}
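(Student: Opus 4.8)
The plan is to reinterpret $\phi$ as a symmetric Boolean function and then exhibit the required $I$-restrictions explicitly. Writing $z_i := E(x_i,y)$, the formula becomes $\phi(z_1,\dots,z_k) = \bigvee_{1\le i_1<\dots<i_r\le k} z_{i_1}\wedge\dots\wedge z_{i_r}$; this disjunction evaluates to true precisely when some $r$-subset of the variables is all-true, i.e., $\phi(z_1,\dots,z_k)=1$ if and only if at least $r$ of the $z_i$ equal $1$. Thus $\phi$ is exactly the threshold function $\mathrm{Th}_{\ge r}$ on $k$ variables, and it suffices to reason about this symmetric function.

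To establish $(k-r+1)$-hardness, set $h := k-r+1$ and consider an arbitrary index set $I \in \binom{[k]}{k-h} = \binom{[k]}{r-1}$. I would choose the $I$-restriction that fixes $z_i = 1$ for every $i \in I$. Since $|I| = r-1$, exactly $r-1$ of the required $r$ true variables are already supplied, so the restricted function is true if and only if at least one of the remaining $k-r+1$ free variables is true; that is, the restriction equals the disjunction $\bigvee_{j\notin I} z_j$. A disjunction is falsified by exactly one assignment, namely the all-zeros assignment, so this $I$-restriction has exactly one falsifying assignment, as required by the definition of $h$-hardness.

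Since this argument works uniformly for every $I$ of size $r-1$, and since $r\le k-1$ guarantees $h=k-r+1\ge 2$ (while $r\ge 1$ gives $h\le k$, with the boundary case $r=1$ corresponding to the unrestricted OR $\bigvee_i z_i$ having a single falsifying assignment), we conclude that $\phi$ is $(k-r+1)$-hard. There is essentially no obstacle here; the only point that requires care is the direction of the quantifier in the definition of $h$-hardness—one only needs \emph{some} $I$-restriction with a unique falsifying assignment—so it suffices to make the single convenient choice of setting all restricted variables to true, rather than to reason about arbitrary restrictions.
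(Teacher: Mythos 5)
Your proof is correct and follows essentially the same route as the paper: for each $I\in\binom{[k]}{r-1}$, fix the corresponding variables to $1$, observe that the restriction collapses to the disjunction of the remaining $k-r+1$ variables, and conclude that the all-zeros assignment is the unique falsifying one. The reformulation of $\phi$ as the threshold function $\mathrm{Th}_{\ge r}$ is a pleasant way to see this but does not change the argument.
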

\begin{proof}
    Let $I\in {[k]\choose r-1}$ be arbitrary. 
    Set $E(x_i,y)$ to $1$ for every $i\in I$.
    It is now straightforward to verify that setting any $x_j$ to $1$ for $j\not\in I$ satisfies $\phi$, hence the all-$0$ assignment is the unique falsifying assignment as desired.
\end{proof}
By combining the last two lemmas, we obtain the following.
\LemmaMultKOVHardness*
Furthermore, based on \autoref{lemma:h-hardness}, \autoref{item:clique}, in combination with \autoref{cor:r-mult-kov-kds-reduction} we can also conclude that under $k$-Clique hypothesis, our algorithm for $(k-1)$-Multiple $k$-Dominating Set is optimal in the \emph{dense} setting for every $k\geq 3$.
\begin{proposition}
    For no $\varepsilon>0$ and $k\geq 3$ is there an algorithm solving $(k-1)$-Multiple $k$-Dominating Set in time $O(n^{\omega\frac{k}{3}-\varepsilon})$, unless $k$-Clique Hypothesis fails.
\end{proposition}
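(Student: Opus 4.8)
The plan is to instantiate \autoref{lemma:h-hardness} at $r = k-1$ and feed the resulting lower bound through \autoref{cor:r-mult-kov-kds-reduction}, specialised to the dense regime $m = \Theta(n^2)$. Recall that $(k-1)$-Multiple $k$-OV is the property $\psi = (\exists x_1 \in X_1)\cdots(\exists x_k\in X_k)(\forall y\in Y)\,\phi$, where for $r = k-1$ the quantifier-free part reads
\[
\phi \;=\; \bigvee_{1\le i_1 < \dots < i_{k-1}\le k}\; \bigwedge_{j=1}^{k-1} E(x_{i_j}, y).
\]
Writing $z_i := E(x_i, y)$, this is precisely the threshold predicate that is true iff at most one of $z_1,\dots,z_k$ equals $0$. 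The first task is to locate the maximum $h$ for which $\phi$ is $h$-hard inside the window $[2,k)$ demanded by \autoref{item:clique}.

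By the $h$-hardness lemma already established for general $r$, $\phi$ is $(k-r+1) = 2$-hard, so this maximum is at least $2$. To see that it is strictly below $k$, it suffices to check that $\phi$ is \emph{not} $k$-hard: the only $\emptyset$-restriction of $\phi$ is $\phi$ itself, and $\phi$ has $2^k - k - 1 \ge 4$ falsifying assignments (all inputs with at least two zeros) for every $k\ge 3$, hence never exactly one. Consequently the maximal $h$ lies in $[2,k-1]\subset[2,k)$. (In fact an analogous count of the restrictions of the threshold predicate shows the maximum is exactly $h=2$, but we only need membership in $[2,k)$.)

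With $2 \le h < k$ in hand, \autoref{item:clique} of \autoref{lemma:h-hardness} applies in the balanced case $|X_1| = \dots = |X_k| = n$ and yields that solving balanced $(k-1)$-Multiple $k$-OV with all $k$ sets of size $n$ requires time $n^{\omega k/3 - o(1)}$ unless the $k$-Clique Hypothesis fails. The remaining step is to note that the dense regime of \autoref{cor:r-mult-kov-kds-reduction} produces exactly such balanced instances: taking $\gamma = 1$, so that $m = \Theta(n^2)$ and $m/n = \Theta(n)$, the reduction turns a dense $(k-1)$-Multiple $k$-Dominating Set instance on $\Theta(n)$ vertices into a $(k-1)$-Multiple $k$-OV instance with $|A_1| = \dots = |A_{k-1}| = \Theta(m/n) = \Theta(n)$ and $|A_k| = n$ -- all $k$ parts of size $\Theta(n)$, where the constant factors do not affect the $n^{\omega k/3 - o(1)}$ exponent. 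Thus an algorithm for $(k-1)$-Multiple $k$-Dominating Set running in $O(N^{\omega k/3 - \varepsilon})$ on $N$-vertex graphs would solve balanced $(k-1)$-Multiple $k$-OV in $O(n^{\omega k/3 - \varepsilon})$, contradicting the previous bound and refuting the $k$-Clique Hypothesis.

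I expect the only delicate point to be the $h$-hardness bookkeeping: the stated lemma gives $h \ge 2$ for free, so the real content is the upper bound $h < k$, which reduces to the elementary observation that the threshold predicate $\phi$ on its own has more than one falsifying assignment. Everything else is a mechanical composition of the already-proved \autoref{lemma:h-hardness} and \autoref{cor:r-mult-kov-kds-reduction}, with the single modelling check that $\gamma = 1$ is what balances the part sizes.
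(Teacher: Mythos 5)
Your proposal is correct and follows essentially the same route as the paper, which obtains this proposition by combining item~\ref{item:clique} of \autoref{lemma:h-hardness} with \autoref{cor:r-mult-kov-kds-reduction} specialised to $\gamma=1$, where the $(k-1)$-Multiple $k$-OV instance becomes balanced. Your explicit verification that the threshold predicate $\phi$ is not $k$-hard (so that the maximal $h$ indeed lies in $[2,k)$) is a detail the paper leaves implicit, but it is exactly the right check and your count of falsifying assignments is correct.
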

\ThMultDomCliqueReduction*
\begin{proof}
    Write $\gamma$ as $p/q$ for coprime positive integers $p,q$ (since $\gamma$ is rational, there is a unique way to do this).
    Let $k' : = (k-1)p + q$ and $d$ be a positive integer that will be fixed later.
    We reduce from $(dk')$-Independent Set Detection.
    Let $G = (X_1,\dots, X_{dk'}, E)$ be a $(dk')$-partite graph with $n$ vertices in each part.
    We now proceed to construct a graph $G'$ that will represent the equivalent instance of $(k-1)$-Multiple $k$-Dominating Set.
    First partition the set $[dk']$ into $(k-1)$ sets $P_1, \dots, P_{k-1}$ of size $d\cdot p$ and one set $P_k$ of size $d\cdot q$.
    Now construct the sets $V_1,\dots, V_k$ as follows.
    For $i\leq k-1$, each set $V_i$ contains all independent sets of size $d\cdot p$ of the form $x_{i_1}\in X_{i_1}, \dots, x_{i_{dp}}\in X_{i_{dp}}$ such that $P_i = \{i_1,\dots, i_{dp}\}$.
    Finally, $V_k$ consists of all independent sets of size $d\cdot q$ satisfying the similar constraints.
    Let $F$ be the set containing a vertex $e$ corresponding to each edge from $E(G)$.
    Let $R$ be a set consisting of $k$ copies of set $[k+1]$.
    We now set the vertex set of $G'$ as $V(G') = V_1\cup \dots \cup V_k \cup F \cup R$.
    To make a distinction between the vertices of $G$ and the vertices of $G'$, we will call the vertices of $G'$ \emph{nodes}.

    We can now construct the edges of $G'$.
    Partition the set $R$ into $k$ sets $R_1,\dots, R_k$ of size $(k+1)$ and add an edge between each pair of nodes $r_i\in R_i$ and $v_j\in V_j$ for $i\neq j$.
    Since nodes in $R$ will be incident to no other edges, they assure that if there exists a $(k-1)$-Multiple Dominating Set in $G'$, it contains exactly one node from each $V_i$ \footnote{Let $S$ be a $(k-1)$-Multiple Dominating Set. If $|V_i\cap S|\geq 2$, then nodes in $R_i$ are dominated by at most the remaining $k-2$ nodes from $S$, contradiction. If $V_i\cap S = \emptyset$, then each $R_j$ for $i\neq j$ contains at least one node that is dominated by at most $k-2$ nodes from $S$, using that each $V_r$ satisfies $|S\cap V_r| \leq 1$.}.
    Moreover, add an edge between any pair of nodes $v_i\in V_i$ and $v_j\in V_j$ for $i\neq j$.
    Finally, add an edge between a node $f\in F$ and $v_i\in V_i$ if and only if none of the vertices in the independent set corresponding to $v_i$ are among the two endpoints of the edge corresponding to $f$.
    Formally, let $f$ correspond to an edge $\{x,y\}\in E(G)$, then we can write $\{f,v_i\}\in E(G')$ if and only if $\{x,y\}\cap v_i = \emptyset$.
    \begin{claim}
        Let $N = n^{dq}$ and $M = N^{1+\gamma}$. 
        $G'$ has $\bigO(N)$ nodes and $\bigO(M)$ edges. 
    \end{claim}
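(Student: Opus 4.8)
The plan is to bound the number of nodes and edges of $G'$ group by group, using the following elementary facts about the exponents. Since $0<\gamma=p/q<1$ we have $p<q$, and as $p,q$ are coprime this forces $q\ge 2$; the parameter $d$, left free until now, will be taken large enough that $dp\ge 2$ (the value $d=3$ underlying $k^*=3(k-1)p+3q$ in the proof sketch suffices). Throughout I will use that $M=N^{1+\gamma}=n^{dq}\cdot(n^{dq})^{p/q}=n^{d(p+q)}$, while $N=n^{dq}$.

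First I would bound the node sets. Each $V_i$ with $i\le k-1$ is a set of transversals selecting one vertex from each of the $dp$ parts indexed by $P_i$, so $|V_i|\le n^{dp}\le n^{dq}=N$ by $p\le q$; similarly $|V_k|\le n^{dq}=N$. The set $F$ has one node per edge of $G$, and $G$ is $(dk')$-partite on $dk'\cdot n=\bigO(n)$ vertices, so $|F|=\bigO(n^2)\le N$ by $dq\ge 2$. Finally $|R|=k(k+1)=\bigO(1)$. As there are only $\bigO(1)$ parts, summing gives $|V(G')|=\bigO(N)$.

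Next I would split the edges into three groups. The complete bipartite links between distinct $V_i,V_j$ contribute $\sum_{i<j}|V_i|\,|V_j|$ edges; the largest product is $|V_i|\,|V_k|\le n^{dp}\cdot n^{dq}=n^{d(p+q)}=M$ for $i\le k-1$, every other product being at most $n^{2dp}\le M$ (again by $p\le q$), and there are $\bigO(1)$ pairs, so this group has $\bigO(M)$ edges. The edges incident to $R$ are at most $|R|\cdot\sum_i|V_i|=\bigO(N)=\bigO(M)$.

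The main obstacle is the third group, the edges between $F$ and the $V_i$. The only bound available here is the crude $\sum_i|F|\,|V_i|\le|F|\cdot\bigO(N)=\bigO(n^2\cdot n^{dq})=\bigO(n^{2+dq})$, since a single edge-node $f$ can be adjacent to almost all of $V_k$; this count is genuinely attained when $G$ is dense, so no structural improvement is possible. Comparing with $M=n^{dp+dq}$, it is $\bigO(M)$ exactly when $2\le dp$. This is precisely the condition that fixes $d$: choosing $d$ with $dp\ge 2$ (in particular $d=3$) makes the $F$--$V$ edges $\bigO(M)$, and combined with the previous two groups yields $|E(G')|=\bigO(M)$, completing the claim.
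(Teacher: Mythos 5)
Your proof is correct and follows essentially the same route as the paper's: bound each node group ($V_i$, $F$, $R$) by $N=n^{dq}$ using $p<q$, and split the edges into the $V_i$--$V_j$, $R$-incident, and $F$--$V_i$ groups, with the last one bounded by $n^{2+dq}\le n^{dp+dq}=M$ via the choice $dp\ge 2$ (the paper phrases this as ``setting $d\ge 2$''). The only cosmetic remark is that $q\ge 2$ already follows from $1\le p<q$ without invoking coprimality.
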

    \begin{proof}
        Since $\gamma< 1$, we also have $p< q$. 
        In total, $G'$ has at most 
        \[{(k-1)n^{dp} + n^{dq} + n^2 + k(k+1) = \bigO(n^{dq}) = \bigO(N)}\]
        many nodes, and 
        \[\bigO(n^{dp}\cdot n^{dq} + n^{dq + 2}) = \bigO(n^{dp + dq}) = \bigO(N^{\gamma + 1})\]
        many edges, where $n^{dq + 2} \leq \bigO(n^{dp+dq})$ follows by setting $d\geq 2$.
    \end{proof}
    \begin{claim}
        There exists a $(k-1)$-multiple $k$-dominating set $S$ in $G'$ if and only if there exists an independent set $x_1\in X_1, \dots, x_{dk'}\in X_{dk'}$ of size $dk'$ in $G$.
    \end{claim}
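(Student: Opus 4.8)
The plan is to prove the two directions of the equivalence separately, relying throughout on the structural fact already secured in the footnote: any $(k-1)$-multiple dominating set $S$ of size at most $k$ must satisfy $|S\cap V_i| = 1$ for every $i\in[k]$, so $S = \{v_1,\dots,v_k\}$ with $v_i\in V_i$ and $S\cap(F\cup R) = \emptyset$. Granting this, the argument reduces to understanding the adjacencies of the $F$-nodes, which encode exactly the edges of $G$.

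For the easier direction ($G\Rightarrow G'$), suppose $x_1,\dots,x_{dk'}$ is an independent set of $G$ with one vertex per part. I would take $v_i\in V_i$ to be the node whose block is $\{x_a : a\in P_i\}$; since the whole set is independent, each block is a valid independent set and hence a genuine node of $V_i$. Setting $S = \{v_1,\dots,v_k\}$, I would check the domination count by node class. Every $v_i'\in V_i\setminus\{v_i\}$ and every $r\in R_i$ is adjacent to all $k-1$ nodes $v_j$ with $j\neq i$, by the complete join between distinct $V_j$'s and the $R$-gadget edges. For $f\in F$ corresponding to an edge $\{x,y\}\in E(G)$, note that $f\sim v_l$ exactly when the block of $v_l$ avoids both $x$ and $y$. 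Since the chosen set is independent, $\{x,y\}$ cannot lie entirely inside it, so at most one of $x,y$ is chosen; hence $f$ is non-adjacent to at most one $v_l$ and retains $\ge k-1$ neighbours in $S$. Thus $S$ is a $(k-1)$-multiple $k$-dominating set.

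For the forward direction ($G'\Rightarrow G$), given $S = \{v_1,\dots,v_k\}$ I let $U$ be the union of the blocks of all $v_i$. Because $P_1,\dots,P_k$ partition $[dk']$, the set $U$ has exactly one vertex in each part $X_a$, so $|U| = dk'$; it remains only to verify independence. Vertices inside a single $v_i$ are independent by definition of $V_i$, so the one thing to rule out is a \emph{crossing} edge $\{x,y\}$ with $x\in v_i$ and $y\in v_j$ for $i\neq j$. Here is the crux: its $F$-node $f$ is non-adjacent to \emph{both} $v_i$ and $v_j$ (each block contains an endpoint), leaving $f$ with at most $k-2$ neighbours in $S$; as $f\notin S$, this contradicts the $(k-1)$-multiple requirement, so no crossing edge exists and $U$ is the desired independent set.

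I expect the only step demanding genuine care, rather than bookkeeping, to be the $1$-versus-$2$ accounting of lost dominators for an $F$-node: a ``private'' edge with one chosen endpoint costs a node exactly one dominator (harmless), whereas a crossing edge costs two (fatal). This gap is precisely what the multiplicity threshold $k-1$ is calibrated to detect, which is why the reduction targets $(k-1)$-multiple rather than ordinary domination. Everything else — the vertex/edge counts from the preceding claim and the $R$-gadget forcing from the footnote — can be invoked directly.
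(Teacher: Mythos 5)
Your proposal is correct and follows essentially the same route as the paper: both directions rest on the $R$-gadget forcing $|S\cap V_i|=1$, and the key step in each is the observation that an $F$-node loses one dominator per block containing an endpoint of its edge, so a crossing edge (two blocks hit) drops it below the $k-1$ threshold while a single chosen endpoint does not. Your phrasing of the converse as a direct construction of the independent set, rather than the paper's argument by contraposition, is only a cosmetic difference.
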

    \begin{proof}
        Assume first that there exists an independent set $x_1\in X_1, \dots, x_{dk'}\in X_{dk'}$ of size $dk'$ in $G$.
        By construction, we can partition this set into $k$ subsets, such that each subset corresponds to a unique node $v_i\in V_i$. 
        We claim that such nodes $v_1,\dots, v_k$ form a $(k-1)$-multiple $k$-dominating set. 
        Indeed, clearly by construction, all nodes in $V_i$ and $R_i$ are dominated by each $v_j$ for $j\neq i$.
        It remains to show that the nodes in $F$ are also dominated by at least $(k-1)$ $v_i$'s. 
        Assume this is not the case.
        That is, assume that there exists a node $f$ corresponding to an edge $\{x,y\}$ in $G$ that is dominated by at most $(k-2)$ $v_i$'s. 
        It is straightforward to observe that no vertex from $G$ can appear in two nodes $v_i\in V_i, v_j\in V_j$ for $i\neq j$.
        Hence, by construction of edges between $F$ and $V_i$'s, this is only possible if $x\in v_i$ and $y\in v_j$ for distinct $i,j$.
        But this would imply that the subgraph of $G$ induced by $v_i\cup v_j$ contains an edge, contradicting the assumption that $x_1,\dots, x_{dk'}$ form an independent set.

        Conversely, assume that no tuple  $x_1\in X_1, \dots, x_{dk'}\in X_{dk'}$ of size $dk'$ forms an independent set in $G$ and assume for contradiction that $G'$ admits a $(k-1)$-multiple $k$-dominating set $S$.
        As noted above, any $(k-1)$-multiple $k$-dominating set satisfies $|S\cap V_i| = 1$ for each $V_i$.
        We can thus label the nodes in $S$ as $v_1,\dots, v_k$ where each $v_i$ comes from $V_i$. 
        Consider now the subgraph of $G$ of size $dk'$ induced by $v_1\cup\dots \cup v_k$. By assumption that $G$ contains no independent set of size $dk'$, this subgraph contains an edge $\{x,y\}$. 
        Consider now the node $f\in F$ that corresponds to this edge. 
        There are at least $(k-1)$ nodes in $S$ that are adjacent to $f$ and hence at most one node $v_i$ is non-adjacent to $f$.
        But by construction of $G'$ this means that the independent set that corresponds to $v_i$ contains both vertices $x$ and $y$, which yields a contradiction, since $x$ and $y$ are adjacent.
        \end{proof}
        Finally, assume that there is an algorithm solving $(k-1)$-Multiple $k$-Dominating Set in time $\big(N^{\gamma (k-1)+1}\big)^{\omega/3-\varepsilon}$ for some $\varepsilon>0$.
        Then given a graph $G$ as above, using the construction above, we could decide if $G$ has an independent set $x_1\in X_1,\dots, x_{dk'}\in X_{dk'}$ of size $dk'$ in time 
        \begin{align*}
            \bigO(\big(N^{\gamma (k-1)+1}\big)^{\omega/3-\varepsilon}) &\leq \big(n^{dq\gamma (k-1)+dq}\big)^{\omega/3-\varepsilon'} & \text{(for e.g. $\varepsilon'=2\varepsilon$.)} \\
            & = \big(n^{dp(k-1)+dq}\big)^{\omega/3-\varepsilon'} & (\gamma=p/q)\\
            & = n^{dk'\omega/3-\varepsilon''} & (k' = p(k-1) + q)
        \end{align*}
        Thus refuting the $k$-Clique Hypothesis.
\end{proof}
\section{Dominating Patterns in Sparse Graphs}\label{appendix-patt-dom}
\begin{lemma}\label{lemma:listing-ds-h-dom}
    Given a graph $G$ with $n$ vertices and $m$ edges, let $T_k(m,n)$ be the time complexity of listing all dominating sets of $G$ of size $k$ (for a fixed constant $k$).
    Then $k$-Pattern Domination on $G$ can be solved in time $\bigO(T_k(m,n))$.
\end{lemma}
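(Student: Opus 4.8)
The plan is to prove the bound by a direct enumerate-and-test strategy. Recall that an instance of $k$-Pattern Domination consists of $G$ together with a pattern graph $H$ on $k$ vertices, and the task is to decide whether there is a set $S\subseteq V(G)$ that simultaneously dominates $G$ and satisfies $G[S]\cong H$. The key observation is that any such $S$ is, in particular, a dominating set of size $k$; conversely, among all size-$k$ dominating sets, the $H$-Dominating Sets are exactly those whose induced subgraph is isomorphic to $H$. Hence it suffices to generate every size-$k$ dominating set and test each one against the fixed pattern.

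Concretely, I would first invoke the assumed listing algorithm to produce all dominating sets of $G$ of size $k$, spending $T_k(m,n)$ time. As a one-time preprocessing step I would store the edge set of $G$ in a hash table so that adjacency of any two vertices can be queried in $O(1)$ expected time; this costs $O(n+m)$, which is dominated by the listing time since any such algorithm must at least read its input. Then, for each listed set $S$, I would reconstruct the induced subgraph $G[S]$ using its $\binom{k}{2}=\bigO(1)$ adjacency queries and test whether $G[S]\cong H$ by brute force over the $k!=\bigO(1)$ bijections between $V(S)$ and $V(H)$. If some test succeeds I output YES, otherwise NO.

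Correctness is immediate from the definition of an $H$-Dominating Set, since the condition checked — existence of a size-$k$ dominating set inducing $H$ — is precisely the defining condition of the problem. For the running time, the listing phase simultaneously bounds its own cost by $T_k(m,n)$ and bounds the number $N$ of enumerated dominating sets by $N\le T_k(m,n)$, because merely outputting each set already takes constant time. Each set is then processed in $\bigO(1)$ time for fixed $k$, so the testing phase costs $\bigO(N)=\bigO(T_k(m,n))$, and together with preprocessing this yields a total of $\bigO(T_k(m,n))$.

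I do not expect a substantive obstacle: the statement is essentially the bookkeeping fact that isomorphism testing against a constant-size pattern is free relative to the enumeration cost. The only two points warranting a moment's care are that the number of candidate sets cannot exceed the listing time (so the per-candidate work does not inflate the bound), and that adjacency queries in a sparse graph must be made $\bigO(1)$ via hashing rather than assumed for free, since for small $k$ the listing time $T_k(m,n)$ can be much smaller than $n^2$ and an adjacency-matrix preprocessing would be too costly.
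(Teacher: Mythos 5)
Your proof is correct and matches the paper's approach exactly: list all size-$k$ dominating sets using the assumed algorithm, then test each in $\bigO(1)$ time (for fixed $k$) for isomorphism with $H$. The paper's proof is a one-line version of this same observation; your additional care about adjacency queries and bounding the number of listed sets by $T_k(m,n)$ is sound but not strictly necessary detail.
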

\begin{proof}
    Follows by observing that given a $k$-vertex graph $H'$, it takes $f(k) = \bigO(1)$ time to check if it is isomorphic to $H$.
\end{proof}
\PropkPatternDomAlg*
\begin{proof}
    By \cite{fischer2024effect}, we can list all dominating sets of size $k$ in time $\MM(n^{\lceil\frac{k-1}{2}\rceil}, n, mn^{\lfloor \frac{k-1}{2} \rfloor})$. 
    The desired now follows from \autoref{lemma:listing-ds-h-dom}.
\end{proof}
\PropHDomLB*
\begin{proof}
    Let $A_1,\dots, A_{k-1}$ denote sets of $d$-dimensional (assume $d=\bigO(\log^2 n)$) binary vectors of size $\sqrt{m}$ and $A_k$ be a set of $d$-dimensional binary vectors of size $\frac{m}{n}$ for $n\leq m \leq n^2$.
    We construct a graph $G$ with $\Tilde{\bigO}(n)$ vertices and $\Tilde{\bigO}(m)$ edges that has $H$-dominating set if and only if we can find vectors $a_1\in A_1,\dots, a_k\in A_k$ such that $\prod_{i=1}^k a_i[j] = 0$ for every $j = 1,\dots, d$.

    Label the vertices of $H$ by $y_1,\dots, y_k$.
    Let $V(G) = X_1\cup \dots\cup X_k \cup D \cup R$, where $X_i$ is a copy of the set $A_i$, $D$ is a copy of the set $[d]$  and $R$ is a copy of the set $[n+k^2-1]$. 
    We will refer to the vertices from $X_i$ as \emph{vector vertices}, those from $D$ as \emph{dimension vertices} and those from $R$ as \emph{redundant vertices}.
    We now add the edges as follows.
    Partition the redundant vertices into $k$ parts $R_1,\dots, R_k$, such that $|R_1| = \dots = |R_{k-1}| = k+1$ and $|R_k| = n$.
    Add an edge between each pair $x_i\in X_i$ and $r_i\in R_i$. 
    It is straightforward to verify that these edges imply that if $G$ contains a dominating set $S$ of size $\leq k$, each set $X_i$ satisfies $|X_i\cap S| = 1$. 
    For any pair $x_i, x'_i\in X_i$, add an edge between them.
    Moreover, for any pair $x_i\in X_i$, $x_j\in X_j$ for $i\neq j$, add an edge between them if and only if $(y_i, y_j)\in E(H)$.
    Notice that this ensures that any selection of vector vertices $x_1\in X_1, \dots, x_k\in X_k$ induces $H$. 
    Finally, we add edges between the vector and the dimension vertices naturally, i.e. for each $x_i\in X_i$ and $t\in [d]$, add an edge between $x_i$ and $t$ if and only if the vector $a_i$ corresponding to $x_i$ satisfies $a_i[t] = 0$.
    We first prove that the graph $G$ has $\bigO(n)$ vertices and $\bigO(m)$ many edges. 
    Indeed, each of the sets $X_1\cup \dots\cup X_k \cup D \cup R$ has $\bigO(n)$ vertices and since $k\in \bigO(1)$, the bound follows.
    Moreover, since each edge in $G$ has an endpoint in some $X_i$, the number of edges is at most the sum of degrees of vertices in the vector sets. 
    It is easy to verify that this number is at most
    \begin{align*}
    (k-1)\sqrt{m}\big((k-1)\sqrt{m} + \frac{m}{n} + (k+1) + d\big) + \frac{m}{n}\big((k-1)\sqrt{m} + n + d\big) = \bigO(m).
    \end{align*}

    We finally prove the correctness.
    Assume first that there exist vectors $a_1\in A_1,\dots, a_k\in A_k$ such that $\prod_{i=1}^k a_i[j] = 0$ for every $j = 1,\dots, d$.
    Consider the corresponding vertices $x_1\in X_1,\dots, x_k\in X_k$.
    As argued above, vertices $x_1,\dots, x_k$ induce a subgraph of $G$ that is isomorphic to $H$, so it remains to verify that they dominate the whole graph.
    First note that each of the redundant vertices in $R_i$ is dominated by $X_i$. 
    Furthermore, each vertex in $X_i$ is dominated by $x_i$. 
    Consider any dimension vertex $t\in D$.
    Since $\prod_{i=1}^k a_i[t] = 0$, some vector $a_i$ satisfies $a_i[t] = 0$, and by construction the corresponding vector $x_i$ is adjacent to $t$.
    Hence, we may conclude that the vertices $x_1,\dots, x_k$ form an $H$-Dominating Set.
    Conversely, assume that there $G$ has an $H$-Dominating Set. As argued above, this set can only consist of vertices $x_1\in X_1, \dots, x_k\in X_k$. 
    We claim that the corresponding vectors $a_1\in A_1, \dots, a_k\in A_k$ satisfy $\prod_{i=1}^k a_i[t] = 0$ for each $t\in [d]$. 
    Indeed, if we fix any $t\in [d]$, the dimension vertex that corresponds to $t$ is adjacent to at least one $x_i$ and hence the corresponding vector satisfies $a_i[t] = 0$.
\end{proof}
\subsection{Dominating $k$-Clique} \label{app-section:clique}
\ObsNumberOfkCliquesInGraph*
\begin{proof}
    We prove this by strong induction on $k$. 
    For $k=2$, the bound is trivial, and moreover, for $k=3$ it is well-known (see e.g. \cite{itai1978trianglelisting}).
    Assume now that for some $k\geq 4$ this bound holds for the number of cliques of size $r$, for all values $2\leq r<k$.
    We show that it also holds for $k$-cliques.
    In particular, any $k$-clique $C$ can be obtained by concatenating a $\lceil \tfrac{k}{2}\rceil$-clique $C_1$ with a $\lfloor \tfrac{k}{2}\rfloor$-clique $C_2$.
    Note that since $k\geq 4$, we have $2\leq \lfloor \tfrac{k}{2}\rfloor \leq \lceil \tfrac{k}{2}\rceil \leq k-1$.
    Hence we may apply the induction hypothesis on $C_1$ and $C_2$, and in particular, there are at most $\bigO(m^{\lceil \tfrac{k}{2}\rceil/2})$ choices for $C_1$ and at most $\bigO(m^{\lfloor \tfrac{k}{2}\rfloor/2})$ choices for $C_2$. 
    Thus, there are at most $\bigO(m^{\lceil \tfrac{k}{2}\rceil/2}\cdot m^{\lfloor \tfrac{k}{2}\rfloor/2}) = \bigO(m^{\lceil \tfrac{k}{2}\rceil/2+ \lfloor \tfrac{k}{2}\rfloor/2}) = \bigO(m^{k/2})$ many choices for $C$.
\end{proof}
{\renewcommand\footnote[1]{}\cliqueDomAlg*}
\begin{proof}
    We first leverage the fact that each dominating set contains a heavy vertex, and there are at most $\bigO(m/n)$ such vertices.
    Moreover, the remaining vertices form a $(k-1)$-clique and by \autoref{obs:k-clique-number}, there are at most $m^{\frac{k-1}{2}}$ choices for the remaining $k-1$ solution vertices and moreover, we can enumerate them in $\bigO(m^{\frac{k-1}{2}})$ time (see e.g. \cite{itai1978trianglelisting}).
    This suggests the following algorithm. 
    Let $R_1$ and $R_2$ denote the sets of all cliques of size $\lfloor \frac{k-1}{2}\rfloor$ and $\lceil \frac{k-1}{2}\rceil$ respectively.
    Note that we can enumerate those cliques in $m^{\frac{1}{2}\lceil \frac{k-1}{2}\rceil}$, as mentioned above. 
    Furthermore, let $\mathcal{H}$ denote the set of all heavy vertices in $G$.
    Construct matrices $A$ and $B$ naturally as follows.
    Index the rows of $A$ with the elements of the set $R_1\times \mathcal{H}$ and the columns by $V(G)$ and set the entry $A[S,v]$ to $1$ if and only if $v\in N[S]$, i.e. $S$ dominates $v$, otherwise $0$.
    Similarly index the columns of $B$ by the elements of $R_2$ (cliques of size $\lceil \frac{k-1}{2}\rceil$) and rows by $V(G)$ and set the entry $B[v, T]$ to $1$ if and only if $T$ dominates $v$ and $0$ otherwise.
    Consider the matrix $C:=\overline{A}\cdot\overline{B}$.
    Iterate over the entries of $C$ and if $C[S,T]=0$, check (in time $\bigO(1)$) if $S\cup T$ forms a clique of size $k$, if it does return YES. If no entry satisfies both of these conditions, return NO.
    Let us first prove the running time of this algorithm.
    As noted above, there are at most $m^{\frac{1}{2}\lfloor \frac{k-1}{2}\rfloor}$ elements in $R_1$ and at most $m^{\frac{1}{2}\lceil \frac{k-1}{2}\rceil}$ elements in $R_2$. 
    Further, there are at most $\bigO(\frac{m}{n})$ elements in $\mathcal H$.
    The running time now follows.
    We conclude the proof by arguing the correctness of the algorithm.
    It is straightforward to verify that the entry $C[S,T] = 0$ if and only if the set $S\cup T$ dominates $G$.
    Hence, if $S\cup T$ additionally forms a clique, we are done. 
    It remains to show that we do not miss any solutions.
    Let $x_1,\dots, x_k$ be a $k$-clique that dominates $G$ and assume $\deg(x_1)\leq  \dots \leq \deg(x_k)$. 
    Clearly, since $x_1,\dots, x_k$ is a dominating set, $x_k$ is a heavy vertex and hence $x_k\in \mathcal{H}$.
    Moreover, since $x_1,\dots, x_k$ form a $k$-clique, the set $\{x_1,\dots, x_{\lfloor \frac{k-1}{2}\rfloor}\}$ is contained in $R_1$, and the set $\{x_{\lfloor \frac{k-1}{2}\rfloor + 1}, \dots, x_{k-1}\}$ is contained in $R_2$.
    It is now straightforward to see that our algorithm will detect this dominating clique, thus concluding our proof. 
\end{proof}
\subsection{Dominating $k$-Independent Set} \label{app-section:indset}
\LemmaIndSetDomTC*
\begin{proof}
    For any set $S$, let $E_{S}$ denote the subset of edges of $G$ with at least one endpoint in $S$. 
    Then by \autoref{lemma:domindset-recursion} combined with the observation that any dominating set contains a heavy node, it holds that
    \[T_k(n,m) \leq \sum_{v\in \mathcal{H}}\big(m + T_{k-1}(n - |N[v]|,m-|E_{N[v]}|)\big).\]
    Where the summand $m$ is the upper bound on the running time needed to create a copy of $G$ without $N[v]$.
    We first notice that $\sum_{v\in \mathcal{H}}m\leq \frac{m^2}{n}$, and since $k\geq 3$, this is at most $T_k(n,m)$, unless $k$-OV hypothesis fails. 
    A simple analysis of this running time shows that it is monotonically increasing in $m$, thus 
    \[\sum_{v\in \mathcal{H}}T_{k-1}(n - |N[v]|,m-|E_{N[v]}|) \leq \sum_{v\in \mathcal{H}}T_{k-1}(n - |N[v]|,m).\]
\end{proof}
\subsection{Dominating $k$-Induced Matching} \label{app-section:matching}
\ThInducedMatchingAlgorithm*
\begin{proof}
    We exploit the observation that every induced matching of size $k$ contains exactly $\frac{k}{2}$ edges that uniquely determine it.
    
    We let $\mathcal{S}$ be the set containing all subsets of $E(G)$ of size $\lceil\frac{k}{4}\rceil$ and $\mathcal{T}$ be the set containing all subsets of $E(G)$ of size $\lfloor \frac{k}{4}\rfloor$.
    Let $A$ be an $m^{\lceil \frac{k}{4}\rceil}\times n$ matrix, whose rows correspond to the elements of $\mathcal{S}$ and columns to vertices of $G$ and similarly, let $B$ be an $n\times m^{\lfloor \frac{k}{4}\rfloor}$ matrix whose rows correspond to vertices of $G$ and columns to the elements of $\mathcal{T}$.
    Set the entries of $A$ and $B$ naturally, in particular let $A[S,v] = 1$ if and only if there exists and edge $(u,w)\in S$ such that the set of its endpoints $\{u,w\}$ dominates $v$.
    We set the entries of $B$ similarly.
    \begin{claim}
        Let $C:=\overline A\cdot \overline B$.
        Then $C[S,T] = 0$ if and only if the endpoints of $S\cup T$ dominate $G$
    \end{claim}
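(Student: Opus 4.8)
The plan is to expand the matrix product entry-wise and give each summand a domination interpretation, exactly as in the Dominating $k$-Clique algorithm above. First I would write out, using that $\overline A$ and $\overline B$ are the Boolean complements of $A$ and $B$,
\[
C[S,T] \;=\; (\overline A\cdot \overline B)[S,T] \;=\; \sum_{v\in V(G)} \overline A[S,v]\cdot \overline B[v,T].
\]
By the definition of $A$, we have $A[S,v]=1$ exactly when some edge of $S$ has an endpoint whose closed neighbourhood contains $v$; writing $V(S)$ for the set of endpoints of the edges in $S$, this says $v\in N[V(S)]$, i.e. the endpoints of $S$ dominate $v$. Consequently $\overline A[S,v]=1$ precisely when the endpoints of $S$ fail to dominate $v$, and symmetrically $\overline B[v,T]=1$ precisely when the endpoints of $T$ fail to dominate $v$.

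Next I would use that both factors are $0/1$ valued, so the product $\overline A[S,v]\cdot\overline B[v,T]$ equals $1$ if and only if $v$ is dominated neither by the endpoints of $S$ nor by the endpoints of $T$, that is, $v\notin N[V(S)\cup V(T)]$. Summing over all $v\in V(G)$, the entry $C[S,T]$ therefore counts exactly the number of vertices of $G$ left undominated by the endpoints of $S\cup T$. The claim is then immediate: $C[S,T]=0$ if and only if there is no such undominated vertex, i.e. if and only if the endpoints of $S\cup T$ dominate $G$.

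I do not expect a genuine obstacle here, since this is the same book-keeping as in the clique case; the only point needing care is to keep the domination statement cleanly separated from the structural requirements. This claim certifies \emph{only} that the endpoints of $S\cup T$ dominate $G$, and says nothing about $S,T$ being disjoint or about their $\lceil k/4\rceil+\lfloor k/4\rfloor=k/2$ edges inducing a matching on $k$ vertices; those conditions are checked separately in $\bigO(1)$ time for each vanishing entry. I would flag this split explicitly so that the reader does not expect the equivalence to carry more than the domination half of the overall correctness argument.
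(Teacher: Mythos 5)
Your proposal is correct and follows essentially the same argument as the paper: both observe that $\overline A[S,v]\cdot\overline B[v,T]=1$ exactly when $v$ is dominated by neither the endpoints of $S$ nor those of $T$, so the entry $C[S,T]$ counts undominated vertices and vanishes iff the endpoints of $S\cup T$ dominate $G$. Your explicit remark that the claim covers only the domination half (with disjointness and the induced-matching structure checked separately) matches how the paper uses the claim in the surrounding algorithm.
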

    \begin{proof}
        It is straightforward to observe that $\overline A[S,v] = \overline B[v,T] = 1$ if and only if $S\cup T$ does not dominate $v$. 
        Moreover, since both $A$ and $B$ are $0/1$-matrices, $C[S,T] = 0$ if and only if for each $v$, either $\overline A[S,v]$ or $\overline B[v,T]$ are $0$.
    \end{proof}
    This suggests the following algorithm.
    Construct matrices $A$ and $B$ and compute the product $C=\overline{A}\overline{B}$.
    For each pair $S\in \mathcal{S},T\in \mathcal{T}$, we check two things, first that the entry $C[S,T]=0$ and finally that $S\cup T$ induce a perfect matching on $k$ vertices.
    Note that we can check both of those conditions in constant time.
    If both conditions are satisfied for any pair $S,T$, we return YES, otherwise NO.

    It is now straightforward to verify the correctness and the running time of the algorithm.
\end{proof}
\ThInducedMatchingLB*
\begin{proof}
    To prove this lower bound, we reduce from $k$-Orthogonal Vectors.
    Let $A_1,\dots, A_k\subseteq \{0,1\}^d$ be sets of $d$-dimensional binary vectors.
    Moreover, let $|A_i| = \frac{m}{n}$ if $i$ is even and $n$ otherwise.
    We construct a graph $G$ with $\bigO(n)$ vertices and $\bigO(m)$ edges that has a Induced Matching Dominating Set of size $k$ if and only if $A_1,\dots, A_k$ is a YES instance of $k$-Orthogonal Vectors.

    Let $G = (V,E)$, such that $V = X_1\cup\dots \cup X_k\cup D$, where each vertex $x_i\in X_i$ correspond to a unique vector $a_i\in A_i$ and $D = [d]$  corresponds to the dimensions of the vectors.
    Add an edge between each $u\in X_i$ and $v\in X_{i+1}$ for each odd $i$.
    Finally, add an edge between $x_i\in X_i$ and $t\in D$ if and only if the vector $a_i\in A_i$ corresponding to $x_i$ satisfies $a_i[t] = 0$.

    Without loss of generality, we can assume that for each $i\in [k]$ there are $k+1$ coordinates $t_1,\dots, t_{k+1}$ such that $a[t_j] = 0$ for every $a\in A_i$ and $a[t_j] = 1$ for every $a\not \in A_i$.
    We call these coordinates \emph{special coordinates of $A_i$}.
    Notice that this property implies the following.
    \begin{claim}
        If there are vertices $x_1,\dots, x_k$ that dominate $G$, then (w.l.o.g.) $x_1\in X_1,\dots, x_k\in X_k$. 
    \end{claim}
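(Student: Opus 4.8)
The plan is to prove the claim directly from the special-coordinate gadget fixed just before it, and then to let it drive the rest of the reduction. First I would make the gadget explicit: for each $i\in[k]$ I augment the $\log^2 n$ original coordinates by $k+1$ fresh \emph{special coordinates} of $A_i$, set to $0$ on every vector of $A_i$ and to $1$ on every vector of the other sets $A_\ell$ ($\ell\neq i$). This keeps $d=\bigO(\log^2 n)$ and does not change the $k$-OV answer, since the vector chosen from $A_i$ is $0$ on these coordinates, so the product over the $k$ chosen vectors is automatically $0$ there. In $G$, the dimension vertex of such a special coordinate is, by construction, adjacent to every vertex of $X_i$ and to no other vector vertex; and since $D$ is independent (there are no dimension--dimension edges), each special dimension vertex can be dominated only by itself or by a vertex of $X_i$.

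With the gadget in place, the claim follows from a tight budget argument. Let $S$ be any $k$ vertices that dominate $G$ and fix an index $i$. Each of the $k+1$ special dimension vertices of $A_i$ lies in $N[S]$, hence each is either in $S$ or has a neighbour in $X_i\cap S$. If $S\cap X_i=\emptyset$, then all $k+1$ of them must themselves lie in $S$, forcing $|S|\ge k+1$, a contradiction. Thus $S\cap X_i\neq\emptyset$ for every $i$; since the $X_i$ are pairwise disjoint and $|S|=k$, the set $S$ meets each $X_i$ in exactly one vertex, i.e. (after relabelling) $S=\{x_1,\dots,x_k\}$ with $x_i\in X_i$, which is the claim.

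I would then close the theorem as follows. Any one-per-part selection \emph{automatically} induces a $k$-matching, because the only edges among $x_1,\dots,x_k$ are $x_1x_2,x_3x_4,\dots,x_{k-1}x_k$ (in $G$ the only inter-part edges among the $X$'s join $X_i$ to $X_{i+1}$ for odd $i$); so the matching constraint is free and only domination matters. Domination of the vector vertices is also free ($x_{i+1}$ dominates all of $X_i$ for odd $i$, and symmetrically $x_i$ dominates $X_{i+1}$), while domination of the original dimension vertices is exactly equivalent to $\prod_{i} a_i[t]=0$ on every original coordinate $t$, i.e. to orthogonality of the chosen vectors. Combining this equivalence with the size bounds $\bigO(n)$ vertices and $\tOh(m)$ edges and the identity $\prod_i|A_i| = n^{k/2}(m/n)^{k/2}=m^{k/2}$, an $\bigO(m^{k/2-\varepsilon})$ algorithm would decide this unbalanced $k$-OV instance in time $\bigO\big((\prod_i|A_i|)^{1-\varepsilon'}\big)$ for some $\varepsilon'>0$, refuting the $k$-OV Hypothesis.

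The step I expect to be the main obstacle is ensuring the gadget truly isolates each part: the special vertices of $A_i$ must have no neighbour outside $X_i$, and $D$ must be independent, so that the $k+1>k$ budget argument has no escape. I would also double-check that introducing the special coordinates creates no spurious orthogonal tuple, and that the $\bigO(nd)=\tOh(m)$ edges contributed by the dimension vertices are absorbed by the subpolynomial slack in the final lower bound.
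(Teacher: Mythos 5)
Your proof is correct and follows essentially the same route as the paper: both arguments observe that each of the $k+1$ special dimension vertices of $A_i$ has its closed neighbourhood contained in $\{t\}\cup X_i$, so if $S\cap X_i=\emptyset$ all $k+1$ of them would have to lie in the size-$k$ set $S$ (the paper phrases this as at least one being left undominated), a pigeonhole contradiction. The only difference is that you make the ``w.l.o.g.'' gadget construction and its harmlessness for the $k$-OV instance explicit, which the paper leaves implicit.
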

    \begin{proof}
        Assume for contradiction that there is an index $i\in [k]$ such that no $x_j$ is contained in $X_i$. 
        Consider any vertex $t$ corresponding to a special coordinate of $A_i$. 
        By construction, it is not adjacent to any $x_j$, hence the only way it can be dominated is if $t = x_j$ for some $j\in [k]$.
        However, since there are $k+1$ special coordinates for $X_i$, this leaves at least one of the corresponding vertices undominated, contradicting that the vertices $x_1,\dots, x_k$ dominate $G$.
    \end{proof}
    Assume now that vertices $x_1\in X_1,\dots, x_k\in X_k$ dominate $G$.
    We claim that the corresponding vectors $a_1\in A_1, \dots a_k\in A_k$ satisfy the orthogonality condition.
    Indeed, for each vertex $t\in D$, there is a vertex $x_i$ that is adjacent to $t$, hence the corresponding vector $a_i$ satisfies $a_i[t] = 0$.
    
    Conversely, if $a_1\in A_1,\dots, a_k\in A_k$ are orthogonal, we claim that the corresponding vertices $x_1\in X_1,\dots, x_k\in X_k$ form an Induced Matching Dominating Set.
    First observe that in $G[\{x_1,\dots, x_k\}]$, each vertex has degree $1$ (for odd $i$, the vertex $x_i$ is only adjacent to $x_{i+1}$ and for even $i$, to $x_{i-1}$), so the vertices induce a perfect matching.
    It remains to verify that they dominate $G$.
    To this end consider first an arbitrary vertex $t\in D$. 
    There exists at least one $i\in [k]$, such that $a_i[t]= 0$, and thus the corresponding vertex $x_i$ is adjacent to $t$.
    On the other hand, consider some vertex $y\in X_j$ for some $j\in [k]$.
    If $j$ is odd, then $y$ is adjacent to $x_{j+1}$, otherwise to $x_{j-1}$.
    We have thus exhausted all vertices, and can conclude that $x_1,\dots, x_k$ is indeed an Induced Matching Dominating Set.

    Now observe that $G$ has $\bigO(n)$ vertices and $\bigO(m+dn)$ edges (assuming $d=\log^2(n)$, this is near linear) and can be constructed in $\bigO(m+nd)$ time.
    Assume that there is an algorithm solving $k$-Induced Matching Dominating Set in time $O(m^{\frac{k}{2}-\varepsilon})$.
    Then we can decide if $A_1,\dots, A_k$ as above is a YES-instance by constructing $G$ in time $\bigO(m+dn)$ and solving $k$-Induced Matching Dominating Set on $G$ in time $O(m^{\frac{k}{2}-\varepsilon})$, yielding a solution in $\bigO(m^{\frac{k}{2}-\varepsilon}) = \bigO\big((|A_1|\cdot\dots\cdot |A_k|)^{1-\varepsilon}\big)$, thus refuting the $k$-OV hypothesis.
\end{proof}
\end{document}